\newcommand\In[2]{#1(#2)}
\newcommand\Out[2]{\overline{#1}\langle{#2}\rangle}
\newcommand\tick{\mathtt{tick}}
\newcommand\PAR{\mathbin{\,|\,}}
\newcommand\pzero{0}
\newcommand\var{\textit{v}}
\newcommand\zero{\mathtt{0}}
\newcommand\suc{\mathtt{s}}
\newcommand{\minus}{\scalebox{0.75}[1.0]{$-$}}
\newcommand\seq[1]{\widetilde{#1}}
\newcommand\pifn[4]{\mathtt{match}~#1~\{\mathtt{case}~ \zero \mapsto #2 ;\mathtt{case}~ \suc(#3) \mapsto #4 \}}
\newcommand\congr{\equiv}
\newcommand\red{\rightarrow}
\newcommand\psub[3]{#1 [#2 := #3]}
\newcommand\Nat{\mathsf{Nat}}
\newcommand\subtype{\sqsubseteq}
\newcommand{\sem}[2]{\llbracket #2 \rrbracket_{#1}}
\newcommand\IV{\mathcal{V}}
\newcommand\Isub[3]{#1 \lbrace #3 / #2 \rbrace }
\newcommand\rel{\bowtie}
\newcommand\INN[1]{#1_{\NN}}
\newcommand\pred{\Rightarrow}
\newcommand\lcomp{\mathcal{C}_{\ell}}
\newcommand\IU[2]{\texttt{In}_{#2}^{#1}}
\newcommand\OU[2]{\texttt{Out}_{#2}^{#1}}
\newcommand\AU[3]{#1_{#3}^{#2}}
\newcommand\SubU{\sqsubseteq}
\newcommand\ZU{0}
\newcommand\ured{\longrightarrow}
\newcommand\Error{\mathtt{err}}
\newcommand\IPlus{\oplus}
\newcommand\Ilub{\sqcup}
\newcommand\Up[1]{{\uparrow^{#1}}}
\newcommand\ChT[2]{\texttt{ch}(#1)/#2}
\newcommand\ServT[4]{\ifthenelse{\equal{#1}{}}{\texttt{srv}^{#2}(#3)/#4}{\forall #1. \texttt{srv}^{#2}(#3) \ifthenelse{\equal{#4}{}}{}{/#4}}}
\newcommand\LeA{\mathsf{Left}}
\newcommand\RiA{\mathsf{Right}}
\newcommand\COL{\mathbin{:}}
\newcommand\C{\triangleleft}
\newcommand\p{\vdash}
\newcommand\q{\vDash}
\newcommand\nq{\nvDash}
\newcommand\fifi{\varphi;\Phi}
\newcommand\void{\cdot ; \cdot}
\newcommand\NN{\mathbb{N}}
\newcommand\NNi{\mathbb{N}_{\infty}}
\newcommand\set[1]{\{#1\}}
\newenvironment{framed}[0]{\begin{boxedminipage}{\linewidth}}{\end{boxedminipage}}
\newcommand\midd{\; \mbox{\Large{$\mid$}}\;}
\newcommand\vvskip{\vspace{3mm}}
\title{Sized Types with Usages for Parallel Complexity of Pi-Calculus Processes}
\author{Patrick Baillot}{Univ Lyon, CNRS, ENS de Lyon, Universite Claude-Bernard Lyon 1, LIP, F-69342, France}{}{}{}
\author{Alexis Ghyselen}{Univ Lyon, CNRS, ENS de Lyon, Universite Claude-Bernard Lyon 1, LIP, F-69342, France}{}{}{}
\author{Naoki Kobayashi}{The University of Tokyo, Japan}{}{}{}
\authorrunning{P. Baillot, A.Ghyselen and N. Kobayashi} 
\keywords{Type Systems, Pi-calculus, Process Calculi, Complexity Analysis, Usages, Sized Types}
\begin{document}

\maketitle

\begin{abstract}
We address the problem of analysing the complexity of concurrent programs written
in Pi-calculus. We are interested in parallel complexity, or span,
understood as the execution time in a model with maximal
parallelism. A type system for parallel complexity has been recently
proposed by
the first two authors  
but it is too imprecise for non-linear channels and cannot analyse some concurrent processes.
Aiming for a more precise analysis, we design a type system which builds on the concepts of sized types and usages. The sized types allow us to parametrize the complexity by the size of inputs, and the usages allow us to achieve a kind of rely-guarantee reasoning on the timing each process communicates with its environment. We prove that our new type system soundly estimates the parallel complexity, and show through examples that it is often more precise than the previous type system of the first two authors.  
\end{abstract}

\section{Introduction} 
Static analysis of complexity is a classic topic of program analysis,
and various approaches to the complexity analysis, including type-based
ones~\cite{DBLP:conf/cav/0002AH12,HofmannJost03,HoffmannHofmann10,Hoffman12TPLS,DalLagoGaboardiLinearDependentTypes,AvanziniDalLagoAutomatingSizedTypeInference},
have been studied so far.
The complexity analysis of concurrent programs has been, however, much less studied.

In this paper, we are interested in
 analysing the parallel complexity (also called
\emph{span}) of the \(\pi\)-calculus, i.e.,
the maximal parallelized execution time under the assumption that an unlimited number
of processors are available~\cite{BaillotGhyselen21}.
The parallel complexity should be parametrized by the size of inputs.
Following the success of previous studies on the complexity analysis of
sequential programs~\cite{DBLP:conf/cav/0002AH12,HofmannJost03,HoffmannHofmann10,Hoffman12TPLS,DalLagoGaboardiLinearDependentTypes,AvanziniDalLagoAutomatingSizedTypeInference}
and those on the analysis of other properties on
concurrent programs (e.g. \cite{DengSangiorgiTerminationTypability, DemangeonHKS07}), deadlock-freedom and livelock-freedom (e.g. \cite{Kobayashi97,Sumii98HLCL,KobayashietAlDeadlockFreeCalculus}; see
\cite{Kobayashi2003TypeSystemConcurrent} for a survey),
we take a type-based approach.

The first two authors~\cite{BaillotGhyselen21} have actually
proposed a type-based analysis of the parallel complexity already.
However, as stressed by the authors, even though
their type system for span is useful
for analysing the complexity of some parallel programs, it 
fails to type-check some examples of common concurrent programs, like semaphores. It is based on a combination of sized types and input/output types, in order to account suitably for the behaviour of channels w.r.t. reception and emission. 

In the present paper we design a type system for span which can deal with a much wider range of concurrent computation patterns
including the semaphores. For that, we take inspiration from the notion of  \textit{type usage}, which has been introduced and explored in \cite{Sumii98HLCL,KobayashietAlDeadlockFreeCalculus}, initially to guarantee absence of deadlock during execution. Type usages
are 
a generalization of input/output types, and 
describe how each channel is used for input and output.
Unlike the original notion of usages~\cite{Sumii98HLCL,KobayashietAlDeadlockFreeCalculus},
our usages are annotated with time intervals,
which are used for a kind of rely-guarantee reasoning, like ``assuming that
a message from the environment arrives during the time interval \([I_1,J_1]\), the process
sends back a message during the interval \([I_2,J_2]\)''; such reasoning is crucial for
analyzing the parallel complexity precisely and in a compositional manner.
We formalize the type system with usages and prove that it soundly estimates the parallel complexity.

\textbf{Contributions.}
The contributions of this paper are as follows. (i) The formalization of the new type system
for parallel complexity built on the new notion of usages: our usages are quite different
from the original ones, and properly defining them (including operations on
time intervals, usage reductions, and the notion of reliable usages) is non-trivial.
(ii) The proofs of type preservation and
complexity soundness: thanks to the careful definition of
new usages, the proofs are actually quite natural, despite the expressiveness of
the type system. (iii) Examples to demonstrate the precision and
expressive power of our new type system.

%

\textbf{Paper outline.} We introduce in Sect.~\ref{s:picalculus} the $\pi$-calculus and the notion of parallel complexity we consider. Sect.~\ref{s:types} is devoted to the definition of types with usages. Then in Sect.~\ref{s:soundness} we prove the main result of this paper, the complexity soundness, and provide some examples. Finally, related work is discussed in Sect.~\ref{relatedwork}.

\section{The Pi-calculus with Semantics for Span} \label{s:picalculus}
In this section, we review
the definitions of the \(\pi\)-calculus and
its parallel complexity~\cite{BaillotGhyselen21}.

\subsection{Syntax and Standard Semantics for Pi-Calculus}
We consider a synchronous $\pi$-calculus, with a constructor $\tick$ that generates the time complexity.
The sets of \emph{variables}, \emph{expressions} and \emph{processes} are defined by:  
\begin{align*}
  \var \mbox{ (variables) }& := x,y,z \midd a,b,c \qquad
  e \mbox { (expressions) }:= \var \midd \zero \midd \suc(e) \\
	P \mbox{ (processes) }&:= \pzero \midd (P \PAR Q) \midd \In{a}{\seq{\var}}.P \midd !\In{a}{\seq{\var}}.P \midd \Out{a}{\seq{e}}.P  \midd (\nu a) P \\ &
        \phantom{:=} \midd \pifn{e}{P}{x}{Q} \midd \tick.P
\end{align*}
We use \(x,y,z\) as meta-variables for integer variables, and \(a,b,c\) as those for channel names.
The notation $\seq{\var}$ stands for a sequence of variables $\var_1,\var_2,\dots,\var_k$. Similarly, 
$\seq{e}$ denotes a sequence of expressions. We work up to $\alpha$-renaming, and write $\psub{P}{\seq{\var}}{\seq{e}}$ to denote the substitution of \(\seq{e}\) for the free variables \(\seq{\var}\) in \(P\).
For simplicity, we only consider integers as base types below, 
but the results can be generalized to other algebraic data-types such as lists or booleans.  

Intuitively, $P \PAR Q$ stands for the parallel composition of $P$
and $Q$. The process $\In{a}{\seq{\var}}.P$ represents an input: it stands for the
reception on the channel $a$ of a tuple of values identified by the variables $\seq {\var}$ in the continuation P. The process $!\In{a}{\seq{\var}}.P$ is a replicated version of $\In{a}{\seq{\var}}.P$: it behaves like an infinite number of $\In{a}{\seq{\var}}.P$ in parallel. The process $\Out{a}{\seq{e}}.P$ represents an output: it sends a sequence of expressions $\seq{e}$ on
the channel $a$, and continues as $P$.
We often omit \(\pzero\) and just write $\Out{a}{}$ for $\Out{a}{}.\pzero$.
A process $(\nu a) P$ dynamically creates a new channel name $a$ and then proceeds as $P$. We also have standard pattern matching on data types, and finally, the $\tick$ constructor incurs a cost of one in complexity but has no semantic relevance. 
%
This constructor is the only source of time complexity in a program. As the similar tick constructor in \cite{DasHoffmannPfenningTemporalSessionTypes}, it can represent different cost models and is more general than counting the number of reduction steps. For example, by adding $\tick$ after each input, we can count the number of communications in a process. By adding it after each replicated input on a channel $a$, we can count the number of calls to $a$. We can also count the number of reduction steps, by adding $\tick$ after each input and pattern matching. 

As usual, the structural congruence $\congr$ is 
defined as the least congruence containing:
$$
P \PAR \pzero \congr P \qquad P \PAR Q \congr Q \PAR P \qquad P \PAR (Q \PAR R) \congr (P \PAR Q) \PAR R $$ 
$$ (\nu a) (\nu b) P \congr (\nu b) (\nu a)P \qquad  (\nu a)(P \PAR Q) \congr (\nu a) P \PAR Q~(\text{when } a \text{ is not free in } Q) $$
Note that the last rule can always be applied from right to left by $\alpha$-renaming. By associativity, we will often write parallel composition for any number of processes and not only two.

\subsection{Parallel Complexity : The Span}

We now review the definition of span~\cite{BaillotGhyselen21}.
We add a process construct $m : P$, where $m$ is an integer.
A process using this constructor will be called an \emph{annotated process}.
Intuitively, this annotated process has the meaning \textit{$P$ with $m$ ticks before}.
The congruence relation $\congr$ is then enriched with the following relations: 
$$ 0 : P \congr P \qquad m : (P \PAR Q) \congr (m : P) \PAR (m : Q)  $$
$$ m : (\nu a) P \congr (\nu a) (m : P) \qquad m : (n : P) \congr (m + n) : P \qquad  $$
So, zero tick is equivalent to nothing and ticks can be distributed over parallel composition as expressed by the second relation. Name creation can be done before or after ticks without changing the semantics and finally ticks can be grouped together.

The rules for the reduction relation $\pred$ are given in Figure~\ref{f:parallelreduction}.
This semantics works as the usual semantics for $\pi$-calculus, but when doing a synchronization, only the maximal annotation is kept, and ticks are memorized in the annotations. 
\begin{figure*}
	\centering
	\begin{framed}
		\begin{center}
			\vvskip
			\AXC{}
			\UIC{$(n:\In{a}{\seq{\var}}.P) \PAR (m:\Out{a}{\seq{e}}.Q) \pred \max(m,n):(\psub{P}{\seq{\var}}{\seq{e}} \PAR Q)$}
			\DP
			\qquad 
			\AXC{} 
			\UIC{$\tick. P \pred 1 : P$}
			\DP 
			\\
			\vvskip
			\AXC{}
			\UIC{$(n:!\In{a}{\seq{\var}}.P) \PAR (m:\Out{a}{\seq{e}}.Q) \pred (n:!\In{a}{\seq{\var}}.P) \PAR (\max(m,n):(\psub{P}{\seq{\var}}{\seq{e}} \PAR Q))$} 
			\DP
			\\
			\vvskip
			\AXC{}
			\UIC{$\pifn{\zero}{P}{x}{Q} \pred P$}
			\DP 
			\\ 
			\vvskip  
			\AXC{}
			\UIC{$\pifn{\suc(e)}{P}{x}{Q} \pred \psub{Q}{x}{e}$}
			\DP
			\\ 
			\vvskip
			\AXC{$P \pred Q$}
			\UIC{$ P \PAR R \pred Q \PAR R$}
			\DP 
			\qquad
			\AXC{$P \pred Q$}
			\UIC{$(\nu a) P \pred (\nu a) Q$}
			\DP 
			\qquad 
			\AXC{$P \pred Q$}
			\UIC{$(n : P) \pred (n : Q)$}
			\DP 
			\\ 
			\vvskip   
			\AXC{$P \congr P'$}
			\AXC{$P' \pred Q'$}
			\AXC{$Q' \congr Q$}
			\TIC{$P \pred Q$}
			\DP  
		\end{center}   
	\end{framed}
	\caption{Reduction Rules for Annotated Processes}
	\label{f:parallelreduction}
\end{figure*}
Span is then defined by: 
\begin{definition}[Parallel Complexity]
The \emph{local complexity} $\lcomp(P)$
  of an annotated process \(P\) is defined by:  
	$$\lcomp(n : P) = n + \lcomp(P)\qquad \lcomp(P \PAR Q) = \max(\lcomp(P),\lcomp(Q)) $$ 
	$$\lcomp((\nu a) P) = \lcomp(P) \qquad \lcomp(P) = 0 \text{ otherwise} $$
        The \emph{global parallel complexity} (or \emph{span}) of \(P\)
        is given by $\max \{ n \mid P \pred^* Q \land \lcomp(Q) = n \}$ where $\pred^*$ is the reflexive and transitive closure of $\pred$.
	\label{d:parallelcomplexity}  
\end{definition}

\begin{example}
        Let $P := \tick. \In{a}{}. \tick. \Out{a}{} \PAR \tick. \In{a}{}. \tick. \Out{a}{} \PAR \Out{a}{}$. Then, we have:
	\begin{align*}
	P &\pred^2 1 : (\In{a}{}. \tick. \Out{a}{}) \PAR 1 : (\In{a}{}. \tick. \Out{a}{}) \PAR 0 : \Out{a}{} \pred 1 : (\In{a}{}. \tick. \Out{a}{}) \PAR 1 : (\tick. \Out{a}{}) \\
	&\pred 1 : (\In{a}{}. \tick. \Out{a}{}) \PAR 2 : \Out{a}{} \pred 2 : (\tick. \Out{a}{}) \pred 3 : \Out{a}{}
	\end{align*}
        Thus, the process has at least complexity $3$. As all the other possible choices we could have made in the 
        reduction steps are similar, the process has exactly complexity $3$.
\label{e:semaphore}
\end{example} 

The following example motivates our introduction of usages in the next section.

\begin{example}[Motivating Example]
  \label{ex:motivating}
	Let $P := \In{a}{}. \tick. \Out{a}{}$.
	Then, the complexity of $P \PAR P \PAR P \PAR \cdots \PAR P \PAR \Out{a}{}$ is equal to the number of $P$ in parallel. 
\end{example}

\section{Types with Usages} \label{s:types} 

The goal of our work is to design a type system for processes such that if $\Gamma \p Q \C K$ then $K$ is a bound on the complexity of $Q$, as in \cite{BaillotGhyselen21}.
The analysis of \cite{BaillotGhyselen21} was not precise enough:
in fact, the process \(P\) in Example~\ref{ex:motivating} was not typable.
The main idea to tackle this problem
is to use the notion of usages to represent the channel-wise behaviour
of processes. Usages have been used for deadlock-freedom analysis \cite{Sumii98HLCL,KobayashiTypeSystemLockFree,KobayashietAlDeadlockFreeCalculus}, but our notion of usages
significantly differs from the original one, as discussed below.

\subsection{Indices}

First, we define integer indices, which are used
to keep track of the size of values in a process. 

\begin{definition}
  	Let $\IV$ be a countable set of index variables, usually denoted by $i$,$j$ or $k$. The set of \emph{indices}, representing integers in $\NN_{\infty} = \NN \cup \set{\infty}$, is given by: 
	$$ I,J := \INN{I} \midd \infty \qquad \INN{I} := i \midd f(\INN{I},\dots,\INN{I})  $$
        Here, $i \in \IV$. The symbol $f$ is an element of a given set of function symbols containing, for example, integers constants as nullary operators, addition and multiplication. We also assume the subtraction as a function symbol, with $n\minus m = 0$ when $m \ge n$. Each function symbol $f$ of arity $\mathtt{ar}(f)$ comes with an interpretation $\sem{}{f} : \NN^{\mathtt{ar}(f)} \rightarrow \NN $. 
	\label{d:indices}
\end{definition}
Given an index valuation $\rho : \IV \rightarrow \NN$, we extend the interpretation of function symbols to indices, noted $\sem{\rho}{I}$, as expected;
$\sem{\rho}{I}$ ranges over $\NN_{\infty}$.
For an index $I$, we
write $\Isub{I}{i}{\INN{J}}$ for the index obtained by replacing
the occurrences of $i$ in $I$ with $\INN{J}$.
Note that $\Isub{\infty}{i}{\INN{J}} = \infty$.

\begin{definition}[Constraints on Indices]
	Let $\varphi \subset \IV$ be a finite set of index variables. A \emph{constraint} $C$ on $\varphi$ is an expression with the shape $I \rel J$ where $I$ and $J$ are indices with free variables in $\varphi$ and $\rel$ denotes a binary relation on $\NNi$. Usually, we take $\rel \;\in \{ \le, <, =, \ne \}$. A finite set of constraints is denoted $\Phi$. 
	\label{d:constraints}
\end{definition}

For a finite set $\varphi \subset \IV$, we say that a valuation $\rho : \varphi \rightarrow \NN$ \emph{satisfies}
a constraint $I \rel J$ on $\varphi$, noted $\rho \q I \rel J$ when
$\sem{\rho}{I} \rel \sem{\rho}{J}$ holds.  
Similarly, $\rho \q \Phi$ holds when $\rho \q C$ for all $C \in \Phi$. Likewise, we note $\fifi \q C$ when for all valuations $\rho$ on $\varphi$ such that $\rho \q \Phi$ we have $\rho \q C$. We will also use some extended operations on indices $I,J$; for example, we may use $\infty + J = \infty$ or other such equations.

\subsection{Usages}

We use \emph{usages}
to express the channel-wise behaviour of a process.
Usages are a kind of CCS  processes  \cite{Milner89}
on a single channel, where each action is annotated with two time intervals. The set of usages, ranged over by $U$ and $V$, is given by:
$$ U,V ::= \ZU \mid (U|V) \mid \AU{\alpha}{A_o}{J_c}.U \mid~!U \mid U + V \qquad \alpha := \texttt{In} \mid \texttt{Out}$$ 
$$A_o,B_o ::= [I, J] \qquad J_c,I_c ::= J \mid [I,J] $$
Given a set of index variables $\varphi$ and a set of constraints $\Phi$, for an interval $[I,J]$, we always require that $\fifi \q I \le J$. For an interval $A_o = [I,J]$, we denote $\LeA(A_o) = I$ and $\RiA(A_o) = J$.
In the original notion of
  usages~\cite{Sumii98HLCL,KobayashiTypeSystemLockFree,KobayashietAlDeadlockFreeCalculus}, \(A_o\) and \(J_c\) were just numbers. The extension to intervals plays an important
  role in our analysis. Note that $J_c$ may be a single index $J$,
  but this single index $J$ should be understood as the interval $[-\infty,J]$.

Intuitively, a channel with usage $\ZU$ is not used at all. A channel of usage
  $U\PAR V$ can be used according to \(U\) and \(V\) possibly in parallel.
  The usage $\IU{A_o}{J_c}.U$ describes a channel that may be used for input,
  and then used according to \(U\). The two intervals \(A_o\) and \(J_c\),
  called \emph{obligation} and \emph{capacity} respectively, are used to achieve
  a kind of assume-guarantee reasoning. The obligation \(A_o\) indicates a \emph{guarantee} that if the channel is indeed used for input, then the input should become
  ready during the interval \(A_o\). The capacity \(J_c\) indicates the \emph{assumption}
  that if the environment performs a corresponding output, that output will be
  provided during the time interval \(J_c\) after the input becomes ready.
  For example, if a channel \(a\) has
  usage \(\IU{[1,1]}{J_c}.\ZU\),
  then the process \(\tick.a().0\) conforms to the usage,
  but \(a().0\) and \(\tick.\tick.a().0\) do not. 
  Similarly, $\OU{A_o}{J_c}.U$ has the same meaning but for output. The usage $!U$ denotes the usage $U$ that can be replicated infinitely, and $U + V$ denotes a non-deterministic choice between the usages $U$ and $V$. This is useful for example in a case of pattern matching where a channel can be used very differently in the two branches. 

Recall that the obligation and capacity intervals in usages express a sort of
  assume-guarantee reasoning. We thus require that the assume-guarantee reasoning in a usage
  is ``consistent'' (or \emph{reliable}, in the terminology of usages). For example, the usage
  \(\IU{[0,0]}{[1,1]} \PAR \OU{[1,1]}{0}\) is reliable because (i)
  the part \(\IU{[0,0]}{[1,1]}\) assumes that a corresponding output will become ready
  at time \(1\), and the other part \(\OU{[1,1]}{0}\) indeed guarantees that and moreover, (ii)
  \(\OU{[1,1]}{0}\) assumes that a corresponding input will be ready by
  the time the output becomes ready, and the part \(\IU{[0,0]}{[1,1]}\) guarantees that.
  In contrast,  the usage \(\IU{[0,0]}{[1,1]} \PAR \OU{[2,2]}{0}\) is problematic
  because, although the part \(\IU{[0,0]}{[1,1]}\) assumes that an output will be ready at time \(1\),
  \(\OU{[2,2]}{0}\) provides the output only at time \(2\).
  The consistency on assume-guarantee reasoning must hold during the whole computation;
  for example, in the usage \( \IU{[0,0]}{[0,0]}.\IU{[0,0]}{[1,1]} \PAR
  \OU{[0,0]}{[0,0]}.\OU{[2,2]}{0}\), 
  the first input/output pair is fine,
  but the usage expressing the next  communication:
  \(\IU{[0,0]}{[1,1]} \PAR \OU{[2,2]}{0}\) is problematic.
  To properly define the reliability of usages during the whole computation, we first prepare
   a reduction semantics for usages, by viewing usages as CCS processes.

\begin{definition}[Congruence for Usages]
	The relation $\congr$ is defined as the least congruence relation closed under:
	$$ U \PAR \pzero \congr U \qquad U \PAR V \congr V \PAR U \qquad U \PAR (V \PAR W) \congr (U \PAR V) \PAR W $$
	$$ !\ZU \congr \ZU \qquad !U \congr ~!U \PAR U \qquad !(U \PAR V) \congr ~!U \PAR !V \qquad !!U \congr ~!U  $$
\end{definition}


Before giving the reduction semantics, we introduce some notations.

\begin{definition}[Operations on Usages]
	We define the operations \(\IPlus\), \(\Ilub\), and \(+\) by:
	$$ \begin{array}{l}
		A_o \IPlus J = [0,\LeA(A_o) + J] \qquad A_o \IPlus [I,J] = [\RiA(A_o) + I,\LeA(A_o) + J] \\{}
		[I,J]\Ilub [I',J'] = [\max(I,I'), \max(J,J')] \qquad 
		[I,J]+ [I',J'] = [I+I', J+J'] 
	\end{array} $$
	Note that \(\IPlus\) is an operation that takes an obligation interval and a capacity and returns an interval. The operations \(\Ilub\) (max) and \(+\) are just pointwise
        extensions of the operations for indices.
        The intuition on \(\IPlus\) is explained later when we define the reduction
        relation.
	
	The delaying operation \(\Up{A_o}U\) on usages is defined by:
	$$
	\begin{array}{l}
		\Up{A_o}\ZU = \ZU \qquad \Up{A_o}(U \PAR V) = \Up{A_o}U \PAR \Up{A_o}V \qquad \Up{A_o}(U + V) = \Up{A_o}U + \Up{A_o}V \\ \\ 
		\Up{A_o}\AU{\alpha}{B_o}{J_c}.U = \AU{\alpha}{A_o + B_o}{J_c}.U \qquad \Up{A_o}(!U) = !(\Up{A_o} U)
	\end{array}
	$$
	We also define $[I,J] + J_c$ and thus \(\Up{J_c}U\) by extending the operation with:
	\( [I,J] + J' = [I,J + J'] \).
\end{definition}
Intuitively, a usage $\Up{A_o} U$ corresponds to the usage $U$ delayed
by a time approximated by the interval $A_o$. 

The reduction relation is given by the rules of Figure~\ref{f:usagered}. 
The first rule means that to reduce a usage, we choose one input and
one output, and then we trigger the communication between them. This
communication occurs and does not lead to an error when the capacity
of an action indeed corresponds to a bound on the time the dual action is defined. This is given by the relation $A_o \subseteq B_o \IPlus J_c$. As an example, let us suppose that $B_o = [1,3]$, and the time for which the output becomes ready is in fact $2$, then the capacity $J_c$ says that after two units of time, the synchronization should happen in the interval $J_c$. So, if we take $J_c = [5,7]$ for example, then if $t$ is the time for which the dual input becomes ready, we must have $t \in [2 + 5,2 + 7]$. This should be true for any time value in $B_o$, so we want that $\forall t' \in [1,3], \forall t \in A_o, t \in [t' + 5, t' + 7]$, and this is equivalent to $A_o \subseteq B_o \IPlus [5,7] = [8,8]$. Indeed, $8$ is the only time that is in the three intervals $[6,8]$, $[7,9]$ and $[8,10]$. The case where $J_c = J$ is a single index occurs when $t$ can be smaller than $t'$, and in this case we only ask that the upper bound is correct: $\forall t' \in B_o, \forall t \in A_o, t \le t' + J$.

If the bound is incorrect, we trigger an error: see the second rule. In the case everything went well, the continuation is delayed by an approximation of the time when this communication occurs (see \(\Up{A_o\Ilub B_o}\) in the first rule).
In the rules for $U + V$, a reduction step in usages can also make a non-deterministic choice. 

\begin{figure}
	\centering
	\begin{framed}
		\small 
		\begin{center}
			\AXC{$\fifi \q B_o \subseteq A_o \IPlus I_c$}  
			\AXC{$\fifi \q A_o \subseteq B_o \IPlus J_c$}
			\BIC{$\fifi \p \IU{A_o}{I_c}.U \PAR \OU{B_o}{J_c}.V
				\ured \Up{A_o \Ilub B_o}(U \PAR V)$} 
			\DP 
			\qquad 
			\AXC{$
				\fifi \nq (B_o \subseteq A_o \IPlus I_c \land A_o \subseteq B_o \IPlus J_c)$}
			\UIC{$\fifi \p \IU{A_o}{I_c}.U \PAR \OU{B_o}{J_c}.V
				\ured \Error$}
			\DP 
			\\
			\vvskip 
			\AXC{}
			\UIC{$\fifi \p U + V \ured U$}
			\DP
			\qquad  
			\AXC{}
			\UIC{$\fifi \p U + V \ured V$}
			\DP
			\qquad 
			\AXC{$\fifi \p U \ured U' \qquad U' \ne \Error$}
			\UIC{$\fifi \p U \PAR V \ured U'\PAR V$}
			\DP 
			\\ 
			\vvskip 
			\AXC{$\fifi \p U \ured \Error$}
			\UIC{$\fifi \p U \PAR V \ured \Error$}
			\DP
			\qquad 
			\AXC{$U \equiv U' \qquad \fifi \p U'\ured V' \qquad V'\equiv V$}
			\UIC{$\fifi \p U \ured V$}
			\DP 
		\end{center}   
	\end{framed}
	\caption{Reduction Rules for Usages}
	\label{f:usagered}
\end{figure}

An error in a usage reduction means that the assume-guarantee reasoning was inconsistent. Based on this intuition,
we define the notion of reliablity.

\begin{definition}[Reliability]
  A usage $U$ is \emph{reliable} under $\fifi$
  if \(U \not\ured^* \Error\).
\end{definition}

\begin{example} 
Consider the usage 
\(U := \IU{[1,1]}{1}. \OU{[1,1]}{0} \PAR \IU{[1,1]}{1}. \OU{[1,1]}{0} \PAR \OU{[0,0]}{[1,1]} \).
The only possible reduction sequence (with symmetry) is:
\[ U \ured \OU{[2,2]}{0} \PAR \IU{[1,1]}{1}. \OU{[1,1]}{0}\ured \OU{[3,3]}{0}.\]
For the first step, we have indeed
$[1,1] \subseteq [0,0] \IPlus [1,1] = [1,1]$ and $[0,0] \subseteq [1,1] \IPlus 1 = [0,2]$. Note that the capacity $[0,1]$ instead of $1$ for the input would not have worked since $[1,1] \IPlus [0,1] = [1,2]$.
Thus, the usage \(U\) is reliable. It corresponds, for example,
to the usage of the channel $a$ in the process $P$ given in Example~\ref{e:semaphore}:
\( \tick. a(). \tick. \overline{a} \langle \rangle \PAR \tick. a(). \tick. \overline{a} \langle \rangle \PAR \overline{a} \langle \rangle  \).
The obligation $[1,1]$ corresponds to waiting for exactly one tick.
Then, the capacities say that once they are ready, the two inputs will
indeed communicate before one time unit for any reduction. And at the
end, we obtain an output available at time $3$, and this output has no
communication. One can see that those capacities and obligations indeed give the complexity of this process. Thus, we will ask in the type system that all usages are reliable, and so the time indications will give some complexity bounds on the behaviour of a channel. \qed
\label{e:semaphore2}
\end{example}

\begin{example}
  We give an example of a non-reliable usage.
  To the previous example, let us add another input in parallel
\small 
$$U := \IU{[1,1]}{1}. \OU{[1,1]}{0} \PAR \IU{[1,1]}{1}. \OU{[1,1]}{0} \PAR \IU{[1,1]}{1}. \OU{[1,1]}{0} \PAR \OU{[0,0]}{[1,1]} $$ 
\normalsize 
We have:
\( U \ured^* \OU{[3,3]}{0} \PAR \IU{[1,1]}{1}. \OU{[1,1]}{0} \ured \Error \),
because $[1,1] \IPlus 1 = [0,2]$, so the capacity here is not a good assumption.
However, the following variant of the usage:
\small 
$$ U := \IU{[1,1]}{2}. \OU{[1,1]}{0} \PAR \IU{[1,1]}{2}. \OU{[1,1]}{0} \PAR \IU{[1,1]}{2}. \OU{[1,1]}{0} \PAR \OU{[0,0]}{[1,1]} $$
\normalsize 
is reliable. This example shows how reliability adapts to parallel composition. \qed
\label{e:semaphorefalse}
\end{example} 

We introduce another relation \(U\SubU V\) called the \emph{subusage} relation, which will be used later to define the subtyping relation. It is defined by the rules of Figure~\ref{f:subusage}. The relation $U \SubU V$ intuitively means that any channel of usage $U$ may also be used according to $V$. For example, $U \SubU 0$ says that we may not use a channel (usage equal to $0$). Recall that an obligation and a capacity express a guarantee and an assumption respectively.
The last but one rule says
that it is safe to strengthen the guarantee and weaken the
assumption. We use the relation $I_c \le J_c$ to denote the relation $\subseteq$ on intervals, where a single index $J$ is considered as the interval $[-\infty,J]$. 
The last rule can be understood as follows.
The part \(\Up{A_o + J_c}V\) says that a channel may be used according to
\(V\) only after the interval \(A_o+J_c\). Since the action \(\alpha^{A_o}_{J_c}\)
is indeed finished during the interval \(A_o+J_c\), we can move \(V\) to
under the guard of \(\alpha^{A_o}_{J_c}\). This last rule is especially useful for substitution, as explained in the example below.

\begin{example} 
	Consider the process:
	 $$P := \In{a}{r}. \In{r}{}. \In{b}{} \PAR \Out{a}{b} $$	 
	Let us give usages to \(b\) and \(r\); here we omit time annotations for the sake of simplicity. We have $U_r = \IU{}{}$ and $U_b = \IU{}{} \PAR U_r$
	Indeed, $r$ is used only once as an input, and $b$ is used as an input on the left, and it is sent to be used as $r$ on the right. Thus, after a reduction step we obtain $P \red \In{b}{}. \In{b}{}$ where $b$ has usage $U_b' = \IU{}{}. \IU{}{}$.
	So, the channel \(b\) had usage \(U_b\) in \(P\), but it ended up being used according to
	\(U_b'\); that is valid since we have the subusage relation \(U_b\SubU U_b'\).
	\label{e:substitution}
\end{example}

\begin{figure*}
	\centering
	\begin{framed}
		\begin{center}
			\AXC{} 
			\UIC{$\fifi \p U \SubU \ZU$}
			\DP 
			\qquad
			\AXC{$i \in \set{1;2}$} 
			\UIC{$\fifi \p U_1 + U_2 \SubU U_i$}
			\DP 
			\qquad 
			\AXC{$\fifi \p U \SubU U'$}
			\UIC{$\fifi \p U + V \SubU U' + V$}
			\DP 
			\\ 
			\vvskip   
			\AXC{$\fifi \p V \SubU V'$}
			\UIC{$\fifi \p U + V \SubU U + V'$}
			\DP 
			\qquad 
			\AXC{$\fifi \p U \SubU U'$}
			\UIC{$\fifi \p U \PAR V \SubU U' \PAR V$}
			\DP  
			\qquad 
			\AXC{$\fifi \p U \SubU U'$}
			\UIC{$\fifi \p !U  \SubU !U' $}
			\DP
			\\ 
			\vvskip  
			\AXC{$U \equiv U' \qquad \fifi \p U' \SubU V' \qquad V \equiv V'$}
			\UIC{$\fifi \p U \SubU V$}
			\DP 
			\qquad  
			\AXC{$\fifi \p U \SubU U' \qquad \fifi \p U' \SubU U''$}
			\UIC{$\fifi \p U \SubU U''$}
			\DP 
			\\ 
			\vvskip   
			\AXC{$\fifi \p U \SubU U'$}
			\UIC{$\fifi \p \alpha^{A_o}_{J_c}. U \SubU \alpha^{A_o}_{J_c}. U'$} 
			\DP 
			\qquad  
			\AXC{$\fifi \q B_o \subseteq A_o \qquad \fifi \q I_c \le J_c$}
			\UIC{$\fifi \p \alpha^{A_o}_{I_c}. U \SubU \alpha^{B_o}_{J_c}. U$}
			\DP 
			\\ 
			\vvskip 
			\AXC{}
			\UIC{$\fifi \p (\alpha^{A_o}_{J_c} . U) \PAR (\Up{A_o + J_c}V) \SubU \alpha^{A_o}_{J_c}. (U \PAR V) $}
			\DP 
		\end{center}   
	\end{framed}
	\caption{Subusage}
	\label{f:subusage}
\end{figure*}


\subsection{Type System} 

We extend ordinary types for the \(\pi\)-calculus with usages.
\begin{definition}[Usage Types]
	We define \emph{types} by the following grammar:
	$$ T,S ::= \Nat[I,J] \midd \ChT{\seq{T}}{U} \midd \ServT{\seq{i}}{K}{\seq{T}}{U}.  $$
\end{definition} 
The type \(\Nat[I,J]\) describes an integer \(n\) such that \(I\le n\le J\).
Channels are classified into \emph{server channels} (or just \emph{servers})
  and \emph{simple channels}.
  All the inputs on a server channel must be
  replicated (as in $!\In{a}{\seq{\var}}.P$), while
  no input on a simple channel can be replicated.
  The type  \(\ChT{\seq{T}}{U}\) describes a simple channel
  that is used for transmitting values of type \(\seq{T}\) according to usage \(U\).
  For example, $\ChT{\Nat[I,J]}{U}$ is the type of channels used according to \(U\)
  for transmitting integers in the interval \([I,J]\).
  The type \(\ServT{\seq{i}}{K}{\seq{T}}{U}\) describes a server channel
  that is used for transmitting values of type \(\seq{T}\) according to usage \(U\);
  the superscript \(K\), which we call the \emph{complexity} of a server, is an interval. It 
  denotes the cost incurred when a server is invoked.
  Note that the server type allows polymorphism on index variables \(\seq{i}\).

  The subtyping relation \(T\subtype T'\), which means
  that a value of type \(T\) can also be used as
  a value of type \(T'\),
  is defined by the rules of Figure~\ref{f:usagesubtype}.

\begin{figure*}
	\centering
	\begin{framed}
		\small 
		\begin{center}
			\AXC{$\fifi \q I' \le I$}
			\AXC{$\fifi \q J \le J'$}
			\BIC{$\fifi \p \Nat[I,J] \subtype \Nat[I',J']$}
			\DP  
			\qquad 
			\AXC{$\fifi \p \seq{T} \subtype \seq{T'}$}
			\AXC{$\fifi \p \seq{T'} \subtype \seq{T}$}
			\AXC{$\fifi \p U \SubU V$}
			\TIC{$\fifi \p \ChT{\seq{T}}{U} \subtype \ChT{\seq{T'}}{V}$}
			\DP 
			\\ 
			\vvskip 
			\AXC{$\varphi,\seq{i};\Phi \p \seq{T} \subtype \seq{T'}$}
			\AXC{$\varphi,\seq{i};\Phi \p \seq{T'} \subtype \seq{T}$}
			\AXC{$\varphi,\seq{i};\Phi \q K = K'$}
			\AXC{$\fifi \p U \SubU V$}
			\QIC{$\fifi \p \ServT{\seq{i}}{K}{\seq{T}}{U} \subtype \ServT{\seq{i}}{K'}{\seq{T'}}{V}$}
			\DP 
		\end{center}   
	\end{framed}
	\caption{Subtyping Rules for Usage Types}
	\label{f:usagesubtype}
\end{figure*}

We extend operations on usages to partial operations on types and typing contexts with $\Gamma = \var_1 : T_1, \dots, \var_n : T_n$. The delaying of a type $\Up{A_o} T$ is defined as the delaying of the usage for a channel or a server type, and it does nothing on integers.
We also say that a type is \emph{reliable} when it is an integer type, or when it is a server or channel type with a reliable usage. We define following operations: 

\begin{definition} 
	The parallel composition $T\,|\, T'$ is defined by:
	\small 
	$$ \Nat[I,J] \PAR \Nat[I,J] = \Nat[I,J] \qquad  \ChT{\seq{T}}{U} \PAR \ChT{\seq{T}}{V} = \ChT{\seq{T}}{(U \PAR V)} $$ 
	$$\ServT{\seq{i}}{K}{\seq{T}}{U} \PAR \ServT{\seq{i}}{K}{\seq{T}}{V} = \ServT{\seq{i}}{K}{\seq{T}}{(U \PAR V)} $$ 
	\normalsize 
	
\end{definition}

\begin{definition}[Replication of Type]
	The replication of a type $!T$ is defined by:
	\small 
	$$ !\Nat[I,J] = \Nat[I,J] \qquad !\ChT{\seq{T}}{U} = \ChT{\seq{T}}{(!U)} \qquad  !\ServT{\seq{i}}{K}{\seq{T}}{U} = \ServT{\seq{i}}{K}{\seq{T}}{(!U)} $$ 
	\normalsize 
\end{definition}

The (partial) operations on types defined above are extended pointwise to contexts.
For example, for $\Gamma = \var_1 : T_1, \dots, \var_n : T_n$ and $\Delta = \var_1 : T_1', \dots, \var_n : T_n'$, we define $\Gamma \PAR \Delta = \var_1 : T_1 \PAR T_1', \dots, \var_n : T_n \PAR T_n'$. Note that this is defined just if $\Gamma$ and $\Delta$ agree on the typing of integers and associate the same types (excluding usage) to names.  


\begin{definition}\label{def:seqcomp}
	Given a capacity $J_c$ and an interval $K = [K_1,K_2]$, we define $J_c;K$ by; 
	$$ J;[K_1,K_2] = [0,J + K_2] \qquad [\infty,\infty];[K_1,K_2] = [0,0] \qquad [\INN{I},J];[K_1,K_2] = [0,J + K_2] $$
\end{definition} 
\noindent
Intuitively, \(J_c;K\) represents the complexity of an input/output process when the input/output has capacity \(J_c\) and the complexity of the continuation is \(K\). \(J_c=[\infty,\infty]\) means the input/output will never succeed (because there is no corresponding output/input); hence
the complexity is \(0\). A case where this is useful is given later in Example~\ref{e:deadlock}. Otherwise, an upper-bound is given by \(J+K_2\) (the time spent for the input/output to succeed, plus \(K_2\)). The lower-bound is \(0\), since the input/output may be blocked forever.

The type system is given in Figures~\ref{f:sizetypeexpression} and 
\ref{f:usagetype}. The typing rules for expressions are standard ones for sized types.
\begin{figure}[tbp]
	\centering
	\begin{framed}
		\begin{center}
			\AXC{$\var \COL T \in \Gamma$}
			\UIC{$\fifi;\Gamma \p \var \COL T$}
			\DP 
			\qquad
			\AXC{}
			\UIC{$\fifi; \Gamma \p \zero \COL \Nat[0,0]$}
			\DP 
			\qquad  
			\AXC{$\fifi; \Gamma \p e \COL \Nat[I,J]$}
			\UIC{$\fifi; \Gamma \p \suc(e) \COL \Nat[I+1,J+1]$}
			\DP 
			\\ 
			\vvskip   
	\AXC{$\fifi; \Delta \p e \COL T' $}
			\AXC{$\fifi \p \Gamma \subtype \Delta$}
			\AXC{$\fifi \p 	T' \subtype T$}
			\TIC{$\fifi;\Gamma \p e \COL T$}
			\DP 
		\end{center}   
	\end{framed}
	\caption{Typing Rules for Expressions}
	\label{f:sizetypeexpression}
\end{figure}

A type judgment for processes is of the form $\fifi;\Gamma \p P \C [I,J]$
  where $\varphi$ denotes the set of index variables,
  $\Phi$ is a set of constraints on index variables, and
  $J$ is a bound on the parallel complexity of $P$ under those constraints. This complexity bound $J$ can also be seen as a bound on the open complexity of a process, that is to say the complexity of $P$ in an environment corresponding to the types in $\Gamma$. For example, a channel with usage $\IU{[1,1]}{5}$ alone cannot be reduced, as it is only used as an input. So, the typing $\void;a : \ChT{}{\IU{[1,1]}{5}} \p \tick. \In{a}{} \C [1,6]$ says that in an environment that may provide an output on the channel $a$ within the time interval $[1,1] \IPlus 5 = [0,6]$, this process has a complexity bounded by $6$. Similarly, the lower bound $I$ is a lower bound on the parallel complexity of $P$. But in practice, this lower bound is often too imprecise.\footnote{This is because in the definition of
    \(J_C;K\) in Definition~\ref{def:seqcomp}, we pessimistically take into account
    the possibility that each input/output may be blocked forever. We can
    avoid the pessimistic estimation of the lower-bound
    by incorporating information about lock-freedom~\cite{KobayashiTypeSystemLockFree,KobayashiTypeBasedInformationFlowAnalysis}.}

\begin{figure}[tbp]
	\centering
	\begin{framed}
		\footnotesize  
		\begin{center}
			\AXC{}
			\LeftLabel{(zero)}
			\UIC{$\fifi;\Gamma \p \pzero \C [0,0]$}
			\DP 
			\qquad
			\AXC{$\fifi;\Gamma \p P \C K_1$}
			\AXC{$\fifi;\Delta \p Q \C K_2$}
			\LeftLabel{(par)}
			\BIC{$\fifi;\Gamma \PAR \Delta \p P \PAR Q \C K_1 \Ilub K_2$}
			\DP 
			\\ 
			\vvskip 
			\AXC{$\fifi;\Gamma \p P \C K$}
			\LeftLabel{(tick)}
			\UIC{$\fifi;\Up{[1,1]}{\Gamma} \p \tick.P \C K + [1,1]$}
			\DP 
			\qquad 
			\AXC{$\fifi;\Gamma, a \COL \ChT{\seq{T}}{U}, \seq{\var} \COL \seq{T} \p P \C K$}
			\LeftLabel{(ich)}
			\UIC{$\fifi;\Up{J_c}{\Gamma}, a \COL \ChT{\seq{T}}{\IU{[0,0]}{J_c}.U} \p \In{a}{\seq{\var}}.P \C J_c;K$}
			\DP 
			\\ 
			\vvskip 
			\AXC{$(\varphi,\seq{i});\Phi;\Gamma, a \COL \ServT{\seq{i}}{K}{\seq{T}}{U}, \seq{\var} \COL \seq{T} \p P \C K$}
			\LeftLabel{(iserv)}
			\UIC{$\fifi;\Up{J_c}{!\Gamma}, a \COL \ServT{\seq{i}}{K}{\seq{T}}{!\IU{[0,0]}{J_c}.U} \p !\In{a}{\seq{\var}}.P \C [0,0]$}
			\DP
			\\ 
			\vvskip 
			\AXC{$\fifi;\Gamma', a \COL \ChT{\seq{T}}{V}  \p \seq{e} \COL \seq{T} \qquad \fifi;\Gamma, a \COL \ChT{\seq{T}}{U} \p P \C K$}
			\LeftLabel{(och)}
			\UIC{$\fifi;\Up{J_c}{(\Gamma \PAR \Gamma')}, a \COL \ChT{\seq{T}}{\OU{[0,0]}{J_c}.(V \PAR U)} \p \Out{a}{\seq{e}}.P \C J_c;K$}
			\DP 
			\\ 
			\vvskip 
			\AXC{$\fifi;\Gamma', a \COL \ServT{\seq{i}}{K}{\seq{T}}{V}  \p \seq{e} \COL \Isub{\seq{T}}{\seq{i}}{\seq{\INN{I}}} \qquad \fifi;\Gamma, a \COL \ServT{\seq{i}}{K}{\seq{T}}{U} \p P \C K' $}
			\LeftLabel{(oserv)}
			\UIC{$\fifi;\Up{J_c}{(\Gamma \PAR \Gamma')}, a \COL \ServT{\seq{i}}{K}{\seq{T}}{\OU{[0,0]}{J_c}.(V \PAR U)} \p \Out{a}{\seq{e}}.P \C J_c;(K' \Ilub \Isub{K}{\seq{i}}{\seq{\INN{I}}})$}
			\DP 
			\\ 
			\vvskip
			\AXC{$\fifi;\Gamma \p e \COL \Nat[I,J]$}
			\AXC{$\varphi;\Phi,I \le 0;\Gamma \p P \C K$}
			\AXC{$\varphi;\Phi,J \ge 1;\Gamma, x \COL \Nat[I \minus 1,J \minus 1] \p Q \C K$}
			\LeftLabel{(if)}
			\TIC{$\fifi;\Gamma \p \pifn{e}{P}{x}{Q} \C K$}
			\DP 
			\\
			\vvskip  
			\AXC{$\fifi;\Gamma, a \COL T \p P \C K \qquad T \text{ reliable}$}
			\LeftLabel{(nu)}
			\UIC{$\fifi;\Gamma \p (\nu a) P \C K$}
			\DP 
			\\ 
			\vvskip 
			\AXC{$\fifi;\Delta \p P \C K$}
			\AXC{$\fifi \p \Gamma \subtype \Delta$}
			\AXC{$\fifi \q K \subseteq K'$}
			\LeftLabel{(subtype)}
			\TIC{$\fifi;\Gamma \p P \C K'$}
			\DP
		\end{center}   
	\end{framed}
	\caption{Typing Rules for Processes}
	\label{f:usagetype}
\end{figure}
  
The (par) rule separates a context into two parts, and the complexity is the maximum over the two complexities, both for lower bound and upper bound. 
The (tick) rule shows the addition of a tick implies a delay of $[1,1]$ in the context and the complexity. 
The (nu) rule imposes that all names must have a reliable usage when they are created. 
In order to type a channel with the (ich) rule, the channel must have an input usage, with obligation $[0,0]$. Note that with the subusage relation, we have $\IU{A_o}{J_c} \SubU \IU{[0,0]}{J_c}$ if and only if $A_o = [0,I]$ for some $I$. 
So, this typing rule imposes that the lower-bound guarantee is correct, but the rule is not restrictive for upper-bound. 
This rule induces a delay of $J_c$ in both context and complexity. 
Indeed, in practice this input does not happen immediately as we need to wait for output. 
This is where the assumption on when this output is ready, given by the capacity, is useful. 
The rule for output (och) is similar. 
For a server, the rule for input (iserv) is similar to (ich) in principle but differs in the way complexity is managed. 
Indeed, as a replicated input is never modified nor erased through a computation, giving it a non-zero complexity would harm the precision of the type system. 
Moreover, if this server represents for example a function on an integer with linear complexity, then the complexity of this server depends on the size of the integer it receives; that is why the complexity is transferred to the output rule on server, as one can see in the rule (oserv). 
Indeed, this rule (oserv) is again similar to (och) but the complexity of a call to the server is added in the rule. 
As we have polymorphism on servers, in order to type an output we need to find an instantiation on the indices $\seq{i}$, which is denoted by $\seq{\INN{I}}$ in this rule. 
Finally, the (if) rule is the only rule that modifies the set of constraints, and it gives information on the values the sizes can take. 
As explained in Example~\ref{e:factorial}, those constraints are crucial in our sized type system. 
Note that contexts are not separated in this rule. 
So, for both branches, it means that the usage of channels must be the same. 
However, because we have the choice usage ($U + V$), in practice we can use different usages in those two branches. 


\begin{example}
	The typing derivation of the process in Example~\ref{e:semaphore} is given in Figure~\ref{f:typingsemaphore}. Note that the process 
	$( \tick. \In{a}{}. \tick. \Out{a}{} \PAR \tick. \In{a}{}. \tick. \Out{a}{} \PAR \tick. \In{a}{}. \tick. \Out{a}{} \PAR \Out{a}{} \)
	\normalsize 
	is also typable, in the same way, using the
        following usage (recall Example~\ref{e:semaphorefalse}).
        
	\small 
	$$ U := \IU{[1,1]}{2}. \OU{[1,1]}{0} \PAR \IU{[1,1]}{2}. \OU{[1,1]}{0} \PAR \IU{[1,1]}{2}. \OU{[1,1]}{0} \PAR \OU{[0,0]}{[1,1]} $$
	\normalsize
	We thus obtain the complexity bound $[1,4]$.
	
	\begin{figure*}
		\footnotesize 
		\begin{framed}
			\begin{prooftree} 	
				\AXC{}
				\UIC{$\void;a : \ChT{}{\OU{[0,0]}{0}} \p \Out{a}{} \C [0,0]$}
				\UIC{$\void;a : \ChT{}{\OU{[1,1]}{0}} \p \tick. \Out{a}{} \C [1,1]$}
				\UIC{$\void;a : \ChT{}{(\IU{[0,0]}{1}. \OU{[1,1]}{0})} \p \In{a}{}. \tick. \Out{a}{} \C [0,2]$}
				\UIC{$\void;a : \ChT{}{(\IU{[1,1]}{1}. \OU{[1,1]}{0})} \p \tick. \In{a}{}. \tick. \Out{a}{} \C [1,3]$} 
				\AXC{}
				\UIC{$\void;a : \ChT{}{(\OU{[0,0]}{[1,1]})} \p \Out{a}{} \C [0,1]$} 
				\BIC{$\void;a : \ChT{}{(\IU{[1,1]}{1}. \OU{[1,1]}{0} \PAR \IU{[1,1]}{1}. \OU{[1,1]}{0} \PAR \OU{[0,0]}{[1,1]})} \p \tick. \In{a}{}. \tick. \Out{a}{} \PAR \cdots \PAR \Out{a}{} \C [1,3]$}
			\end{prooftree}
		\end{framed} 
		\caption{Typing of Example~\ref{e:semaphore}}
		\label{f:typingsemaphore}
	\end{figure*}
	
\end{example} 
An example for the use of servers and sizes is given later, in Example~\ref{e:factorial}, as well as a justification for the use of intervals for obligations and capacities, in Example~\ref{e:intervals}. 

\section{Soundness and Examples} 
\label{s:soundness}

The proof of soundness relies on subject reduction. In order to work on the parallel reduction relation $\pred$, we need to consider annotated processes. 
We introduce the following typing rule, for the annotation, as a generalization of the rule for $\tick$.
\begin{prooftree}
	\AXC{$\fifi;\Gamma \p P \C K$}
	\UIC{$\fifi;\Up{[m,m]}{\Gamma} \p m : P \C K + [m,m]$}
\end{prooftree}  

\subsection{Subject Reduction and Soundness}
In the Appendix~\ref{ss:intermediate}, we describe some intermediate lemmas needed for the soundness proof, namely weakening, strengthening and then substitution lemmas for index and expressions.  

Let us first introduce a notation for subject reduction:
\begin{definition}[Reduction for Contexts]
	We say that a context $\Gamma$ reduces to a context $\Gamma'$ under $\fifi$, denoted $\fifi \p \Gamma \ured^* \Gamma'$ when one of the following holds:
	\begin{itemize}
		\item $\Gamma = \Delta, a : \ChT{\seq{T}}{U} \qquad \fifi \p U \ured^* U'\qquad \Gamma' = \Delta, a : \ChT{\seq{T}}{U'}$ 
		\item $\Gamma = \Delta, a : \ServT{\seq{i}}{K}{\seq{T}}{U} \qquad \fifi \p U \ured^* U' \qquad \Gamma' = \Delta, a : \ServT{\seq{i}}{K}{\seq{T}}{U'}$
	\end{itemize}
\end{definition}

So, $\Gamma'$ is $\Gamma$ after some reduction steps but only in a unique usage. We obtain immediately that if all types in $\Gamma$ are reliable then all types in $\Gamma'$ are also reliable by definition of reliability. 

The subject reduction property is stated as follows;
see Appendix~\ref{ss:subjectreduction} for a proof.

\begin{theorem}[Subject Reduction]
	If $\fifi; \Gamma \p P \C K$ with all types in $\Gamma$ reliable and $P \pred Q$ then there exists $\Gamma'$ with $\fifi \p \Gamma \ured^* \Gamma'$ and $\fifi; \Gamma' \p Q \C K$. 
	\label{t:parallelsubjectreduction}
\end{theorem}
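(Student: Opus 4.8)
The proof is by induction on the derivation of $P \pred Q$. For each reduction rule in Figure~\ref{f:parallelreduction} I would invert the typing derivation $\fifi;\Gamma \p P \C K$ to expose the structure of $\Gamma$ and $K$ enforced by the corresponding typing rule, perform the reduction step, and then re-assemble a typing derivation for $Q$, choosing $\Gamma'$ so that the consumed input/output usages are replaced by the result of one $\ured$ step on that channel's usage. Throughout, the fact that all types in $\Gamma$ are reliable is essential: it guarantees that the synchronization performed by $P \pred Q$ corresponds to a \emph{non-error} usage reduction, so the side-conditions $\fifi \q B_o \subseteq A_o \IPlus I_c$ and $\fifi \q A_o \subseteq B_o \IPlus J_c$ of the first rule of Figure~\ref{f:usagered} are available, and reliability of $\Gamma'$ then follows from the remark after Definition~(reduction for contexts).

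\emph{The communication cases.} The central case is $(n:\In{a}{\seq{\var}}.P_1) \PAR (m:\Out{a}{\seq{e}}.P_2) \pred \max(m,n):(\psub{P_1}{\seq{\var}}{\seq{e}} \PAR P_2)$. Here I would first push the $m:(-)$ and $n:(-)$ annotations inside using the annotation rule, so the input side is typed with a context whose $a$-usage is $\Up{[n,n]}(\IU{[0,0]}{I_c}.U_1)$ and similarly on the output side with $\Up{[m,m]}$ of $\OU{[0,0]}{J_c}.(V\PAR U_2)$; after $\alpha$-conversion and inversion of (par) the two halves of $\Gamma$ compose, so the $a$-usage is (a subusage of) $\IU{[n,n]}{I_c}.U_1 \PAR \OU{[m,m]}{J_c}.(V\PAR U_2)$. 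Reliability makes this usage reduce, via the first Figure~\ref{f:usagered} rule, to $\Up{[n,n]\Ilub[m,m]}(U_1 \PAR V \PAR U_2) = \Up{[\max(m,n),\max(m,n)]}(U_1\PAR V\PAR U_2)$; this is exactly the $a$-usage I want in $\Gamma'$. Then I use the Substitution Lemma (Lemma~\ref{l:subsitution}, via its ``difficult cases'' form when $\seq{e}$ contains channels) to substitute $\seq{e}$ for $\seq{\var}$ in the typing of $P_1$, merging its residual $\seq{T}$-usages with those carried by $\seq{e}$ in the output-expression premise; the Invariance-by-Delaying lemma (Lemma~\ref{l:delayinginvariance}) lets me commute the outer $\Up{[\max(m,n),\max(m,n)]}$ with the reductions, and the complexity bookkeeping reduces to checking $\max(m,n) + (\text{bound on }\psub{P_1}{}{}\PAR P_2) \le$ the original $K$, which holds because the (ich)/(och) rules contributed $J_c;K'$ and $I_c;K''$ terms and the annotation rule contributed $[n,n]$, $[m,m]$; here the definition of $J_c;K$ (Definition~\ref{def:seqcomp}) and the properties $A_o\subseteq B_o\IPlus J_c$ do the arithmetic. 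The replicated-input case $(n:!\In{a}{\seq{\var}}.P_1) \PAR (m:\Out{a}{\seq{e}}.P_2)$ is analogous but uses (iserv)/(oserv): the server keeps its $!\IU{[0,0]}{J_c}.U$ usage (via $!U \congr !U\PAR U$ in the usage congruence) while one copy is spawned, the complexity $K$ of the server body is paid by the $\Isub{K}{\seq{i}}{\seq{\INN{I}}}$ term appearing in (oserv), and the index-instantiation is handled by Lemma~\ref{l:indexsub}.

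\emph{The remaining cases.} The two $\tick$ and two $\mathtt{match}$ reductions are straightforward: for $\tick.P_1 \pred 1:P_1$ I invert (tick) and reapply the annotation rule with $m=1$, matching $\Up{[1,1]}\Gamma$ and $K+[1,1]$; for $\pifn{\zero}{P_1}{x}{P_2}\pred P_1$ I invert (case) and use Strengthening (Lemma~\ref{l:strengthening}) to discharge the constraint $I\le 0$, which holds because $e:\Nat[I,J]$ with $e=\zero$ forces $I\le 0$; the $\suc(e)$ branch is symmetric, using the typing of $e':\Nat[I\minus 1,J\minus 1]$ obtained from $\suc(e):\Nat[I,J]$. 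The three congruence-closure rules ($(-)\PAR R$, $(\nu a)(-)$, $(n:(-))$) are immediate by the induction hypothesis — for $(\nu a)$ one keeps reliability of $a$'s type since $\fifi\p\Gamma\ured^*\Gamma'$ preserves reliability; and the $\congr$-rule reduces to a separate routine lemma that typing is invariant under $\congr$ (including the annotated-process equations $m:(P\PAR Q)\congr (m:P)\PAR(m:Q)$ etc.), which follows from $\Up{[m,m]}$ distributing over $\PAR$ and $\nu$ and from $\Up{[m,m]}\Up{[n,n]} = \Up{[m+n,m+n]}$.

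\emph{The main obstacle.} The delicate point is the communication case: getting the usage of $a$ in $\Gamma'$ to be \emph{exactly} the one-step $\ured$-reduct requires that, after inversion and the two annotation-pushings, the actions on $a$ really appear at the \emph{top} of the usage with obligation of the form $[n,n]$ and $[m,m]$ rather than buried under a subusage coercion; this forces a careful use of Lemma~\ref{l:subsitution}'s reliance on the last subusage rule $(\alpha^{A_o}_{J_c}.U)\PAR(\Up{A_o+J_c}V)\SubU \alpha^{A_o}_{J_c}.(U\PAR V)$, together with Lemma~\ref{l:subusageprop}(2) (subusage is a simulation) to transport the $\ured$ step back along $\SubU$, and Lemma~\ref{l:subusageprop}(3) to keep $\Gamma'$ reliable. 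Interleaving the substitution, the delaying, and the subusage/simulation manipulations so that the complexity index $K$ is preserved on the nose (not merely up to $\subseteq$, though (subtype) gives slack) is where essentially all the real work lies.
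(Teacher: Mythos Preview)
Your proposal is correct and follows essentially the same approach as the paper: induction on $P\pred Q$, a separate congruence-invariance lemma (the paper's Lemma~\ref{l:parallelcongrtype}), inversion of the typing derivation in each case, the Substitution and Index-Substitution Lemmas for the communication cases, and---crucially---transporting reliability along $\SubU$ (Lemma~\ref{l:subusageprop}(3)) to obtain the non-error side-conditions, then using the simulation property (Lemma~\ref{l:subusageprop}(2)) to pull the $\ured$ step back to the original usage $U_0\PAR V_0$ and produce the $\Gamma'$ with $\Gamma\ured^*\Gamma'$. One small correction of emphasis: you need not arrange for the $a$-usage in $\Gamma'$ to be ``exactly the one-step $\ured$-reduct''; the paper (and your own invocation of Lemma~\ref{l:subusageprop}(2)) only yields some $W$ with $U_0\PAR V_0\ured^* W$ and $W\SubU \Up{[M,M]}(U_2\PAR V_2\PAR V_2')$, and the final typing of $Q$ is obtained from the explicit derivation by one application of (subtype) using $W\SubU\cdots$, which is why the statement allows $\ured^*$ rather than a single step.
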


The following is the main soundness theorem.
\begin{theorem}
	Let $P$ be an annotated process and $n$ be its global parallel
        complexity. Then, if $\fifi;\Gamma \p P \C [I,J]$ with all types in $\Gamma$ reliable, then we have $\fifi \q J \ge n$. Moreover, if $\Gamma$ does not contain any integers variables, we have $\fifi \q I \le n$. 
	\label{t:mainparallel}
\end{theorem}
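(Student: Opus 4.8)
The plan is to derive Theorem~\ref{t:mainparallel} as a fairly direct corollary of the subject reduction result (Theorem~\ref{t:parallelsubjectreduction}) together with the relationship between local complexity and the complexity bound in a typing judgment. First I would establish the \emph{key invariant} connecting a typing derivation to the local complexity of an annotated process: if $\fifi;\Gamma \p P \C [I,J]$ with all types in $\Gamma$ reliable, then $\fifi \q \lcomp(P) \le J$, and, when $\Gamma$ contains no integer variables, $\fifi \q I \le \lcomp(P)$ as well. This is proved by induction on $P$ in canonical form $(\nu \seq a)(n_1 : G_1 \PAR \cdots \PAR n_m : G_m)$, using Lemma~\ref{l:parallelcongrtype} to reduce to canonical form, the (par) rule which takes $\Ilub$ (hence $\max$ on both bounds) of the sub-complexities, the annotation rule which adds $[m,m]$ (so $\lcomp$ increases by exactly $m$, matching the shift in both bounds), the (nu) rule which leaves complexity unchanged, and the observation that a guarded process $G$ has $\lcomp(G) = 0$ while its complexity bound $[I,J]$ satisfies $\fifi \q I \le 0$: this last point needs a short case analysis on the guard shape, where for (ich), (och) and the annotation-free cases the lower bound comes out as $0$ because of the definition of $J_c;K$ in Definition~\ref{def:seqcomp}, which always has lower bound $0$, and for (tick) the guard is $\tick.P'$ whose bound is $K + [1,1]$, but $\tick.P$ as a guarded process still has $\lcomp = 0$ so we actually only invoke this for the non-tick guards — I would be careful to phrase the invariant so that $\tick.P'$ is treated via the annotation rule after one reduction, or simply note $\lcomp(\tick.P') = 0$ and the lower bound of $K+[1,1]$ is $\ge 1 > 0$, which is still consistent with $\lcomp(\tick.P') = 0 \le J$ but would \emph{not} give $I \le \lcomp$; the resolution is that the lower-bound half of the invariant only needs to hold after we take the maximum over \emph{all} reducts, not for every process, so I would state and prove the upper-bound half ($\lcomp(P) \le J$) as a pointwise invariant and handle the lower bound separately via a single reduction step exhibiting a reduct with the right local complexity.

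Given that invariant, the \textbf{upper bound} $\fifi \q J \ge n$ follows immediately: by definition $n = \max\{ k \mid P \pred^* Q \land \lcomp(Q) = k\}$, so pick $Q$ with $P \pred^* Q$ and $\lcomp(Q) = n$; iterating Theorem~\ref{t:parallelsubjectreduction} along the reduction sequence $P \pred^* Q$ yields a context $\Gamma'$ with $\fifi \p \Gamma \ured^* \Gamma'$, hence all types in $\Gamma'$ still reliable (as noted after the definition of context reduction), and $\fifi;\Gamma' \p Q \C [I,J]$ with the \emph{same} bound $[I,J]$; the upper-bound half of the invariant then gives $\fifi \q \lcomp(Q) = n \le J$. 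If the reduction sequences from $P$ are all infinite, the same argument applies to the supremum / limit.

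For the \textbf{lower bound} $\fifi \q I \le n$ when $\Gamma$ has no integer variables, I would note that $\lcomp(P) \le n$ trivially (take $Q = P$, zero reduction steps, in the definition of $n$), so it suffices to show $\fifi \q I \le \lcomp(P)$. This is where the lower-bound half of the per-process analysis comes in, but it is slightly delicate for input/output guards: the point of the footnote in the paper is that the lower bound of $J_c;K$ is pessimistically $0$, so for a process like $n : \In{a}{}.P'$ the typing bound is $[n + 0, \ldots]$, i.e. lower bound $n$, and indeed $\lcomp(n : \In{a}{}.P') = n$, so the inequality $I \le \lcomp$ does hold — the $+[m,m]$ from the annotation rule contributes equally to $I$ and to $\lcomp$, and every \emph{guarded} subprocess has guard-bound with lower component $0 = \lcomp$. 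The no-integer-variables hypothesis is used to rule out the (case) rule contributing a constraint-dependent discrepancy and, more importantly, to ensure that in the (oserv) rule there are no hidden contributions; I would inspect each rule of Figure~\ref{f:usagetype} to confirm the lower-bound component of $K$ never exceeds the $\lcomp$-contribution of the corresponding syntactic construct.

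The \textbf{main obstacle} I anticipate is getting the lower-bound invariant precisely right across all typing rules — in particular reconciling the (tick) rule (which outputs $K + [1,1]$, a strictly positive lower bound) with the fact that a syntactic occurrence of $\tick.P'$ sitting under a guard has $\lcomp = 0$. The clean way around this is to observe that we only ever need $\fifi \q I \le \lcomp(P)$ for the \emph{top-level} annotated process we start from (not for every subprocess), and for that process one first rewrites it to canonical form $(\nu\seq a)(n_1 : G_1 \PAR \cdots)$ via Lemma~\ref{l:parallelcongrtype}; then each $n_j : G_j$ has typing bound whose lower component is $n_j + (\text{lower component of } G_j\text{'s bound})$, and for \emph{guarded} $G_j$ that second summand is exactly $0$ by inspection of the guard-typing rules (including $G_j = \tick.P'$, since although the \emph{rule} (tick) produces lower bound $\ge 1$, the canonical-form decomposition has already pulled all leading ticks out into the annotations $n_j$, so no guard in the canonical form is itself a tick — or if one insists it can be, a single extra reduction step $\tick.P' \pred 1 : P'$ restores the form). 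Taking $\Ilub = \max$ over $j$ on both components, the lower component of the overall bound equals $\max_j n_j = \lcomp(P)$, giving $\fifi \q I \le \lcomp(P) \le n$. Everything else is routine bookkeeping with the index constraint entailment relation $\q$ and the monotonicity of $\max$, $+$, and $\Ilub$.
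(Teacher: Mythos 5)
Your overall strategy is the same as the paper's: iterate Theorem~\ref{t:parallelsubjectreduction} (noting that context reduction preserves reliability), establish the pointwise invariant $\lcomp(Q) \le J$ for the upper bound, and for the lower bound observe that $I \le \lcomp(Q)$ fails for certain guards and repair this by performing extra reduction steps. The upper-bound half of your argument is correct and matches the paper.

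There is, however, a genuine gap in your lower-bound argument. First, a factual error: the congruence does \emph{not} pull leading ticks into annotations --- $\tick.P' \congr 1 : P'$ is not a congruence rule, and $\tick.P'$ is explicitly a guarded process in Definition~\ref{d:guardedcanonical}, so the canonical form can perfectly well contain tick guards. You do hedge and fall back on the reduction step $\tick.Q' \pred 1 : Q'$, which is the correct fix (and the paper's). More seriously, your claim that ``every guarded $G_j$ has guard-bound with lower component $0$ by inspection of the guard-typing rules'' is false for a second guard shape you never address: $\pifn{e}{Q_1}{x}{Q_2}$. The (case) rule assigns the match the complexity $K$ of its branches, whose lower component can be strictly positive (e.g.\ if both branches start with $\tick$), while $\lcomp$ of the match, being a guarded process, is $0$. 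This is precisely where the no-integer-variables hypothesis earns its keep: it guarantees that $e$ is a closed expression reducing to a numeral, so the match is not in normal form and can be reduced away, after which $I \le \lcomp(Q')$ holds for the reduct $Q'$ and $\lcomp(Q') \le n$ by the definition of global parallel complexity. Your attribution of the hypothesis to ``hidden contributions in (oserv)'' misses this mechanism. Without handling the match guard, the chain $I \le \lcomp(P) \le n$ you propose simply does not go through, since $I \le \lcomp(P)$ can fail for the original $P$.
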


\begin{proof}
	By Theorem~\ref{t:parallelsubjectreduction}, all reductions from $P$ using $\pred$ conserve the typing. The context may be reduced too, but as a reduction step does not harm reliability, we can still apply the subject reduction through all the reduction steps of $\pred$. Moreover, for a process $Q$, if we have a typing $\fifi;\Gamma \p Q \C [I,J]$, then $J \ge \lcomp(Q)$.
	 Thus, $J$ is indeed a bound on the parallel complexity by
         definition. As for the lower bound, one can see that we do
         not always have $I \le \lcomp(Q)$ because of the processes $\tick. Q'$ and $\pifn{e}{Q_1}{x}{Q_2}$. However, those two processes are not in normal form for $\pred$, because $\tick. Q' \pred 1 : Q'$ and as there are no integer variables in $\Gamma$, the pattern matching can also be reduced. Thus, from a process $Q$ we can find $Q'$ such that $Q \pred Q'$ and $Q'$ has no such processes of this shape on the top. And then, we obtain $I \le \lcomp(Q')$ which is smaller than the parallel complexity of $Q$ by definition. 
\end{proof}

\subsection{Examples} 

Below we give several examples to demonstrate the expressive power of our type system. 

\begin{example}[Intervals] 
	\label{e:intervals}
	
	To see the need for an interval capacity, consider the following process: 
	$$ \In{a}{}.\Out{b}{} \PAR \pifn{e}{\Out{a}{}}{x}{\tick.\Out{a}{}} $$
	Depending on the value of \(e\) (which may be statically unknown),
	an output on \(a\) may be available at time 0 or 1. Thus, the input usage on \(a\) should have
	a capacity interval \([0,1]\). As a result, the obligation of the
	output usage on \(b\) should also be an interval \([0,1]\).
	
	Now, one may think that we can assume that lower-bounds are always \(0\) and omit lower-bounds,
	since we are mainly interested in an \emph{upper-bound} of the parallel complexity.
	Information about lower-bounds is, however, actually required for precise reasoning on
	upper-bounds. For example, consider the process 
	$$\In{a}{}.\Out{b}{} \PAR \tick.\Out{a}{}.\In{b}{}$$
	With intervals, \(a\) have the usage
	\(\IU{[0,0]}{[1,1]}\PAR \OU{[1,1]}{0}\)
	and so $b$ has the usage
	\(\OU{[1,1]}{[0,0]}\PAR \IU{[1,1]}{[0,0]}\), and the parallel complexity of
	the process can be precisely inferred to be \(1\).
	
	If we set lower-bounds to \(0\) and assign to $a$ the usage 
	\(\IU{[0,0]}{[0,1]}\PAR \OU{[0,1]}{0}\),
	then the usage of \(b\) can only be:
	\(\OU{[0,1]}{\textcolor{red}{1}}\PAR \IU{[0,1]}{\textcolor{red}{1}}\).
	Note that according to the imprecise usage of \(a\),
	the output on \(b\) may become ready at time \(0\) and then have to wait for one time unit
	until the input on \(b\) becomes ready;
	thus, the capacity of the output on \(b\) is \(1\), instead of \([0,0]\).
	An upper-bound of the parallel complexity would therefore be inferred to be \(1+1=2\) (because the usages tell us that
	the lefthand side process may wait for one time unit at \(a\),
	and then for another time unit at \(b\)), which is too imprecise.
	
	We remark that this problem does not come for an inappropriate definitions of usages with only upper-bound in our work. Indeed, by adapting the usage type system given in \cite{KobayashiTypeSystemLockFree}, we would have the same imprecision. In the same way, trying to give a notion of reliability that makes the usage
	\(\OU{[0,1]}{\textcolor{red}{0}}\PAR \IU{[0,1]}{\textcolor{red}{0}}\)
	reliable would lead to an unsound type system, as it would make the subusage relation less flexible, which is essential for soundness.
	\qed
\end{example} 

Let us also present how sizes and polymorphism over indices in servers can type processes defined by replication such as the factorial. Please note that by taking inspiration from the typing in \cite{BaillotGhyselen21}, using the type representation given in the Appendix~\ref{ss:comparison}, more complicated examples of parallel programs such as the bitonic sort could be typed in our setting with a good complexity bound. 

\begin{example}[Factorial]\label{e:factorial}
  Assume a function on expressions $\mathtt{mult} : \Nat[I,J] \times \Nat[I',J'] \rightarrow \Nat[I * I',J * J']$. In practice, this should be encoded as a server in $\pi$-calculus, but for simplicity, we consider it as a function. We will describe the factorial and count the number of multiplications with $\tick$.
  We write $\Nat[I]$ to denote $\Nat[I,I]$. We use the usual notation $I!$ to represent the factorial function in indices. The process representing factorial and its typing derivation are given in Figure~\ref{f:factorial}. The following type $T$ denotes:
	$$ \ServT{i}{[0,i]}{\Nat[i],\ChT{\Nat[i!]}{\OU{[i,i]}{0}}}{(!\IU{[0,0]}{\infty}. \OU{[0,\infty]}{0})} $$ 
	Denoting a server taking an integer as input, and a return channel on which the factorial of this integer is sent, in $i$ units of time. The usage of this server describes that it can be called anytime. This type is reliable and it would be reliable even if composed with any kind of output $\OU{A_o}{0}$ if we want to call this server.
	Let:
	$$ T' = \ServT{i}{[0,i]}{\Nat[i],\ChT{\Nat[i!]}{\OU{[i,i]}{0}}}{\OU{[0,\infty]}{0})} $$
	$$ S = \ChT{\Nat[(i \minus 1)!]}{(\OU{[i \minus 1,i \minus
            1]}{0} \PAR \IU{[0,0]}{[i \minus 1,i \minus 1]})} =
        S_1 \PAR S_2 $$ where $S_1$ and $S_2$ are obtained by the expected separation of the usage. This type $S$ is reliable under $(i);(i \ge 1)$. Thus, we give the typing described in Figure~\ref{f:factorial}. From the type of $f$, we see on its complexity $[0,i]$ that it does at most a linear number of multiplications. Note that the constraints that appear in a $\mathsf{match}$ are useful since without them, we could not prove $i;(i \le 0) \q i! = 1$ and $i;(i \ge 1) \q i * (i \minus 1)! = i!$. Moreover, polymorphism over indices is necessary in order to find that the recursive call is made on a strictly smaller size $i \minus 1$. \qed
	
	\begin{figure*}
			\footnotesize
		\begin{framed} 
			$$  P := !\In{f}{n,r}. \pifn{n}{\Out{r}{1}}{m}{(\nu r') (\Out{f}{m,r'} \PAR \In{r'}{x}. \tick. \Out{r}{mult(n,x)})} $$		
			\begin{prooftree} 
				\AXC{}
				\UIC{$i;\cdot;n : \Nat[i] \p n : \Nat[i]$}
				\AXC{$i;(i \le 0) \q i! = 1$} 
				\doubleLine
				\UIC{$i;i \le 0; f : T', n : \Nat[i], r : \ChT{\Nat[i!]}{\OU{[i,i]}{0}} \p \Out{r}{1} \C [0,i] $}
				\AXC{$\pi_1$}
				\TIC{$i;\cdot; f : T', n : \Nat[i], r : \ChT{\Nat[i!]}{\OU{[i,i]}{0}} \p \pifn{n}{\Out{r}{1}}{m}{\cdots} \C [0,i] $}
				\UIC{$\void; f : T \p !\In{f}{n,r}. \pifn{n}{\Out{r}{1}}{m}{\cdots} \C [0,0]$}
			\end{prooftree} 
			with the main branch of $\pi_1$ being:
			\begin{prooftree}
				\scriptsize 
				\AXC{$\cdots$}
				\AXC{$(i;i\ge 1) \q i * (i \minus 1)! = i!$} 
				\doubleLine 
				\UIC{$i;i \ge 1;n : \Nat[i] , x : \Nat[(i \minus 1)!] \p mult(n,x) \COL \Nat[i!]$} 
				\UIC{$i;i \ge 1;n : \Nat[i] , x : \Nat[(i \minus 1)!], r : \ChT{\Nat[i!]}{\OU{[0,0]}{0}} \p \Out{r}{mult(n,x)} \C [0,0]$}
				\UIC{$i;i \ge 1;n : \Nat[i] , x : \Nat[(i \minus 1)!], r : \ChT{\Nat[i!]}{\OU{[1,1]}{0}} \p \tick. \Out{r}{mult(n,x)} \C [1,1]$}
				\UIC{$i;i \ge 1;n : \Nat[i] , r' : S_2, r : \ChT{\Nat[i!]}{\OU{[i,i]}{0}} \p \In{r'}{x}. \cdots \C [0,i]$}
				\BIC{$i;i \ge 1;n : \Nat[i], m : \Nat[i \minus 1], r : \ChT{\Nat[i!]}{\OU{[i,i]}{0}}, f : T', r' : S \p \Out{f}{m,r'} \PAR \In{r'}{x}. \cdots \C [0,i]$}
				\UIC{$i;i \ge 1;n : \Nat[i], m : \Nat[i \minus 1], r : \ChT{\Nat[i!]}{\OU{[i,i]}{0}}, f : T' \p (\nu r') \cdots \C [0,i]$}
			\end{prooftree}
		\end{framed} 
		\caption{Representation and Typing of Factorial}
		\label{f:factorial}
	\end{figure*}
	
\end{example} 

Let us now justify the use of this operator $J_c;K$ in order to treat complexity. 

\begin{example}[Deadlock]
	Let us consider the process 
	\(P := (\nu a) (\nu b) (\In{a}{}. \tick. \Out{b}{} \PAR \In{b}{}. \tick. \Out{a}{}) \).
        \(P\) is typed as shown in Figure~\ref{f:deadlockinf}.  
        %
%
	As $a$ and $b$ have exactly the same behaviour,
        let us focus on the typing of \(\In{a}{}.\tick.\Out{b}{}\).
        The derivation for the subprocess \(\tick.\Out{b}{}\) should be clear.
        By assigning the usage to \(\IU{[0,0]}{[\infty,\infty]}\),
        the cost for \(\In{a}{}.\tick.\Out{b}{}\) is calculated by:
        $[\infty,\infty];K = [0,0]$. Thus, we can correctly infer that
        the complexity of the deadlocked process is \(0\).
\qed	
\begin{figure*}[h]
	\begin{framed} 
	\begin{prooftree}
		\footnotesize  
		\AXC{}
		\UIC{$\void; a : \ChT{}{\ZU}, b : \ChT{}{\OU{[0,0]}{0}} \p \Out{b}{} \C [0,0] $}
		\UIC{$\void; a : \ChT{}{\ZU}, b : \ChT{}{\OU{[1,1]}{0}} \p \tick.\Out{b}{} \C [1,1] $} 
		\UIC{$\void; a : \ChT{}{\IU{[0,0]}{[\infty,\infty]}}, b : \ChT{}{\OU{[\infty,\infty]}{0}} \p \In{a}{}.\tick.\Out{b}{} \C [0,0] $}
		\AXC{symmetry for the other branch}
		\BIC{$\void; a : \ChT{}{(\IU{[0,0]}{[\infty,\infty]} \PAR \OU{[\infty,\infty]}{0})}, b : \ChT{}{(\OU{[\infty,\infty]}{0} \PAR \IU{[0,0]}{[0,0]})} \p \In{a}{}.\tick.\Out{b}{} \PAR \In{b}{}.\tick.\Out{a}{} \C [0,0] $}
		\UIC{$\void;\cdot \p (\nu a) (\nu b) (\In{a}{}. \tick. \Out{b}{} \PAR \In{b}{}. \tick. \Out{a}{}) \C [0,0]$}
	\end{prooftree}
	\end{framed} 
	\caption{Typing of Example~\ref{e:deadlock}}
	\label{f:deadlockinf}
\end{figure*}
\label{e:deadlock} 
\end{example}

\begin{example}
	We describe informally an example for which our system can give a complexity, but fails to catch a precise bound. Let us consider the process:
	\small 
	$$ P := \tick. ! \In{a}{n}. \pifn{n}{\pzero}{m}{\Out{a}{m}} \PAR \Out{a}{10} \PAR \tick. \tick. !\In{a}{n}. \pzero $$
	\normalsize
	This process has complexity $2$. However, if we want to give a usage to the server $a$, we must have a usage:
	$$ !\IU{[1,1]}{0}. \OU{[0,0]}{1} \PAR \OU{[0,0]}{[1,2]} \PAR ! \IU{[2,2]}{0} $$
	We took as obligations the number of ticks before the action, and as capacity the minimal number for which we have reliability. So in particular, because of the capacity $1$ in the usage $\OU{[0,0]}{1}$, typing the recursive call $\Out{a}{m}$ increases the complexity by one, and so typing $n$ recursive calls generates a complexity of $n$ in the type system. So, in our setting, the complexity of this process can only be bounded by $10$. Overall, this type system may not behave well when there are more than one replicated input process
	on each server channel, since an imprecision on a capacity for a recursive call leads to an overall imprecision depending on the number of recursive calls. This issue is the only source
    of imprecision we found with respect to the type system of \cite{BaillotGhyselen21}:
    see the conjecture in Section~\ref{relatedwork}. \qed
	\label{e:servers}
	
\end{example}

\section{Related Work}\label{relatedwork}

Some contributions to the complexity analysis of parallel functional programs by types appear in \cite{HoffmannShaoParallelPrograms,DBLP:conf/popl/GimenezM16} but the languages studied  do not express concurrency. 
Alternatively \cite{Albert_SAS2015,AlbertGuaranteeTerminationCostAnalysis,Giachino2015}  address the problem of analysing the time complexity of distributed or concurrent systems. They provide interesting analyses on some instances of systems
but  do not handle dynamic creation of processes and channel name passing as in the   $\pi$-calculus. Moreover, the  flow graph or rely-guarantee reasoning  techniques
employed in \cite{Albert_SAS2015,AlbertGuaranteeTerminationCostAnalysis}  do not seem to offer the same compositionality
as type systems.

There have recently been several studies on
type-based cost analysis for binary or multiparty session
calculi~\cite{Castro-PerezYoshida20,DasHoffmannPfenningTemporalSessionTypes,DasHoffmannPfenningLICS2018}.
It is not clear whether and how those methods can be extended to deal with more general concurrent processes
that can be written in the \(\pi\)-calculus,
where there may be more than one sender/receiver process for each channel.
Among those studies, Das et al.'s work~\cite{DasHoffmannPfenningTemporalSessionTypes} seems technically
closest to ours. Both their cost models and ours are parametric, as they rely on a similar
$\tick$ operation. Moreover, their temporal operators seem to have a strong correspondence with usages and
operations on them. More specifically, the next operator $\bigcirc$~\cite{DasHoffmannPfenningTemporalSessionTypes}
is similar to the usage operator $\Up{[1,1]}$ in our type system,
and $\Box$ and $\lozenge$ roughly correspond to input and output usages with capacity and
obligations \([0,\infty]\). For more precise comparison, we need to extend our type system
with variant and recursive types, to encode their session calculus into the \(\pi\)-calculus,
following~\cite{Kobayashi2003TypeSystemConcurrent,DardhaetalSessionTypesRevisited}. It is left for future work.

Kobayashi et al. \cite{Kobayashinewtypesystemdeadlockfree,KobayashiTypeBasedInformationFlowAnalysis,Kobayashi2003TypeSystemConcurrent} also used the notion of usages
to reason about deadlocks, livelocks, and information flow, but he used
a single number for each obligation and capacity (the latter is called a ``capability''
in his work). In particular, a usage type system for time boundedness, related with parallel complexity, was given in \cite{KobayashiTypeSystemLockFree}. However, the definition of parallel complexity and thus the definition of usages and reliability in this work is quite different from ours, as its reduction does not take into account some non-deterministic paths. Moreover, as explained in Example~\ref{e:intervals}, the use of a single number and not intervals induces a loss of precision even on simple examples; we have generalized the number to an interval to improve
the precision of our analysis. 
More recently, 
the first two authors proposed in \cite{BaillotGhyselen21} a type system with the same goal of analysing parallel complexity in $\pi$-calculus. 
This type system builds on sized types and input/output types instead of usages. Because of that, they cannot manage successive uses of the same channel as in Example~\ref{e:semaphore}, as their names can essentially be used at only one specific time. 
In most cases, the time annotation used for channels in their setting corresponds to the sum of the lower bound for obligation and the upper bound for capacity in our setting. We conjecture the following result:
\begin{conjecture}[Comparison with \cite{BaillotGhyselen21}]
	Suppose given a typing $\fifi; \Gamma_{i/o} \p_{i/o} P \C J$ in the input/output sized type system of \cite{BaillotGhyselen21}, such that this process $P$ has a linear use of channels.
	 Then, there exists a reliable context $\Gamma$ such that $\fifi; \Gamma \p P \C [0,J]$. 
\end{conjecture}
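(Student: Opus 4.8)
The plan is to translate a derivation of $\fifi; \Gamma_{i/o} \p_{i/o} P \C J$ in the input/output sized type system of \cite{BaillotGhyselen21} into a derivation $\fifi; \Gamma \p P \C [0,J]$ of the present system with $\Gamma$ reliable, by induction on the source derivation. The lower bound $0$ costs nothing: the prefix rules (ich), (och), (oserv) of Figure~\ref{f:usagetype} always set the lower bound of the complexity to $0$ through the $J_c;K$ operation of Definition~\ref{def:seqcomp}, the rules (zero), (par), (tick), (case), (nu) cannot turn zero lower bounds into a positive one, and (subtype) finally relaxes any $[I,J]$ to $[0,J]$. So the real content is the upper bound and reliability.

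First I would fix the type translation $\llbracket\cdot\rrbracket$. It is the identity on $\Nat$ types. A server channel of \cite{BaillotGhyselen21} is sent to a server type of the shape used in Example~\ref{e:factorial}: its complexity annotation $K$ becomes $[0,K]$, and its usage is $!\IU{[0,0]}{\infty}.\ZU$ composed in parallel with one output usage $\OU{[0,\infty]}{0}.\ZU$ per syntactic call of the server — a usage already observed there to be reliable and to remain reliable under further parallel outputs, which is exactly what (oserv) and the (nu) side condition need. For a simple channel annotated with time $t$, the linearity hypothesis means $P$ contains at most one input and at most one output on it; the translated usage is the parallel composition of at most one input action and at most one output action, each given obligation $[\ell,\ell]$ where $\ell$ is the number of $\tick$'s the source derivation has accumulated before that action, and capacity the single index $t\minus\ell$. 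With this choice, unfolding $\IPlus$ shows that $\LeA(A_o)+\RiA(J_c)=t$ — matching the remark preceding the conjecture — and that both side conditions of the usage-reduction rule of Figure~\ref{f:usagered} for this input/output pair, $A_o^{\mathrm{out}}\subseteq A_o^{\mathrm{in}}\IPlus J^{\mathrm{in}}$ and $A_o^{\mathrm{in}}\subseteq A_o^{\mathrm{out}}\IPlus J^{\mathrm{out}}$, collapse respectively to $\ell^{\mathrm{out}}\le t$ and $\ell^{\mathrm{in}}\le t$. Extending $\llbracket\cdot\rrbracket$ pointwise to contexts, the derivation is then translated rule by rule, each rule of \cite{BaillotGhyselen21} being matched by the homonymous rule of Figure~\ref{f:usagetype}: the $\tick$ and prefix rules correspond to our context operations $\Up{[1,1]}$ and $\Up{J_c}$ (so that, as noted, an action $\ell$ ticks deep ends up with obligation exactly $[\ell,\ell]$), and uses of subsumption in the source are matched by (subtype) together with the subusage laws of Figure~\ref{f:subusage} (including the one used in Example~\ref{e:substitution} when channels are passed as messages).

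The step I expect to be the main obstacle is reliability of the translated usages. For a simple channel with only an input action, or only an output action, the usage has no reduction and is trivially reliable; with both, it has exactly one reduction step, whose two preconditions are, by the computation above, $\ell^{\mathrm{out}}\le t$ and $\ell^{\mathrm{in}}\le t$ — which hold because a source derivation placing the communication at time $t$ forces both the sender and the receiver to be ready by $t$ — and the reduct is $\Up{A_o^{\mathrm{in}}\Ilub A_o^{\mathrm{out}}}(\ZU\PAR\ZU)\congr\ZU$; so no reduction reaches $\Error$. For servers one reuses verbatim the reliability computation of Example~\ref{e:factorial}. Two further points need care, and are where a full proof would stop being routine. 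First, the induction hypothesis must be strengthened to fix the context exactly — not ``$P$ is typable'' but ``$P$ is typable with the translation of the source context, each usage cut down to the actions that remain to be performed'' — so that contexts glue under (par) and prefixes, and one must check along the way that the uses of (subtype) and of the delaying operations keep every usage reliable (via Lemma~\ref{l:subusageprop}(3) and Lemma~\ref{l:delayinginvariance}(3)), so that (nu) is always dischargeable; in particular a channel created in the source but on which $P$ never synchronizes must, as in Example~\ref{e:deadlock}, receive capacity $[\infty,\infty]$, which makes $J_c;K=[0,0]$ and is still reliable. Second, for the bound one checks along the induction that each rule of Figure~\ref{f:usagetype} emits an upper bound no larger than its source counterpart: the only nontrivial case is a prefix, where the new rule charges $\RiA(J_c)=t\minus\ell$ whereas the source charges the full time $t$ to the continuation, and the $\ell$ ticks in between are charged identically on both sides through (tick); hence the translated derivation yields upper bound at most $J$, giving $\fifi;\Gamma\p P\C[0,J]$ with $\Gamma$ reliable.
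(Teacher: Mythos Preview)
The statement is presented in the paper as a \emph{conjecture}, not a theorem: the paper gives no proof, only an informal sketch of the intended type translation in Appendix~\ref{ss:comparison}. Your proposal is therefore not to be compared against a proof but against those intuitions, and it is broadly consistent with them: induction on the source derivation, servers translated to the usage of Example~\ref{e:factorial}, simple channels translated to a parallel composition of (at most) one input and one output action with singleton obligations recording the accumulated ticks, and the observation that $\LeA(A_o)+\RiA(J_c)$ recovers the source time annotation. You also correctly isolate reliability at (nu) and the strengthening of the induction hypothesis as the non-routine parts.

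Two points where your sketch diverges from, or is thinner than, the paper's intuitions. First, the paper suggests using \emph{interval} capacities such as $[I_2{-}I_1,I_2{-}I_1]$ (adapted to the dual obligation) rather than the single-index capacities $t{-}\ell$ you use throughout; your choice still passes the reliability check you wrote out, but the delaying $\Up{J_c}$ then only shifts upper bounds of obligations, and you would need to verify that this looser lower bound does not break the invariant linking obligations to tick-depth along the induction. Second, and more substantively, you do not address the (case) rule beyond the lower-bound remark. In Figure~\ref{f:usagetype} both branches of a match share the \emph{same} context $\Gamma$, so a channel used differently in the two branches needs a single usage that is a subusage of both; the paper's appendix explicitly says this is where the choice constructor $+$ is required (``choice in usages $U_1+U_2$ is used to put together the different usages we obtain in the two branches of a pattern matching''). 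Your translation as stated produces one action per syntactic occurrence and gives no mechanism to merge the branch usages, so the inductive step for (case) would not go through without incorporating $+$ into the translation.
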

 More details and some intuitions are given in the Appendix~\ref{ss:comparison}. So, on a simple use of names our system is strictly more precise if this conjecture is true. However, on other cases, like in Example~\ref{e:servers}, their system is more precise as the loss of precision because of usage does not happen in their setting. On the contrary, our setting has fairly more precision for processes with a non-trivial use of channels, as in Example~\ref{e:semaphore}. 

There have also been studies on implicit computational complexity for process calculi, albeit for less expressive calculi than the pi-calculus \cite{MadetAmadioELL,DiGiamberardinoDalLagoSessionTypePolynomial,DalLagoMartiniSangiorgi16}. Unlike our work,
they consider the work rather than the span, and characterize complexity classes, rather than estimating the precise execution time of a given process. The paper \cite{DemangeonYoshidaSafePiCalculus} by contrast considers the $\pi$-calculus and causal (parallel) complexity, but the goal here is also to delineate a characterization of polynomial complexity.




\section{Conclusion} 

We presented a type system built on sized types and usages such that a type derivation for a process gives an upper bound on the parallel complexity of this process. The type system relies on intervals in order to give an approximation of the sizes of integers in the process, and an approximation of the time an input or an output needs to synchronize. In comparison to \cite{BaillotGhyselen21}, we showed with examples that our type system can type some concurrent behaviour that was not captured in their type system, and on a certain subset of processes, we conjecture that our new type system is strictly more precise. 

Building on previous work by 
 the third author on type inference for usages~\cite{KobayashiTypeBasedInformationFlowAnalysis,KobayashietAlDeadlockFreeCalculus}, we plan to investigate type inference, with the use of constraint solving procedures for indices.

\newpage 

\bibliography{Bibliography}

\newpage 

\appendix 
\section*{Appendix}

\section{
Comparison with \cite{BaillotGhyselen21}} 
\label{ss:comparison}

In this section, we give intuitively a description of how to simulate types on \cite{BaillotGhyselen21} in a linear setting with usage. We say that a process has a linear use of channels if it use channel names at most one time for input and at most one time for output. For servers, we suppose that the replicated input is once and for all defined at the beginning of a process, and as free variables it can only use others servers.
In their type system, a channel is given a type $\texttt{Ch}_I(\seq{T})$ where $I$ is an upper bound on the time this channel communicates. It can also be a variant of this type with only input or only output capability. Such a channel would be represented in out type system by a type $\ChT{\seq{\underline{T}}}{(\IU{[I_1,I_1]}{J_c^1} \PAR \OU{I_2,I_2}{J_c^2})}$ where either $J_c^1$ is $0$ and then $I_1 \le I$ , either $J_c^1 = [J_1,J_1]$ and then $I_1 + J_1 \le I$. We have the same thing for $J_c^2$ and $I_2$. To be more precise, the typing in our setting should be a non-deterministic choice (using $+$) over such usages, and the capacity should adapt to the obligation of the dual action in order to be reliable. So, for example if $I_1 \le I_2$, then we would take: 
$\ChT{\seq{\underline{T}}}{(\IU{[I_1,I_1]}{[I_2 \minus I_1,I_2 \minus I_1]} \PAR \OU{I_2,I_2}{0})}$.
Note that this shape of type adapts well to the way time is delayed in their setting. For example, the $\tick$ constructor in their setting make the time advance by $1$, and in our setting, then we would obtain the usage $(\IU{[I_1 + 1,I_1 + 1]}{[I_2 \minus I_1,I_2 \minus I_1]} \PAR \OU{I_2 + 1,I_2 + 1}{0})$ and we still have $I_2 \minus I_1 = (I_2 + 1) \minus (I_1 + 1)$. 

In the same way, in their setting when doing an output (or input), the time is delayed by $I$. Here, with usages, it would be delayed by $J_c$ which is, by definition, a delay of the shape $\Up{[J,J]}$ with $J \le I$. So, we would keep the invariant that our time annotation have the shape of singleton interval with a smaller value than the time annotation in their setting. 

For servers, in the linear setting, their types have the shape: $~_I \ServT{\seq{i}}{J}{\seq{T}}{}$ where $I = 0$ is again a time annotation giving an upper bound on the time the input action of this server is defined, and $J$ is a complexity as in our setting. So, in our setting it would be:
$$ \ServT{\seq{i}}{[0,J]}{\seq{\underline{T}}}{!\IU{[0,0]}{\infty}. !\OU{[0,\infty]}{0} \PAR ! \OU{[0,\infty]}{0}} $$
 Note that this usage is reliable. The main point here is this infinite capacity for input. Please note that because of our input rule for servers, it does not generates an infinite complexity. However, it imposes a delaying $\Up{[0,\infty]} ! \Gamma$ in the context. Because of the shape we gave to types, it means that the context can only have outputs for other servers as free variables, but this was the condition imposed by linearity. Note that in \cite{BaillotGhyselen21}, they have a restriction on the free variables of servers that is in fact the same restriction so it does not harm the comparison to take this restriction on free variables. As an example, the bitonic sort described in \cite{BaillotGhyselen21} could be typed similarly in our setting with this kind of type.  

Finally, choice in usages $U_1 + U_2$ is used to put together the different usages we obtain in the two branches of a pattern matching.

\section{Proofs}
In this section, we prove Theorem~\ref{t:parallelsubjectreduction},
after giving various lemmas. 

\subsection{Properties of Subusage}
\label{ss:subusageprop}

The subusage relation satisfies some properties that are essential for the soundness theorem. First, we have the usual properties of subusage: 

\begin{lemma}
	If $\fifi \q B_o \subseteq A_o$ then $\fifi \p (\Up{A_o}{U}) \SubU (\Up{B_o}{U})$
	\label{l:subusagedelay}
\end{lemma}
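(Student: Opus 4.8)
The plan is to proceed by structural induction on the usage $U$. The crucial observation is that the delaying operation is homomorphic with respect to $\PAR$, $+$ and $!$, and acts non-trivially only on the outermost prefixes: $\Up{A_o}(\AU{\alpha}{C_o}{J_c}.U') = \AU{\alpha}{A_o + C_o}{J_c}.U'$, leaving the continuation $U'$ untouched. Consequently, to compare $\Up{A_o}U$ with $\Up{B_o}U$ it is enough to compare, prefix by prefix, obligations of the form $A_o + C_o$ against $B_o + C_o$, and this comparison should come directly from the subusage rule that lets one strengthen an obligation (the last but one rule of Figure~\ref{f:subusage}).

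For the atomic case I would first record the monotonicity of interval addition: if $\fifi \q B_o \subseteq A_o$, i.e. $\fifi \q \LeA(A_o) \le \LeA(B_o)$ and $\fifi \q \RiA(B_o) \le \RiA(A_o)$, then for any interval $C_o$ one has $\fifi \q (B_o + C_o) \subseteq (A_o + C_o)$, since $[I,J] + [I',J'] = [I+I',J+J']$ and addition is monotone on $\NNi$ (the conventions $\infty + J = \infty$ from the Notations for Indices cover the infinite endpoints). Hence for $U = \AU{\alpha}{C_o}{J_c}.U'$, the obligation-strengthening rule, applied with unchanged capacity $J_c$, yields $\fifi \p \AU{\alpha}{A_o+C_o}{J_c}.U' \SubU \AU{\alpha}{B_o+C_o}{J_c}.U'$, which is exactly $\fifi \p \Up{A_o}U \SubU \Up{B_o}U$. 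For $U = \ZU$ both sides are $\ZU$ and $\fifi \p \ZU \SubU \ZU$ holds (e.g. by $\fifi \p W \SubU \ZU$).

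For the compositional cases, if $U = U_1 \PAR U_2$ the induction hypothesis gives $\fifi \p \Up{A_o}U_k \SubU \Up{B_o}U_k$ for $k = 1,2$; applying the $\PAR$-congruence rule for subusage to one component at a time, together with commutativity of $\PAR$ and transitivity of $\SubU$, we get $\fifi \p \Up{A_o}U_1 \PAR \Up{A_o}U_2 \SubU \Up{B_o}U_1 \PAR \Up{B_o}U_2$. The case $U = U_1 + U_2$ is handled identically via the two $+$-congruence rules, and $U = {!}U'$ follows from the induction hypothesis on $U'$ and the rule $\fifi \p W \SubU W' \Rightarrow \fifi \p {!}W \SubU {!}W'$.

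There is no substantial obstacle here: the statement is a routine structural induction. The only point requiring minimal care is the arithmetic with $\infty$ inside interval addition, which is dispatched by the conventions fixed for operations on indices; everything else is a direct application of the rules of Figure~\ref{f:subusage}.
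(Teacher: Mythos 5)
Your proof is correct and follows exactly the route the paper intends: the paper dispatches this lemma with the single remark that it ``can be proved easily by induction on $U$,'' and your structural induction — reducing the prefix case to monotonicity of interval addition plus the obligation-strengthening rule, and handling $\PAR$, $+$, $!$ via the corresponding congruence rules together with transitivity — is precisely that argument, spelled out.
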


This can be proved easily by induction on $U$.

\begin{lemma}[Properties of Subusage]
	For a set of index variables $\varphi$ and a set of constraints $\Phi$ on $\varphi$ we have:
	\begin{enumerate}
		\item If $\fifi \p U \SubU V$ then for any interval $A_o$, we have $\fifi \p \Up{A_o}{U} \SubU \Up{A_o}{V}$. 
		\item If $\fifi \p U \SubU V$ and $\fifi \p V \ured V'$, then there exists $U'$ such that $\fifi \p U \ured^* U'$ and $\fifi \p U' \SubU V'$ (with $\Error \SubU U$ for any usage $U$)
		\item If $\fifi \p U \SubU V$ and $U$ is reliable under $\fifi$ then $V$ is reliable under $\fifi$. 
	\end{enumerate}
	\label{l:subusageprop}
\end{lemma}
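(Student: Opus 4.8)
My plan is to prove the three points in order, using the first as an ingredient for the second and the second for the third.

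\emph{Point (1).} I would argue by induction on the derivation of \(\fifi\p U\SubU V\). Beyond the induction hypothesis the only facts needed are algebraic identities for the delaying operation, all immediate from its definition: \(\Up{A_o}\) distributes over \(\PAR\), \(+\) and \({!}\); on a guarded usage \(\Up{A_o}\AU{\alpha}{B_o}{J_c}.W=\AU{\alpha}{A_o+B_o}{J_c}.W\); delays compose, \(\Up{A_o}\Up{B_o}W=\Up{A_o+B_o}W\) (associativity of \(+\) on intervals); and \(\Up{A_o}\) preserves \(\equiv\) (a routine sub-induction on the generating rules of usage congruence). With these, the context rules, the congruence rule and the transitivity rule are direct; the obligation-strengthening rule uses that \(B_o\subseteq A_o'\) implies \(A_o+B_o\subseteq A_o+A_o'\); and the last (``pull-out'') rule is sent by \(\Up{A_o}\) to another instance of itself, with \(B_o\) replaced by \(A_o+B_o\).

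\emph{Point (2).} This is the technical heart. First, it suffices to prove the multi-step variant: if \(\fifi\p U\SubU V\) and \(\fifi\p V\ured^* V'\), then there is \(U'\) with \(\fifi\p U\ured^* U'\) and (\(\fifi\p U'\SubU V'\) or \(U'=\Error\)); the statement of the lemma is the one-step special case, and the multi-step form follows from the one-step form by induction on the length of \(V\ured^* V'\), which is also what makes the transitivity rule harmless (it otherwise causes a mismatch of induction measures). The obstacle for the one-step argument is that \(\SubU\) does not preserve the syntactic shape of a usage: the pull-out rule moves sub-usages across prefixes up to a \(\Up{}\)-delay, pending choices \(+\) may still have to be resolved, and \({!}\) may be unfolded. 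I would therefore first establish \emph{inversion lemmas} for \(\SubU\), by induction on the subusage derivation, characterising \(U\) up to choice-resolution steps. The representative one is: if \(\fifi\p U\SubU \AU{\alpha}{A_o}{J_c}.V_1\PAR V_2\), then \(\fifi\p U\ured^* U^\sharp\) (choice steps only) with \(U^\sharp\equiv \AU{\alpha}{B_o}{I_c}.U_1\PAR\Up{B_o+I_c}U_1'\PAR U_2\), \(\fifi\q A_o\subseteq B_o\), \(\fifi\q I_c\le J_c\), \(\fifi\p U_1\PAR U_1'\SubU V_1\) and \(\fifi\p U_2\SubU V_2\); with companion statements for \(U\SubU\ZU\), \(U\SubU V_1+V_2\) and \(U\SubU{!}V\) (the latter exposing arbitrarily many \(V\)-like copies, which is what lets replication be handled). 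The delicate case in proving these is transitivity, where two inversions must be composed and the \(\Up{}\)-delayed residuals re-absorbed into the guard.

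Given the inversion lemmas, a reduction step \(\fifi\p V\ured V'\) — a synchronisation of a dual pair in \(V\), a choice \(V_1+V_2\ured V_i\), or a step to \(\Error\) — is simulated from \(U\): applying the inversion lemma once per action one locates inside \(U^\sharp\) a dual pair whose obligations are contained in, and capacities bounded by, those of the pair in \(V\). Either the side-conditions of Figure~\ref{f:usagered} hold for that pair in \(U\) — they are \emph{stricter} than those in \(V\) — and then \(U\ured^* Z'\); verifying \(\fifi\p Z'\SubU V'\) reduces to interval inequalities comparing the residual delays \(\Up{B_o\Ilub B_o'}\), \(\Up{B_o+I_c}\), \(\Up{B_o'+J_c'}\) on the \(U\)-side with \(\Up{A_o\Ilub B_o''}\) on the \(V\)-side, and these hold precisely because of the synchronisation conditions, together with point (1) and the monotonicity of \(\IPlus\), \(\Ilub\), \(+\) and \(\Up{}\) with respect to interval inclusion. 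Or those side-conditions fail, in which case \(U\ured\Error\) and \(\Error\SubU V'\) by the stated convention. Then one recurses. The step I expect to be hardest is exactly this synchronisation case — reconciling the two independent sources of \(\Up{}\)-delays, those introduced by \(\SubU\) via pull-out and those introduced by reduction via \(\Up{A_o\Ilub B_o}\) — together with pushing the inversion lemmas (notably their transitivity case) through uniformly in the presence of replication, which is why I go via inversion rather than inducting on the subusage derivation directly in point (2).

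\emph{Point (3)} is then a short consequence. If \(\fifi\p U\SubU V\) and \(V\) is not reliable, some reduction gives \(V\ured^* V_1\ured\Error\). The multi-step form of point (2) applied to \(V\ured^* V_1\) yields \(U\ured^* U_1\) with \(\fifi\p U_1\SubU V_1\) or \(U_1=\Error\); in the latter case \(U\) is already unreliable. Otherwise point (2) applied to the single step \(V_1\ured\Error\) yields \(U_1\ured^* U_2\) with \(\fifi\p U_2\SubU\Error\) or \(U_2=\Error\); since in the relation extended by \(\Error\SubU W\) nothing but \(\Error\) lies below \(\Error\), we get \(U_2=\Error\), hence \(U\ured^*\Error\) and \(U\) is unreliable, contradicting the hypothesis. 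Therefore \(V\) is reliable.
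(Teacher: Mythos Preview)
Your overall plan is sound and Points (1) and (3) match the paper's treatment. The substantive difference is in Point (2).

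For Point (2), the paper takes a different decomposition. Rather than proving inversion lemmas for $\SubU$ against specific target shapes, it first \emph{normalises the subusage derivations}: it defines $\SubU_{ct}$ as the subusage rules of Figure~\ref{f:subusage} minus transitivity and congruence, then $\SubU_t$ as one $\SubU_{ct}$-step sandwiched between congruences, and shows that $\SubU$ is exactly the reflexive--transitive closure of $\SubU_t$. This pushes transitivity and congruence to the outside once and for all. It then gives an \emph{exhaustive syntactic description} of $U\SubU_t V$ (seven cases, each of the form $U\equiv U_0\PAR *W$, $V\equiv U_0\PAR *W'$ with $*\in\{\cdot,!\}$) and proves the one-step simulation ``if $U\SubU_t V$ and $V\ured V'$ then $U\ured^* U'$ with $U'\SubU V'$'' by case analysis on that description; the full statement follows by iterating along the $\SubU_t$-chain. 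The interval check in the pull-out case is exactly the computation you sketch, namely showing $A_o\Ilub B_o\subseteq A_o+J_c$ from $B_o\subseteq A_o\IPlus J_c$.

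Your inversion-lemma route is a legitimate alternative, but it front-loads the difficulty you yourself flag: the transitivity case of each inversion lemma must compose two inversions and re-absorb the two independently generated $\Up{}$-residuals, and the replication case must quantify over arbitrarily many unfolded copies. The paper's normalisation avoids this entirely, since after the decomposition each $\SubU_t$-step touches exactly one sub-usage position and transitivity is handled by a trivial outer induction. In short: both approaches work, but the paper's buys a much shorter case analysis at the price of one preparatory lemma, whereas yours would require stating and proving a family of inversion lemmas whose transitivity cases are themselves non-trivial.
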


The proof of the first lemma is rather easy, but for the second lemma it is a bit cumbersome. Intuitively all the base rules for subusage satisfy this and the main difficulty is to show that the congruence rule and context rule conserve those properties. 

We now give elements of proof for Point~2 and Point~3 of Lemma~\ref{l:subusageprop}. For Point~2, we can see two possible directions, either proceed by induction on $U \SubU V$ and then do a case analysis on $V \ured V'$ on proceed by induction on $V \ured V'$ and then do a case analysis on $U \SubU V$. In both cases, the definition of subusage with the transitivity rule and congruence that can be used everywhere make the proof complicated. So, we chose to first simplify the definition of subusage. 

\begin{definition}[Decomposition of Subusage]
	Let us call $\SubU_{ct}$ the relation defined by the rule of Figure~\ref{f:subusage} without using congruence and transitivity. 
	Then, we define $\SubU_t$ as:
	\begin{prooftree}
		\AXC{$U \congr U' \qquad \fifi \p U' \SubU_{ct} V' \qquad V' \congr V'$}
		\UIC{$\fifi \p U \SubU_t V$}
	\end{prooftree}
\end{definition} 

And we have the following lemma.

\begin{lemma}[Decomposition of Subusage]
	The subusage relation $\SubU$ is equivalent to the reflexive and transitive closure of $\SubU_t$.
	\label{l:decomposesubusage}
\end{lemma}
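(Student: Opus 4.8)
The plan is to prove the two inclusions $\SubU_t^{*}\subseteq\SubU$ and $\SubU\subseteq\SubU_t^{*}$ separately. The first is routine: every rule of $\SubU_{ct}$ is one of the rules of Figure~\ref{f:subusage}, so $\SubU_{ct}\subseteq\SubU$; wrapping a $\SubU_{ct}$-step with congruences on both sides is exactly an instance of the congruence rule of $\SubU$ (which allows $\equiv$ on either side, since $\equiv$ is symmetric), hence $\SubU_t\subseteq\SubU$; and since $\SubU$ is transitive (explicit rule) and reflexive (a short structural induction on $U$, using the instance $A_o=B_o$, $I_c=J_c$ of the prefix-weakening rule together with the context rules), the least reflexive and transitive relation containing $\SubU_t$, namely $\SubU_t^{*}$, is contained in $\SubU$.

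For the converse I would first record two auxiliary facts. Fact (i): $\SubU_{ct}$ is reflexive, by structural induction on $U$, using $U\SubU_{ct}\ZU$ when $U=\ZU$, the reflexive instance of the prefix-weakening rule for $\AU{\alpha}{A_o}{J_c}.U'$, and the context rules for $\PAR$, $+$ and $!$. As an immediate consequence $\equiv\ \subseteq\ \SubU_t$ (instantiate the definition of $\SubU_t$ with $U\equiv U$, $U\SubU_{ct}U$, $U\equiv V$), so any chain of the form $U\equiv\cdots\SubU_t^{*}\cdots\equiv V$ collapses to a single $\SubU_t^{*}$ derivation. Fact (ii): $\SubU_t$, and hence $\SubU_t^{*}$, is closed under each one-hole context appearing in Figure~\ref{f:subusage}, namely $[\cdot]\PAR V$, $V\PAR[\cdot]$, $[\cdot]+V$, $V+[\cdot]$, $!\,[\cdot]$ and $\AU{\alpha}{A_o}{J_c}.[\cdot]$. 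Indeed, if $U\equiv W_1\SubU_{ct}W_2\equiv U'$ witnesses $U\SubU_t U'$, then for any such context $C$ we have $C[U]\equiv C[W_1]$ and $C[W_2]\equiv C[U']$ because $\equiv$ is a congruence, and $C[W_1]\SubU_{ct}C[W_2]$ by the matching context rule of $\SubU_{ct}$ (for $V\PAR[\cdot]$, after first using commutativity of $\equiv$, which compensates for the absence of a right-$\PAR$ rule in Figure~\ref{f:subusage}); hence $C[U]\SubU_t C[U']$, and closure of $\SubU_t^{*}$ follows by applying $C$ stepwise along a $\SubU_t$-path.

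The inclusion $\SubU\subseteq\SubU_t^{*}$ is then proved by induction on the derivation of $\fifi\p U\SubU V$. The axiom rules ($U\SubU\ZU$, $U_1+U_2\SubU U_i$, the prefix-weakening rule, and the last rule of Figure~\ref{f:subusage}) have conclusions that are already instances of $\SubU_{ct}$, hence of $\SubU_t\subseteq\SubU_t^{*}$. Each context rule (on $\PAR$, $+$, $!$, or $\AU{\alpha}{A_o}{J_c}.(-)$) is discharged by the induction hypothesis together with Fact (ii). The transitivity rule uses transitivity of $\SubU_t^{*}$. The congruence rule uses Fact (i): the induction hypothesis gives $U'\SubU_t^{*}V'$, and then the chain $U\equiv U'\SubU_t^{*}V'\equiv V$ collapses to $U\SubU_t^{*}V$.

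I expect the only genuine work to be Fact (ii): this is the step where congruence rewritings must be commuted past the ``core'' single steps $\SubU_{ct}$ under a context, and where the missing right-$\PAR$ context rule has to be recovered via commutativity of $\equiv$. Everything else is bookkeeping over the inductive structure of the derivations, and the two inclusions together yield the claimed equality $\SubU\ =\ \SubU_t^{*}$.
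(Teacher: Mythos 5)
Your proof is correct and follows essentially the same route the paper sketches: induction on the derivation of $\SubU$, showing that congruence steps can be absorbed (your Fact (i), via reflexivity of $\SubU_{ct}$) and that single core steps commute with the one-hole contexts (your Fact (ii)), so that congruence and transitivity can be pushed to the outside. The paper's own proof is only a two-sentence sketch of exactly this argument, so your version is a faithful and adequately detailed elaboration of it.
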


The proof can be done by induction on $\SubU$, and it relies mainly on the fact that congruence can be used in any context, so we can use it at the end. In the same way, as the subusage relation can also be used in any context, the transitivity can be done at the end. 
So, with this lemma we got rid of the complicated rules by putting them always at the end of a derivation. Moreover, note that it is easy to describe exhaustively subtyping for $\SubU_{t}$.
Let us drop the $\fifi$ notation for the sake of conciseness, and we have: 
\begin{lemma}[Exhaustive Description of Subusage]
	Let us use $* \in \{ \cdot, ! \}$. Then, $*U$ denotes either $U$ or $!U$ according to the value of $*$. If $U \SubU_{t} V$, then one of the following cases hold. 
	\footnotesize 
	\begin{align*}
		&U \congr (U_0 \PAR *U_1) \qquad V \congr U_0 \\
		& U \congr U_0 \PAR *(U_1 + U_2) \qquad V \congr U_0 \PAR *U_i \qquad i \in \set{1;2} \\
		& U \congr U_0 \PAR *\AU{\alpha}{A_o}{J_c}. W \qquad V \congr U_0 \PAR *\AU{\alpha}{A_o}{J_c}. W' \qquad W \SubU_{ct} W' \\
		& U \congr U_0 \PAR *(U_1 + U_2) \qquad V \congr U_0 \PAR *(U_1' + U_2) \qquad U_1 \SubU_{ct} U_1' \\
		& U \congr U_0 \PAR *(U_1 + U_2) \qquad V \congr U_0 \PAR *(U_1 + U_2') \qquad U_2 \SubU_{ct} U_2' \\ 
		&U \congr U_0 \PAR *\AU{\alpha}{A_o}{J_c}. W \qquad V \congr U_0 \PAR *\AU{\alpha}{A_o'}{J_c'}. W \qquad A_o' \subseteq A_o \qquad J_c \le J_c'\\ 
		&U \congr U_0 \PAR *((\AU{\alpha}{A_o}{J_c}. W) \PAR (\Up{A_o + J_c}{U_1})) \qquad V \congr U_0 \PAR *\AU{\alpha}{A_o}{J_c}. (W \PAR U_1) 
	\end{align*} 	
	\label{l:exhaustivesubusage}
\end{lemma} 

This proof is done directly by induction on $U \SubU_{ct} V$. The replication context rules works because we have $!(U_0 \PAR W) \congr !U_0 \PAR !W$ and $!!W \congr !W$. We now go back to Point~2 of Lemma~\ref{l:subusageprop}.

\begin{proof} 
	
	We rely on Lemma~\ref{l:decomposesubusage}. So, we will prove first this intermediate lemma: 
	\begin{lemma}
		If $U \subtype_{t} V$ and $V \red V'$, then there exists $U'$ such that $ U \red^* U'$ and $U' \subtype V'$
		\label{l:subusageprop2int}
	\end{lemma}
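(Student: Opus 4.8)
The plan is to leverage the exhaustive description of Lemma~\ref{l:exhaustivesubusage}. Since usage reduction is closed under congruence (the last rule of Figure~\ref{f:usagered}) and $\SubU$ absorbs congruence on its right argument, I would first replace $U$ and $V$ by the explicit congruent representatives that lemma provides: in each of the seven cases there is a shared context $C[\cdot] = U_0 \PAR {*}[\cdot]$ (with ${*}\in\{\cdot,!\}$) and hole contents $H_U, H_V$ with $U \congr C[H_U]$, $V\congr C[H_V]$, the relation $C[H_U]\SubU_{ct}C[H_V]$ arising by applying the parallel/replication context rules to a single ``core'' step relating $H_U$ and $H_V$ (with $H_V$ simply dropped in the first case). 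Given $V\ured V'$, I would fix a direct reduction $C[H_V]\ured\hat V'\congr V'$ and split on whether its redex touches the hole contents $H_V$ --- or, when ${*}={!}$, an unfolded copy of them --- or lies entirely inside $U_0$; it then suffices to produce some $U'$ with $C[H_U]\ured^* U'$ and $U'\SubU\hat V'$.

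When the redex lies in the common part, the same step replays from $C[H_U]$, reducing $U_0$ to some $U_0'$, and then $C'[H_U]\SubU C'[H_V]\congr\hat V'$ for $C'[\cdot]=U_0'\PAR{*}[\cdot]$ by re-applying the same core step and context rules (in the drop case, directly via $U_0'\PAR{*}U_1\SubU U_0'$). When the redex touches the hole, I would use the shape of the case at hand: a top-level choice resolved inside $H_V$ is matched by resolving the matching choice inside $H_U$; and in the cases where $H_V$ exposes a guarded action, the redex is a communication between that action and a dual action $\overline{\alpha}^{C_o}_{K_c}.W'$ drawn from $U_0$, which I would mirror by letting $U$ communicate its own top action with that same dual. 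The bookkeeping here is that the delay $\Up{A_o\Ilub C_o}$ produced by the reduction must be reconciled with the one on $V$'s side, and --- for the ``move under guard'' rule --- with the pre-existing guard $\Up{A_o+J_c}U_1$ on the moved subusage; this should follow from monotonicity of $\Ilub$ and of $\cdot\IPlus\cdot$ in both arguments for $\subseteq$, the inclusion $A_o\IPlus J_c\subseteq A_o+J_c$, Lemma~\ref{l:subusagedelay} ($B_o\subseteq A_o$ gives $\Up{A_o}W\SubU\Up{B_o}W$), Point~1 of Lemma~\ref{l:subusageprop} to push $\SubU$ through a shared delay, and $\SubU_{ct}\subseteq\SubU$ to upgrade the side conditions coming from Lemma~\ref{l:exhaustivesubusage}.

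I expect the hard part to be the obligation-strengthening / capacity-weakening case, where $U$'s top action has obligation $A_o$ and capacity $J_c$ but $V$'s has obligation $A_o'\subseteq A_o$ and capacity $J_c'\ge J_c$: the side conditions that let $V$'s communication fire need not let $U$'s fire, since obligation and capacity both move unfavourably. The way around this is that a mismatch is harmless --- if $U$'s matching communication would violate a side condition, I would instead take the one-step reduction $U\ured\Error$ given by the error rule, and $\Error\SubU\hat V'$ holds by convention. When $U$'s communication does succeed, its reduct differs from $\hat V'$ only by carrying $\Up{A_o\Ilub C_o}$ in place of $\Up{A_o'\Ilub C_o}$, and $A_o'\subseteq A_o$ yields $A_o'\Ilub C_o\subseteq A_o\Ilub C_o$, so Lemma~\ref{l:subusagedelay} closes the gap once more. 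Every case is thus discharged either by replaying the step or by this $\Error$-escape, and the residual effort is the routine checking of these interval inclusions across the seven cases.
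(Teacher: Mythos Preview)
Your proposal is correct and follows essentially the same route as the paper: case analysis via Lemma~\ref{l:exhaustivesubusage}, replaying the step when the redex sits in the shared context $U_0$, mirroring the communication when it touches the hole, and using the $\Error$-escape in the obligation/capacity case together with Lemma~\ref{l:subusagedelay} for the delay mismatch. One small point to tighten in the move-under-guard case: the key inclusion you need is $A_o\Ilub B_o \subseteq A_o + J_c$, and your suggested chain via $A_o\IPlus J_c\subseteq A_o+J_c$ does not quite close it (in general $A_o\not\subseteq A_o+J_c$ when $J_c$ is an interval with positive left bound); the paper instead uses the reduction side condition $B_o\subseteq A_o\IPlus J_c$ directly to bound $\LeA(A_o\Ilub B_o)$ from below by $\LeA(A_o+J_c)$, which is the argument you should use.
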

	Then, if this lemma is proved, we conclude by transitivity of the propriety.  
	So, we only have to prove 
	We will also use the following lemma
	\begin{lemma}
		If $V \congr U_0 \PAR !U_1$ and $V \ured V'$ then $V' \congr W \PAR !U_1$ with $U_0 \PAR U_1 \ured W$
		\label{l:replicationandreduction}
	\end{lemma}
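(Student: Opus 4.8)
The plan is to peel the reduction $V\ured V'$ down to a single \emph{root redex} and then relocate that redex so it fires inside one copy of $U_1$ together with $U_0$, folding the leftover copies back into $!U_1$; this is what allows reductions ``under a replication'' to be mirrored. First I would record the elementary congruence facts that drive the bookkeeping: iterating $!U_1\congr !U_1\PAR U_1$ gives $!U_1\congr !U_1\PAR U_1\PAR\cdots\PAR U_1$ for any number of copies, and in particular $!U_1\PAR U_1\congr !U_1$, which is what lets us discard superfluous copies. Next, by a routine induction on the derivation of $V\ured V'$ that strips off the parallel-context and congruence rules, one obtains a root redex $R$ with $V\congr R\PAR V_2$, $V'\congr R'\PAR V_2$ and $R\ured_0 R'$ a base step: either $R=\IU{A_o}{I_c}. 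U_a\PAR\OU{B_o}{J_c}. U_b$ with $R'=\Up{A_o\Ilub B_o}(U_a\PAR U_b)$, or $R=U_a+U_b$ with $R'\in\set{U_a,U_b}$, or the error rule with $R'=\Error$. In every case the (at most two) prime components of $R$ are guarded actions or a choice, never a replication, which will be crucial.

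With $R\PAR V_2\congr U_0\PAR !U_1$ in hand, the heart of the argument is to locate the components of $R$ inside $U_0\PAR !U_1$ and gather them into $U_0$ plus a single copy of $U_1$. Each prime component of $R$ is a top-level component of $V$, hence, up to $\congr$, a top-level component of $U_0$ or of some finite unfolding of $U_1$. When $R$ has two components they form a distinct input/output pair, and a small lemma --- if $U_1\congr \IU{A_o}{I_c}. U_a\PAR X$ and $U_1\congr\OU{B_o}{J_c}. U_b\PAR Y$ then $U_1\congr\IU{A_o}{I_c}. U_a\PAR\OU{B_o}{J_c}. U_b\PAR Z$ for some $Z$ --- shows they can be surfaced jointly in one copy of $U_1$. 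Combining this with the extraction from $U_0$, I get $U_0\PAR U_1\congr R\PAR Z$ for a suitable $Z$; applying the same base step gives $U_0\PAR U_1\ured R'\PAR Z=:W$. It then remains to check $V'\congr W\PAR !U_1$, i.e.\ $V_2\congr Z\PAR !U_1$: from $R\PAR V_2\congr U_0\PAR !U_1\congr U_0\PAR !U_1\PAR U_1\congr R\PAR(Z\PAR !U_1)$ one cancels $R$, which is legitimate here because the components of $R$ are guarded actions or a choice, for which $\PAR$ is cancellative modulo $\congr$ (unlike replications, where $!A\PAR A\congr !A\PAR \ZU$). The error case is identical, with $W=\Error$ and reading $\Error\PAR X$ as $\Error$.

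The main obstacle is purely combinatorial: justifying the congruence facts used above --- that a prime component of $U_0\PAR !U_1$ really sits in $U_0$ or in an unfolding of $U_1$, that a distinct input/output pair can be surfaced in a single copy of $U_1$, and that guarded actions (and choices) may be $\PAR$-cancelled modulo $\congr$. All three follow from a normal-form analysis of $\congr$ on usages (push every $!\ZU\congr\ZU$, $!(A\PAR B)\congr !A\PAR !B$, $!!A\congr !A$ rewrite to the top, so that a usage reduces to $!B_1\PAR\cdots\PAR !B_m\PAR C_1\PAR\cdots\PAR C_n$ with the multiset $\set{C_1,\dots,C_n}$ of non-replicated primes canonical), but carrying this out carefully --- in particular keeping track that $\congr$ is \emph{not} cancellative in general --- is the delicate part; everything else is reassociation and the absorption identity $!U_1\PAR U_1\congr !U_1$.
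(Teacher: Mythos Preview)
Your proposal is correct and follows essentially the same three-case strategy as the paper's own proof (reduction entirely in $U_0$, entirely in one copy of $U_1$, or a synchronization straddling $U_0$ and $U_1$), but you supply considerably more rigor: you isolate the root redex explicitly, justify via a normal-form argument why its prime components can be located in $U_0$ or a single copy of $U_1$, and make precise the $\PAR$-cancellation step that the paper leaves implicit. The paper's proof is only a brief sketch of the three cases with the parenthetical ``note that one copy is always sufficient'' standing in for your small lemma and normal-form analysis, so your version is a faithful and more detailed elaboration of the same idea.
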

	Indeed, there are three cases for $V \ured V'$. Either it is only a reduction step in $U_0$ independently of $!U_1$, either it is a reduction step within $U_1$ (note that one copy is always sufficient), either it is a synchronization between one action in $U_0$ and one action in $U_1$. In the first case, the lemma is true because we can arbitrarily  add $U_1$. In the same way, in the second case the lemma is correct because we can just ignore $U_0$. In the third case, the lemma is verified since  we allow this synchronization between $U_0$ and $U_1$. 
	
	We now start the proof. We proceed by induction on $U \subtype_{ct} V$, and we use the exhaustive description given by Lemma~\ref{l:exhaustivesubusage}. We always consider the case $* = !$ as it is the harder of the two cases. Let us give some interesting cases:
	\begin{itemize}
		\item 
		\small
		$$ U \congr U_0 \PAR !\AU{\alpha}{A_o}{J_c}. W \qquad V \congr U_0 \PAR !\AU{\alpha}{A_o}{J_c}. W' \hspace{5mm} W \subtype_{ct} W' $$
		\normalsize 
		The easy case is when $V'$ is obtained by a reduction step in $U_0$. So, we suppose that the reduction step is a synchronization between $U_0$ and $\AU{\alpha}{A_o}{J_c}. W'$. So, we have: 
		\small 
		$$ U_0 \congr U_0' \PAR \AU{\overline{\alpha}}{B_o}{I_c}. W_0 $$
		\normalsize 
		If $V' = \Error$, then we take $U' = \Error$ and concludes this case. Otherwise, we have:
		\small  
		$$V' \congr U_0' \PAR !\AU{\alpha}{A_o}{J_c}. W' \PAR  \Up{(A_o \Ilub B_o)}{(W_0 \PAR W')} $$
		\normalsize 
		So, we take 
		\small 
		$$ U' = U_0' \PAR !\AU{\alpha}{A_o}{J_c}. W \PAR \Up{A_o \Ilub B_o}{(W_0 \PAR W)} $$ 
		\normalsize 
		And, we have indeed:
		$$U \ured  U' \qquad U' \SubU V' $$
		The fact that $U' \subtype V'$ is given by the previous point of Lemma~\ref{l:subusageprop}. 
		\item 
		\small 
		$$U \congr U_0 \PAR !\AU{\alpha}{A_o}{J_c}. W \qquad V \congr U_0 \PAR !\AU{\alpha}{A_o'}{J_c'}. W $$ 
		$$ A_o' \subseteq A_o \qquad J_c \le J_c' $$
		\normalsize 
		Again, the only interesting case is when the synchronization is not only in $U_0$. So, we have: 
		\small 
		$$ U_0 \congr U_0' \PAR !\AU{\overline{\alpha}}{B_o}{I_c}. W_0 $$
		\normalsize 
		If we have $A_o \subseteq B_o \IPlus I_c$ and $B_o \subseteq A_o \IPlus J_c$
		then 
		\small 
		$$ V' \congr U_0' \PAR !\AU{\alpha}{A_o'}{J_c'}. W \PAR \Up{A_o' \Ilub B_o}{(W_0 \PAR W)} $$ 
		\normalsize 
		because $A_o' \subseteq A_o$ and $J_c \le J_c'$ so we have $A_o' \subseteq B_o \IPlus I_c$ and $B_o \subseteq A_o \IPlus J_c'$. Thus, we take 
		\small 
		$$ U' = U_0' \PAR !\AU{\alpha}{A_o}{J_c}. W \PAR \Up{A_o \Ilub B_o}{(W_0 \PAR W)} $$
		\normalsize  
		We have indeed $U \ured U'$ and $U' \SubU V'$ by Lemma~\ref{l:subusagedelay}. Otherwise, we obtain an error for $U'$ and so it indeed is a subusage of $V'$. 
		\item 
		\small 
		$$U \congr U_0 \PAR !((\AU{\alpha}{A_o}{J_c}. W) \PAR (\Up{A_o + J_c}{U_1})) \qquad  V \congr U_0 \PAR !\AU{\alpha}{A_o}{J_c}. (W \PAR U_1) $$
		\normalsize 
		Again, if the reduction step $V \red V'$ is a synchronization between subprocesses in $U_0$, it is simple. So let us suppose that:
		\small 
		$$ U_0 \congr U_0' \PAR \AU{\overline{\alpha}}{B_o}{I_c}. W_0 $$
		\normalsize 
		If we have $B_o \subseteq A_o \IPlus J_c$ and $A_o \subseteq B_o \IPlus I_c$, then 
		\small 
		$$ V' \congr U_0' \PAR !\AU{\alpha}{A_o}{J_c}. (W \PAR U_1) \PAR (\Up{A_o \Ilub B_o}{(W \PAR U_1 \PAR W_0)}) $$
		\normalsize  
		We pose $U'$ equal to:
		\small  
		$$ U_0' \PAR !((\AU{\alpha}{A_o}{J}. W) \PAR (\Up{A_o + J_c}{U_1})) \PAR (\Up{A_o \Ilub B_o}{(W \PAR W_0)}) $$
		\normalsize  
		We have indeed $U \red U'$ and we have $U' \subtype V'$ because $A_o \Ilub B_o \subseteq A_o + J_c$.  Indeed, 
		\small 
		$$ \LeA(A_o + J_c) \le \max(\LeA(A_o),\LeA(B_o)) $$
		\normalsize 
		As either $J_c = J$ and so $\LeA(A_o = J_c) = \LeA(A_o)$, either $J_c = [I,J]$ and 
		\small 
		$$ \LeA(A_o) + I \le \RiA(A_o) + I \le \LeA(B_o) $$
		\normalsize 
		since $B_o \subseteq A_o \IPlus J_c$. Moreover, we have 
		\small 
		$$ \max(\RiA(A_o),\RiA(B_o)) \le \RiA(A_o + J_c) $$
		\normalsize 
		again because $B_o \subseteq A_o \IPlus J_c$. 
	\end{itemize}
	Thus, we have indeed Lemma~\ref{l:subusageprop2int}, and we deduce the second point of Lemma~\ref{l:subusageprop}. 
	
	Finally, the third point is a direct consequence of the third point. Indeed, suppose that $U$ is reliable. So, for any reduction from $U$, it does not lead to an error. Let us take a reduction from $V$. By the third point, it gives us a reduction from $U$ where some steps are a subtype of the steps in $V$. So, as an error cannot happen in the steps from $U$, and the only usage $U$ such that $U \subtype \Error$ is $\Error$, we know that the reduction from $V$ does not lead to an error. Thus, $V$ is reliable.
\end{proof}

More specific to this type system, we also have the following lemma, stating that delaying does not modify the behavior of a type.

\begin{lemma}[Invariance by Delaying]
	For any interval $A_o$: 
	\begin{enumerate}
		\item If $ \fifi \p \Up{A_o} U \ured V'$ then, there exists $V$ such that $\Up{A_o} V = V'$ and $\fifi \p U \ured V$. (with $\Error = \Up{A_o} \Error$)
		\item If $\fifi \p U \ured V$ then $\fifi \p \Up{A_o} U \ured \Up{A_o} V$.
		\item $U$ is reliable under $\fifi$ if and only if $\Up{A_o} U$ is reliable under $\fifi$.
	\end{enumerate}
	\label{l:delayinginvariance}
\end{lemma}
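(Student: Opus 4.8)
The plan is to establish (2) and (1) by induction and case analysis on usage reductions, and then obtain (3) as a corollary. Before that I would record a few structural facts about the delaying operation that are immediate from its defining clauses: $\Up{A_o}$ commutes with $\PAR$, with $+$ and with $!$, sends $\ZU$ to $\ZU$, is compatible with usage congruence (if $U \congr U'$ then $\Up{A_o} U \congr \Up{A_o} U'$, by a routine induction on the congruence derivation, which goes through precisely because $\Up{A_o}({!}U) = {!}\Up{A_o}U$, $\Up{A_o}(U \PAR V) = \Up{A_o}U \PAR \Up{A_o}V$ and $\Up{A_o}\ZU = \ZU$), and composes additively, $\Up{A_o}(\Up{B_o} U) = \Up{A_o + B_o} U$, which unfolds to associativity of $+$ on obligation endpoints. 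I would also isolate the distributivity identity $C_o + (A_o \Ilub B_o) = (C_o + A_o) \Ilub (C_o + B_o)$, which holds because addition distributes over $\max$ endpoint-wise.

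For (2) I would induct on the derivation of $\fifi \p U \ured V$. The parallel-context rule and the two choice rules for $+$ are immediate from the structural facts, and the congruence-closure rule uses compatibility of $\Up{A_o}$ with $\congr$. The communication rule is the heart of the matter: from $\IU{A_o'}{I_c}.U_1 \PAR \OU{B_o'}{J_c}.V_1 \ured \Up{A_o' \Ilub B_o'}(U_1 \PAR V_1)$, applying $\Up{A_o}$ turns the redex into $\IU{A_o + A_o'}{I_c}.U_1 \PAR \OU{A_o + B_o'}{J_c}.V_1$, and by the composition and distributivity identities above, $\Up{A_o}$ of the original reduct equals $\Up{(A_o + A_o') \Ilub (A_o + B_o')}(U_1 \PAR V_1)$, which is exactly the reduct of the shifted redex. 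What remains is to verify that the side conditions survive the shift: $\fifi \q B_o' \subseteq A_o' \IPlus I_c$ should yield $\fifi \q A_o + B_o' \subseteq (A_o + A_o') \IPlus I_c$, symmetrically for the other condition, and, for the error rule, for the negation of the conjunction. This side-condition lemma requires a small case split according to whether the capacity is a genuine interval $[I,J]$ or a single index $J$ (read as $[-\infty, J]$, so that $\IPlus$ constrains only the upper endpoint).

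For (1) I would proceed by cases on the last rule in $\fifi \p \Up{A_o} U \ured V'$. Since $\Up{A_o}$ alters only obligation annotations and not term structure, once congruence is used to expose a redex inside $\Up{A_o}U$ that redex is the $\Up{A_o}$-image of a redex of $U$; here I would use that the image of $\Up{A_o}$ is $\congr$-closed and that $\Up{A_o}$ reflects $\congr$ (i.e.\ $\Up{A_o}U \congr W$ forces $W = \Up{A_o}U''$ for some $U'' \congr U$), both of which follow from the same structural facts read backwards. The converse direction of the side-condition lemma then lets $U$ fire the matching $\ured$-step, and the composition identity identifies its reduct with the required $\Up{A_o}V$; the error case is symmetric, using $\Up{A_o}\Error = \Error$ and the fact that $\Up{A_o}$ of a non-error usage is never $\Error$. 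Part (3) is then obtained by iteration: if $U$ is reliable but $\Up{A_o}U \ured^* \Error$, pull that sequence back step by step via (1) to obtain $U \ured^* \Error$, a contradiction; conversely, a hypothetical $\Error$-reaching sequence from $U$ is pushed forward via (2) to $\Up{A_o}U \ured^* \Up{A_o}\Error = \Error$.

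The step I expect to be the genuine obstacle is the side-condition lemma, in particular making it work in \emph{both} directions simultaneously. The operation $A_o \mapsto A_o \IPlus J_c$ does not commute with the shift $A_o \mapsto C_o + A_o$ for arbitrary intervals — one only gets $C_o + (A_o \IPlus J_c) \supseteq (C_o + A_o) \IPlus J_c$, with equality exactly when $C_o$ collapses to a point — so the equivalence $B_o' \subseteq A_o' \IPlus I_c \iff C_o + B_o' \subseteq (C_o + A_o') \IPlus I_c$ is clean only for point shifts. The argument must therefore lean on the shape of the obligation intervals, hence of the delays actually applied by the semantics, together with the single-index convention for capacities; pinning this down, consistently with the two clauses defining $\IPlus$, is where the real care is needed.
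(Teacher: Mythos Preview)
The paper does not actually supply a proof of this lemma, so there is no ``paper's proof'' to compare against; your outline is the natural one, and your structural facts about $\Up{A_o}$ (compatibility with $\PAR$, $+$, $!$, $\ZU$, $\congr$, and the composition and distributivity identities) are all correct and are exactly what is needed.

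You have also put your finger on the crux, namely whether the side conditions $B_o'\subseteq A_o'\IPlus I_c$ survive the shift by $C_o$. But you understate the difficulty: this is not a matter of ``real care''; for a genuine non-point $C_o$ the lemma is simply \emph{false}. Take $U=\IU{[0,0]}{[0,0]}.\ZU\PAR\OU{[0,0]}{[0,0]}.\ZU$. The side conditions $[0,0]\subseteq[0,0]\IPlus[0,0]=[0,0]$ hold, so $U\ured\ZU$ and $U$ is reliable. Now $\Up{[0,1]}U=\IU{[0,1]}{[0,0]}.\ZU\PAR\OU{[0,1]}{[0,0]}.\ZU$, and $[0,1]\IPlus[0,0]=[\RiA([0,1])+0,\LeA([0,1])+0]=[1,0]$, so the condition $[0,1]\subseteq[1,0]$ fails and $\Up{[0,1]}U\ured\Error$. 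This refutes (2), refutes the error case of (1), and refutes (3). Your own computation $(C_o+A_o')\IPlus I_c\subseteq C_o+(A_o'\IPlus I_c)$ with strict inclusion whenever $C_o$ is not a point is exactly the phenomenon behind this; no appeal to ``the shape of the obligation intervals'' can rescue the general statement.

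What \emph{is} true, and is all the paper ever uses (in the congruence lemma and in subject reduction it is always invoked with $[m,m]$ or $[n,n]$), is the restriction to point delays $A_o=[n,n]$. There your argument goes through without any subtlety: with $C_o=[n,n]$ one has $(C_o+A_o')\IPlus I_c = C_o+(A_o'\IPlus I_c)$ on the nose, so the side conditions are equivalent before and after the shift, and both (1) and (2), hence (3), follow exactly along the lines you describe. The right move is therefore to restrict the statement to singleton intervals rather than to look for a subtler argument for the stated generality.
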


In our setting, this lemma shows among other things that the $\tick$ constructor, or more generally the annotation $n : P$, does not break reliability. Those properties can easily be proved with simple inductions. 

\subsection{Intermediate Lemmas} 
\label{ss:intermediate} 
We give some intermediate lemmas for the soundness theorem
\begin{lemma}[Weakening]
Let $\varphi,\varphi'$ be disjoint set of index variables, $\Phi$ be a set of constraints on $\varphi$, $\Phi'$ be a set of constraints on $(\varphi,\varphi')$, $\Gamma$ and $\Gamma'$ be contexts on disjoint set of variables. 
\begin{enumerate}	
	\item If $\fifi ; \Gamma \p e \COL T$ then $(\varphi,\varphi') ; (\Phi,\Phi') ; \Gamma,\Gamma' \p e \COL T$.
	\item If $\fifi ; \Gamma \p P \C K$ then $(\varphi,\varphi') ; (\Phi,\Phi') ; \Gamma,\Gamma' \p P \C K$.  
\end{enumerate}
\label{l:weakening}
\end{lemma}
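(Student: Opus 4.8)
The plan is to argue by structural induction on the typing derivation, proving statements (1) and (2) together (statement (2) appeals to (1) at the rules (och), (oserv) and (case), which have expression-typing premises; (1) on its own is an immediate induction). Two elementary observations carry most of the work. First, \emph{monotonicity of entailment}: if $\fifi\q C$ and the indices occurring in $C$ have their free variables in $\varphi$, then $(\varphi,\varphi');(\Phi,\Phi')\q C$, since every valuation of $(\varphi,\varphi')$ satisfying $(\Phi,\Phi')$ restricts to a valuation of $\varphi$ satisfying $\Phi$; the same is then true of the derived judgments $\subtype$ and $\SubU$ and of the well-formedness condition on intervals. (Its contrapositive, $(\varphi,\varphi');(\Phi,\Phi')\nq C\Rightarrow\fifi\nq C$, will be used for reliability.) Second, every operation on contexts that appears in the rules — delaying $\Up{A_o}{(\cdot)}$, parallel composition $(\cdot)\PAR(\cdot)$, replication $!(\cdot)$ — is defined pointwise, hence commutes with adjoining a context $\Gamma'$ on disjoint variables. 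With these, the cases (zero), (par), (nu), (case), (subtype) and the expression rules are bookkeeping: apply the induction hypothesis to each premise — adding $\Gamma'$ to the context of one premise that carries over into the conclusion, and only the fresh index variables and constraints to the others — use that $\var\COL T\in\Gamma$ implies $\var\COL T\in\Gamma,\Gamma'$, and reapply the same rule, the side conditions being preserved by the first observation. For (case) one applies the hypothesis with constraint set $(\Phi,I\le 0)$, resp.\ $(\Phi,J\ge 1)$, and rearranges $(\Phi,I\le 0,\Phi')$ as $(\Phi,\Phi',I\le 0)$.

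The binding rules require $\alpha$-renaming. In (iserv) the bound index variables $\seq{i}$ are renamed fresh for $\varphi'$, so the premise, which lives over $(\varphi,\seq{i});\Phi$, can be weakened to $(\varphi,\seq{i},\varphi');(\Phi,\Phi')$; in (oserv) the index instantiation applied to $\seq{T}$ and $K$ commutes with weakening because the instantiating indices range over $\varphi$. In (nu) the bound channel $a$ is renamed fresh for $\Gamma'$; the only point here that is not purely formal is that ``$T$ reliable'' must be promoted to ``$T$ reliable under $(\varphi,\varphi');(\Phi,\Phi')$''. This I would isolate as a short auxiliary fact: along any $\ured$-reduction of a usage well-formed over $\varphi$ all indices stay over $\varphi$, and enlarging $(\Phi)$ to $(\Phi,\Phi')$ can only remove models, so a negative entailment over $\varphi$ stays negative; hence if a reduction of $T$'s usage over the larger context reached $\Error$, replaying it over $\fifi$ either reaches the same erroneous state (whose error side condition, being a negative entailment, still holds) or, at the first communication step whose guard fails over $\fifi$, takes the error branch there — either way $T$ would already be unreliable under $\fifi$.

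The one step that needs a genuine trick — and the place I expect the main resistance — is the family of rules whose conclusion applies a context transformation to the premise context: (tick), (ich), (iserv), (och), (oserv) and the derived rule for the annotation $m:P$, where the conclusion context has the shape $\Up{J_c}{\Gamma}$ or $\Up{J_c}{!\Gamma}$ (and in (och), (oserv) the two premise contexts are first combined with $\PAR$). Weakening must place $\Gamma'$ into the conclusion \emph{verbatim}, whereas naively reapplying the rule produces $\Up{J_c}{!\Gamma'}$ (or $\Up{[1,1]}{\Gamma'}$, etc.), which in general differs from $\Gamma'$. The fix is to weaken the premise not with $\Gamma'$ but with its \emph{neutralisation} $\Gamma'_0$, obtained from $\Gamma'$ by replacing every channel or server usage with $\ZU$ and leaving integer types untouched. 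Since $\Up{A_o}{\ZU}=\ZU$, $!\ZU\congr\ZU$, $\ZU\PAR\ZU\congr\ZU$, and delaying, replication and parallel composition are the identity on integer types, $\Gamma'_0$ is a fixed point of all these operations, so after reapplying the rule $\Gamma'_0$ reappears verbatim in the conclusion next to the transformed $\Gamma$-part. A final application of (subtype), using $\fifi\p U\SubU\ZU$ componentwise so that $\Gamma'\subtype\Gamma'_0$ pointwise (and the identity on the remainder of the context and on the complexity), then restores $\Gamma'$.

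In summary, the skeleton is a routine induction governed by ``monotonicity of entailment'' and ``pointwise context operations'', and I expect the only two places that take more than bookkeeping to be the reliability-transport fact for (nu) and the neutralise-then-subsume pattern for the context-transforming rules; both are self-contained and do not require any new machinery beyond Lemma~\ref{l:subusageprop} and the definitions already in place.
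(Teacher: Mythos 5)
Your proof is correct and follows the route the paper intends (the paper only says the lemma ``is proved easily by successive induction on the definitions'' and gives no details): a structural induction on the derivation, with monotonicity of entailment under enlarging $(\varphi,\Phi)$ doing the work for side conditions and for transporting reliability in the (nu) case, and with the neutralise-then-subsume device (weaken by the zero-usage variant $\Gamma'_0$, then recover $\Gamma'$ by one application of (subtype) via $\fifi\p U\SubU\ZU$) handling the rules whose conclusion transforms the premise context --- indeed the paper itself invokes the lemma in exactly this zero-usage form, writing $\Delta, a : (T/\ZU)$ in the proof of Lemma~\ref{l:parallelcongrtype}. The only nit is that for (par), (och) and (oserv) you must weaken \emph{both} premises (one by $\Gamma'$ or $\Gamma'_0$, the other by $\Gamma'_0$) so that the parallel composition of the two contexts remains defined, but your own neutralisation device already supplies this.
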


We also show that we can remove some useless hypothesis. 

\begin{lemma}[Strengthening]
Let $\varphi$ be a set of index variables, $\Phi$ be a set of constraints on $\varphi$, and $C$ a constraint on $\varphi$ such that $\fifi \q C$. 
\begin{enumerate}
	\item If $\varphi;(\Phi,C);\Gamma,\Gamma' \p e \COL T$ and the variables in $\Gamma'$ are not free in $e$, then $\fifi;\Gamma \p e \COL T$. 
	\item If $\varphi;(\Phi,C);\Gamma,\Gamma' \p P \C K$ and the variables in $\Gamma'$ are not free in $P$, then $\fifi;\Gamma \p P \C K$.  
\end{enumerate}
\label{l:strengthening}
\end{lemma}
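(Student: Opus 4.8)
The plan is to prove both statements simultaneously by structural induction on the typing derivation, since the typing rules for expressions and processes are mutually referenced (processes invoke expression typing in rules like (och) and (case)). The key observation is that the constraint $C$ plays no active role in any typing rule: it is only ever \emph{consulted} through side conditions of the form $\fifi \q (\cdot)$ — namely in subtyping, subusage, the reliability check in (nu), the index instantiation in (oserv), and the constraint manipulation in (case). Since we are given $\fifi \q C$, every semantic entailment $\varphi;(\Phi,C) \q D$ that is used in the derivation already follows from $\fifi \q D$ alone, because a valuation $\rho$ on $\varphi$ with $\rho \q \Phi$ also satisfies $\rho \q C$ by hypothesis, hence $\rho \q (\Phi,C)$. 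So every side condition remains valid after dropping $C$. The other half of the argument is bookkeeping on contexts: each rule either preserves the set of free variables or shrinks it (binders in (ich), (iserv), (och), (oserv), (case), (nu)), so if $\Gamma'$ is disjoint from the free variables of $P$ (resp.\ $e$), the variables of $\Gamma'$ remain non-free in every sub-derivation, possibly after discarding context entries corresponding to the variables bound by the rule — which is exactly what Weakening (Lemma~\ref{l:weakening}) lets us reinsert if ever needed, though in the strengthening direction we simply drop them.

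Concretely, I would proceed as follows. For expressions: the axiom rules (variable, $\zero$) are immediate once we note that a free-variable-free $\Gamma'$ cannot contain the relevant variable binding needed, so it can be removed; the $\suc$ rule and the expression subtyping rule follow from the induction hypothesis plus the remark above that side conditions survive. For processes: (zero) and (tick) are direct from the IH; (par) splits $\Gamma \PAR \Delta$ and one must check that $\Gamma'$ is disjoint from the free variables of both $P$ and $Q$, which it is, then apply the IH to each premise; (nu) uses the IH on the premise with context $\Gamma, a\COL T$ (here $a$ is bound, so $a$ is still not among the variables of $\Gamma'$, and the reliability side condition $T$ reliable does not mention constraints beyond what is preserved); (ich), (iserv), (och), (oserv) similarly apply the IH to premises with the bound $\seq{\var}$ and the channel $a$ added, and the semantic side conditions (e.g.\ the instantiation in (oserv)) transfer; (case) is the interesting one, since it \emph{adds} constraints $I \le 0$ or $J \ge 1$ to the premises — but that is fine: we apply the IH with the enlarged constraint set $(\Phi, C, I\le 0)$ which still contains $C$, conclude typing under $(\Phi, I \le 0)$, and reassemble; (subtype) again uses the IH and the preservation of the side conditions $\fifi \p \Gamma \subtype \Delta$ and $\fifi \q K \subseteq K'$.

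The main obstacle — and it is a mild one — is the (case) rule, because there the two premises are typed under constraint sets that are \emph{not} simply $(\Phi, C)$ but $(\Phi, C, I \le 0)$ and $(\Phi, C, J \ge 1)$ respectively, with an additional variable binding $x \COL \Nat[I\minus 1, J\minus 1]$ in the second premise. One has to check that the inductive hypothesis still applies: the constraint $C$ is still present in both enlarged sets, so the hypothesis $\fifi \q C$ (and a fortiori the stronger $\varphi;(\Phi, I\le 0) \q C$) lets us invoke the IH to strip $\Gamma'$ from each premise, yielding $\varphi;(\Phi, I\le 0);\Gamma \p P \C K$ and $\varphi;(\Phi,J\ge 1);\Gamma, x\COL\cdots \p Q \C K$, and then (case) recombines these into $\fifi;\Gamma \p \pifn{e}{P}{x}{Q}\C K$ — noting $x$ is bound so it was never in $\Gamma'$. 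The only genuinely delicate point to state carefully is that ``$C$ being entailed by $\Phi$'' is monotone: if $\fifi \q C$ then also $\varphi;(\Phi, D) \q C$ for any $D$, which is immediate from the definition of $\q$, and this is what makes the induction go through for (case). Everything else is routine propagation of the induction hypothesis, so I would present the expression cases tersely and spend the bulk of the written proof on (par), (nu), the input/output rules, and (case).
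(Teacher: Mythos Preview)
Your proposal is correct and follows the same approach as the paper, which simply states that the lemma is ``proved easily by successive induction on the definitions in this paper'' without further detail. Your plan is a faithful and considerably more explicit unfolding of that induction; the one point worth making fully precise in writing is that since $\fifi \q C$, the sets of valuations satisfying $\Phi$ and $(\Phi,C)$ coincide, so \emph{all} judgments parameterized by the constraint set (subusage, subtyping, reliability, and both $\q$ and $\nq$ side conditions in usage reduction) transfer in both directions --- this covers, in particular, the reliability check in (nu) and the error rule for usages, which involve a $\nq$ condition.
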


Those two lemmas are proved easily by successive induction on the definitions in this paper. Then, we also have a lemma expressing that index variables can indeed be replaced by any index.  

\begin{lemma}[Index Substitution]
Let $\varphi$ be a set of index variables and $i \notin \varphi$. Let $\INN{J}$ be an index with free variables in $\varphi$. Then, 
\begin{enumerate}
	\item If $(\varphi,i);\Phi;\Gamma \p e \COL T$ then  $\varphi;\Isub{\Phi}{i}{\INN{J}};\Isub{\Gamma}{i}{\INN{J}} \p e \COL \Isub{T}{i}{\INN{J}}$. 
	\item If $(\varphi,i);\Phi;\Gamma \p P \C K$ then $\varphi;\Isub{\Phi}{i}{\INN{J}};\Isub{\Gamma}{i}{\INN{J}} \p P \C \Isub{K}{i}{\INN{J}}$.  
\end{enumerate}
\label{l:indexsub}
\end{lemma}

\subsection{Substitution Lemma} \label{ss:substitution}

We now present the variable substitution lemmas. In the setting of usages, this lemma is a bit more complex than usual. Indeed, we have a separation of contexts with the parallel composition, and we have to rely on subusage, especially the rule $\fifi \p (\alpha^{A_o}_{J} . U) \PAR (\Up{A_o + J_c}V) \SubU \alpha^{A_o}_{J_c}. (U \PAR V)$ as expressed in the Example~\ref{e:substitution} above.
We put some emphasis on the following notation: when we write $\Gamma, \var : T$ as a context in typing, it means that $\var$ does not appear in $\Gamma$.  

\begin{lemma}[Substitution]
	Let $\Gamma$ and $\Delta$ be contexts such that $\Gamma \PAR \Delta$ is defined. Then we have:
	\begin{enumerate}
		\item If $\fifi; \Gamma, \var \COL T \p e' \COL T'$ and $\Delta \p e \COL T$ then 
		$\fifi; \Gamma \PAR \Delta \p \psub{e'}{\var}{e} \COL T'$ 
		\item If $\fifi; \Gamma, \var \COL T \p P \C K$ and $\Delta \p e \COL T$ then 
		$\fifi; \Gamma \PAR \Delta \p \psub{P}{\var}{e} \C K$ 
	\end{enumerate}
	\label{l:subsitution}
\end{lemma}

The first point is straightforward. It uses the fact that we have the relation
$\fifi \p U \SubU \ZU$ for any usage $U$, and so we can use $\fifi \p \Gamma \PAR \Delta \SubU \Gamma$ in order to weaken $\Delta$ (similarly for $\Gamma$) if needed. 
The second point is more interesting. The easy case is when $T$ is $\Nat[I,J]$ for some $[I,J]$. Then, we take a $\Delta$ that only uses the zero usage, and so $\Gamma \PAR \Delta = \Gamma$ and everything becomes simpler. The more interesting cases are:

\begin{lemma}[Difficult Cases of Substitution] 
	We have:
	\begin{itemize}
		\item If $\fifi; \Gamma, b \COL \ChT{\seq{S}}{W_0}  , c \COL \ChT{\seq{S}}{W_1} \p P \C K$ then $\fifi; \Gamma, b \COL \ChT{\seq{S}}{(W_0 \PAR W_1)} \p \psub{P}{c}{b} \C K$
		\item If $\fifi; \Gamma, b \COL\ServT{\seq{i}}{K}{\seq{S}}{W_0}  , c \COL \ServT{\seq{i}}{K}{\seq{S}}{W_1} \p P \C K$ then \\ $\fifi; \Gamma, b \COL \ServT{\seq{i}}{K}{\seq{S}}{(W_0 \PAR W_1)} \p \psub{P}{c}{b} \C K$	
	\end{itemize} 
\end{lemma}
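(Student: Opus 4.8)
The plan is to prove both items simultaneously by induction on the typing derivation of $P$ (which, since we work with annotated processes, may end with the annotation rule for $m\COL P$ as well as the rules of Figure~\ref{f:usagetype}); item~2 is handled exactly like item~1 but with server types in place of channel types, the polymorphism data $\seq i$ and the server complexity being carried identically by $b$ and $c$ and playing no role. We work up to $\alpha$-renaming, so the names bound in $P$ --- including the variables $\seq\var$ of input/output prefixes --- may be assumed distinct from $b$ and $c$. First I would record the elementary companion fact for expressions: if $\fifi;\Gamma, b\COL\ChT{\seq S}{W_0}, c\COL\ChT{\seq S}{W_1}\p e\COL T$ then $\fifi;\Gamma, b\COL\ChT{\seq S}{(W_0\PAR W_1)}\p\psub{e}{c}{b}\COL T$, and likewise for server types. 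This is immediate: the only nontrivial case is $e=c$ (resp.\ $e=b$), where $T=\ChT{\seq S}{V}$ with $W_1\SubU V$ (resp.\ $W_0\SubU V$), so that $W_0\PAR W_1\SubU W_1\SubU V$ (resp.\ $W_0\PAR W_1\congr W_1\PAR W_0\SubU W_0\SubU V$), using $\fifi\p W_0\SubU\ZU$, monotonicity of $\SubU$ for $\PAR$, and $\ZU\PAR W_1\congr W_1$.

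For the process statement, the structural rules --- $(\mathrm{zero})$, $(\mathrm{par})$, $(\mathrm{tick})$, the annotation rule, $(\mathrm{nu})$, $(\mathrm{case})$, $(\mathrm{subtype})$ --- are routine: one pushes the induction hypothesis through the premises and uses that parallel composition, replication and delaying of usages all distribute over $\PAR$, that congruence is contained in $\SubU$, and that $\SubU$ is monotone for $\PAR$. For $(\mathrm{par})$ one additionally splits $W_0=W_0^{1}\PAR W_0^{2}$ and $W_1=W_1^{1}\PAR W_1^{2}$ along the two sub-derivations, applies the induction hypothesis to each, recombines with $(\mathrm{par})$, and rearranges by congruence; for $(\mathrm{subtype})$ one lifts $W_0\SubU W_0^{\Delta}$ and $W_1\SubU W_1^{\Delta}$ to $W_0\PAR W_1\SubU W_0^{\Delta}\PAR W_1^{\Delta}$ by monotonicity; and for $(\mathrm{nu})$ the bound name is untouched and the reliability side-condition is preserved.

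The heart of the proof is the four prefix rules, analysed according to whether the subject channel $a$ is $b$, is $c$, or is neither. When $a\notin\{b,c\}$ one again just applies the induction hypothesis --- and, for the output rules, the expression fact above --- to the premises, reapplies the same rule on $a$, and closes with $(\mathrm{subtype})$. The new situation is $a\in\{b,c\}$; take for instance $(\mathrm{ich})$ with $a=c$. The last rule then gives a premise of the form $\fifi;\Gamma_0, b\COL\ChT{\seq S}{W_b}, c\COL\ChT{\seq S}{U}, \seq\var\COL\seq S\p P'\C K'$, with $W_0=\Up{J_c}{W_b}$, $W_1=\IU{[0,0]}{J_c}.U$, and the remaining context equal to $\Up{J_c}{\Gamma_0}$. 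Applying the induction hypothesis substitutes $c$ by $b$ and \emph{folds $U$ inside the prefix}, yielding $b\COL\ChT{\seq S}{(W_b\PAR U)}$; reapplying $(\mathrm{ich})$ on $b$ (with capacity $J_c$) gives $b\COL\ChT{\seq S}{\IU{[0,0]}{J_c}.(W_b\PAR U)}$ together with the correctly delayed context; and $(\mathrm{subtype})$ then recovers the target usage, since the last rule of Figure~\ref{f:subusage} (taken with $\alpha=\mathtt{In}$ and $A_o=[0,0]$, so that $\Up{[0,0]+J_c}=\Up{J_c}$) yields $\Up{J_c}{W_b}\PAR\IU{[0,0]}{J_c}.U\;\SubU\;\IU{[0,0]}{J_c}.(W_b\PAR U)$, i.e.\ $W_0\PAR W_1\SubU$ the usage we obtained. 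The case $a=b$ and the output and server variants follow the same three steps (strip the prefix; apply the induction hypothesis, plus the expression fact for the output rules, folding the other channel's usage inside the guard; reapply the prefix on $b$ and factor the folded part back out with the last subusage rule), with the only extra bookkeeping being the outer replication in $(\mathrm{iserv})$ --- handled because $\Up{A_o}(!U)=!(\Up{A_o}U)$ and $!(U\PAR V)\congr !U\PAR !V$, so the subusage rule applies under a $!$ --- and the (inert) polymorphism data in $(\mathrm{oserv})$. I expect the crux to be precisely this manoeuvre in the prefix cases: one must first push the other channel's usage \emph{inside} the guard using the induction hypothesis, and then factor it back \emph{out} using the last subusage rule of Figure~\ref{f:subusage}, which is exactly the phenomenon already illustrated in Example~\ref{e:substitution}.
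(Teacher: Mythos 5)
Your proposal is correct and follows essentially the same route as the paper's own proof: induction on the typing derivation, with the crux in the prefix rules when the subject channel is $b$ or $c$, where the induction hypothesis folds the other usage under the guard and the last subusage rule of Figure~\ref{f:subusage} (with $A_o=[0,0]$) factors it back out via $(\mathrm{subtype})$, and with the replicated-input case closed by $\Up{J_c}(!W)=!(\Up{J_c}W)$ and $!(U\PAR V)\congr\,!U\PAR\,!V$. The only cosmetic difference is that you spell out the expression-level companion fact, which the paper delegates to point~1 of Lemma~\ref{l:subsitution}.
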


With a careful induction, we can indeed prove this lemma, relying on the subusage relation.

\begin{enumerate}
	\item 
	\begin{itemize}
		\item \textbf{Case} of input, with $a \ne b$ and $a \ne c$.
		\begin{center}
			\small 
			\AXC{$\fifi;\Gamma, b \COL \ChT{\seq{S}}{W_0}  , c \COL \ChT{\seq{S}}{W_1},  a \COL \ChT{\seq{T}}{U}, \seq{\var} \COL \seq{T} \p P \C K$}
			\UIC{$\fifi;\Up{J_c}{\Gamma}, b \COL \ChT{\seq{S}}{(\Up{J_c}W_0)}  , c \COL \ChT{\seq{S}}{(\Up{J_c}W_1)}, a \COL \ChT{\seq{T}}{\IU{[0,0]}{J_c}.U} \p \In{a}{\seq{\var}}.P \C J_c;K$}
			\DP 
		\end{center}
		
		By induction hypothesis, we obtain $\fifi;\Gamma, b \COL \ChT{\seq{S}}{(W_0 \PAR W_1)},  a \COL \ChT{\seq{T}}{U}, \seq{\var} \COL \seq{T} \p \psub{P}{c}{b} \C K$. 
		
		We then give the following proof: 
		
		\begin{center}
			\small 
			\AXC{$\fifi;\Gamma, b \COL \ChT{\seq{S}}{(W_0 \PAR W_1)},  a \COL \ChT{\seq{T}}{U}, \seq{\var} \COL \seq{T} \p \psub{P}{c}{b} \C K$}
			\UIC{$\fifi;\Up{J_c}{\Gamma}, b \COL \ChT{\seq{S}}{(\Up{J_c}(W_0 \PAR W_1))}, a \COL \ChT{\seq{T}}{\IU{[0,0]}{J_c}.U} \p \In{a}{\seq{\var}}.P \C J_c;K$}
			\DP 
		\end{center}
	
		This case is similar to all other cases when $b$ and $c$ does not interfere with the typing rule. Thus, we know only show the cases when they interfere. 	
		\item \textbf{Case} of input, with $a = b$. 
		
		\begin{center}
			\small 
			\AXC{$\fifi;\Gamma, b \COL \ChT{\seq{T}}{U}, c \COL \ChT{\seq{T}}{W_1}, \seq{\var} \COL \seq{T} \p P \C K$}
			\UIC{$\fifi;\Up{J_c}{\Gamma}, b \COL \ChT{\seq{T}}{\IU{[0,0]}{J_c}.U}, c \COL \ChT{\seq{T}}{(\Up{J_c} W_1)} \p \In{b}{\seq{\var}}.P \C J_c;K$}
			\DP 
		\end{center} 	
		
		By induction hypothesis, we obtain $\fifi;\Gamma, b \COL \ChT{\seq{T}}{(U \PAR W_1)}, \seq{\var} \COL \seq{T} \p \psub{P}{c}{b} \C K$.
		So, we give the typing:
		
		\begin{center}
			\small 
			\AXC{$\fifi;\Gamma, b \COL \ChT{\seq{T}}{(U \PAR W_1)}, \seq{\var} \COL \seq{T} \p \psub{P}{c}{b} \C K$}
			\UIC{$\fifi;\Up{J_c}{\Gamma}, b \COL \ChT{\seq{T}}{\IU{[0,0]}{J_c}.(U \PAR W_1)} \p \In{b}{\seq{\var}}. \psub{P}{c}{b} \C J_c;K$}
			\UIC{$\fifi;\Up{J_c}{\Gamma}, b \COL \ChT{\seq{T}}{\IU{[0,0]}{J_c}.U \PAR (\Up{J_c} W_1)} \p \In{b}{\seq{\var}}. \psub{P}{c}{b} \C J_c;K$}
			\DP 
		\end{center} 
		
		Indeed, the last rule represents subtyping. This concludes this case. 
		\item \textbf{Case} of input, with $a = c$. 
		
		\begin{center}
			\small 
			\AXC{$\fifi;\Gamma, b \COL \ChT{\seq{T}}{W_0}, c \COL \ChT{\seq{T}}{U}, \seq{\var} \COL \seq{T} \p P \C K$}
			\UIC{$\fifi;\Up{J_c}{\Gamma}, b \COL \ChT{\seq{T}}{(\Up{J_c} W_0)}, c \COL \ChT{\seq{T}}{\IU{[0,0]}{J_c}.U} \p \In{c}{\seq{\var}}.P \C J_c;K$}
			\DP 
		\end{center}
		
		By induction hypothesis, we obtain $\fifi;\Gamma, b \COL \ChT{\seq{T}}{(W_0 \PAR U)} ,\seq{\var} \COL \seq{T} \p \psub{P}{c}{b} \C K$. So, we give the typing:
		
		\begin{center}
			\small 
			\AXC{$\fifi;\Gamma, b \COL \ChT{\seq{T}}{(W_0 \PAR U)}, \seq{\var} \COL \seq{T} \p \psub{P}{c}{b} \C K$}
			\UIC{$\fifi;\Up{J_c}{\Gamma}, b \COL \ChT{\seq{T}}{\IU{[0,0]}{J_c}.(W_0 \PAR U)} \p \In{b}{\seq{\var}}. \psub{P}{c}{b} \C J_c;K$}
			\UIC{$\fifi;\Up{J_c}{\Gamma}, b \COL \ChT{\seq{T}}{\IU{[0,0]}{J_c}.U \PAR (\Up{J_c} W_0)} \p \psub{(\In{c}{\seq{\var}}.P)}{c}{b} \C J_c;K$}
			\DP 
		\end{center} 
		
		\item \textbf{Case} of output, with $a = b$. 
		
		\begin{center}
			\small 
			\AXC{$\fifi;\Gamma', b \COL \ChT{\seq{T}}{V}, c \COL \ChT{\seq{T}}{W_1'}  \p \seq{e} \COL \seq{T} \qquad \fifi;\Gamma, b \COL \ChT{\seq{T}}{U}, c \COL \ChT{\seq{T}}{W_1} \p P \C K$}
			\UIC{$\fifi;\Up{J_c}{(\Gamma \PAR \Gamma')}, b \COL \ChT{\seq{T}}{\OU{[0,0]}{J_c}.(V \PAR U)}, c \COL \ChT{\seq{T}}{\Up{J_c} (W_1' \PAR W_1)} \p \Out{b}{\seq{e}}.P \C J_c;K$}
			\DP 
		\end{center}
		
		By point 1 of Lemma~\ref{l:subsitution} and induction hypothesis, we obtain $\fifi;\Gamma', b \COL \ChT{\seq{T}}{(V \PAR W_1')} \p \seq{e} \COL \seq{T}$ and $\fifi;\Gamma, b \COL \ChT{\seq{T}}{(U \PAR W_1)} \p P \C K$. 
		Thus, we have: 
		
		\begin{center}
			\small 
			\AXC{$\fifi;\Gamma', b \COL \ChT{\seq{T}}{(V \PAR W_1')} \p \seq{e} \COL \seq{T} \qquad \fifi;\Gamma, b \COL \ChT{\seq{T}}{(U \PAR W_1)} \p P \C K$}
			\UIC{$\fifi;\Up{J_c}{(\Gamma \PAR \Gamma')}, b \COL \ChT{\seq{T}}{\OU{[0,0]}{J_c}.(V \PAR U \PAR W_1 \PAR W_1')} \p \Out{b}{\seq{e}}.P \C J_c;K$}
			\UIC{$\fifi;\Up{J_c}{(\Gamma \PAR \Gamma')}, b \COL \ChT{\seq{T}}{\OU{[0,0]}{J_c}.(V \PAR U) \PAR \Up{J_c}(W_1 \PAR W_1')} \p \Out{b}{\seq{e}}.P \C J_c;K$}
			\DP 
		\end{center}
		
		Again, the last rule is obtained by subtyping. We have a similar proof for the case $a = c$.
	\end{itemize}
	
	\item We know work on the case of servers. The notations are a bit cumbersome but the proofs are similar to the one for channels. The only point that need some details is for server input as there is the replication in usages that appear. 
	
	\begin{itemize}
		\item \textbf{Case} of input, with $a \ne b$ and $a \ne c$. 
		
		\begin{center}
			\scriptsize 
			\AXC{$(\varphi,\seq{i});\Phi;\Gamma, a \COL \ServT{\seq{i}}{K}{\seq{T}}{U}, b \COL \ServT{\seq{j}}{L}{\seq{S}}{W_0}  , c \COL \ServT{\seq{j}}{L}{\seq{S}}{W_1},  \seq{\var} \COL \seq{T} \p P \C K$}
			\UIC{$\fifi;\Up{J_c}!\Gamma, a \COL \ServT{\seq{i}}{K}{\seq{T}}{!\IU{[0,0]}{J_c}.U}, b \COL \ServT{\seq{j}}{L}{\seq{S}}{(\Up{J_c} !W_0)}  , c \COL \ServT{\seq{j}}{L}{\seq{S}}{(\Up{J_c} !W_1)} \p !\In{a}{\seq{\var}}.P \C [0,0]$}
			\DP
		\end{center}
		
		By induction hypothesis, we have $(\varphi,\seq{i});\Phi;\Gamma, a \COL \ServT{\seq{i}}{K}{\seq{T}}{U}, b \COL \ServT{\seq{j}}{L}{\seq{S}}{(W_0 \PAR W_1)}   , \seq{\var} \COL \seq{T} \p \psub{P}{c}{b} \C K$. So, we have the proof
		
		\begin{center}
			\footnotesize 
			\AXC{$(\varphi,\seq{i});\Phi;\Gamma, a \COL \ServT{\seq{i}}{L}{\seq{T}}{U}, b \COL \ServT{\seq{j}}{L}{\seq{S}}{(W_0 \PAR W_1)}, \seq{\var} \COL \seq{T} \p \psub{P}{c}{b} \C K$}
			\UIC{$\fifi;\Up{J_c}!\Gamma, a \COL \ServT{\seq{i}}{K}{\seq{T}}{!\IU{[0,0]}{J_c}.U}, b \COL \ServT{\seq{j}}{L}{\seq{S}}{(\Up{J_c} !(W_0 \PAR W_1))} \p !\In{a}{\seq{\var}}.\psub{P}{c}{b} \C [0,0]$}
			\DP
		\end{center} 
		
		\item \textbf{Case} of input, with $a = b$. 
		
		\begin{center}
			\small 
			\AXC{$(\varphi,\seq{i});\Phi;\Gamma, b \COL \ServT{\seq{i}}{K}{\seq{T}}{U}, c \COL \ServT{\seq{i}}{K}{\seq{T}}{W_1},  \seq{\var} \COL \seq{T} \p P \C K$}
			\UIC{$\fifi;\Up{J_c}!\Gamma, b \COL \ServT{\seq{i}}{K}{\seq{T}}{!\IU{[0,0]}{J_c}.U}, c \COL \ServT{\seq{i}}{K}{\seq{T}}{(\Up{J_c} !W_1)} \p !\In{a}{\seq{\var}}.P \C [0,0]$}
			\DP
		\end{center}
		
		By induction hypothesis, we obtain $(\varphi,\seq{i});\Phi;\Gamma, b \COL \ServT{\seq{i}}{K}{\seq{T}}{(U \PAR W_1)}, \seq{\var} \COL \seq{T} \p P \C K$
		
		\begin{center}
			\small 
			\AXC{$(\varphi,\seq{i});\Phi;\Gamma, b \COL \ServT{\seq{i}}{K}{\seq{T}}{(U \PAR W_1)}, \seq{\var} \COL \seq{T} \p \psub{P}{c}{b} \C K$}
			\UIC{$\fifi;\Up{J_c}!\Gamma, b \COL \ServT{\seq{i}}{K}{\seq{T}}{!\IU{[0,0]}{J_c}.(U \PAR W_1)} \p !\In{a}{\seq{\var}}.\psub{P}{c}{b} \C [0,0]$}
			\UIC{$\fifi;\Up{J_c}!\Gamma, b \COL \ServT{\seq{i}}{K}{\seq{T}}{(!\IU{[0,0]}{J_c}.U \PAR \Up{J_c} !W_1)} \p !\In{a}{\seq{\var}}.\psub{P}{c}{b} \C [0,0]$}
			\DP
		
	\end{center}
		This last derivation is obtained by subtyping. Indeed, by definition we have $\Up{J_c} !W_1 = ! \Up{J_c} W_1$. Then, 
		$$ !\IU{[0,0]}{J_c}.U \PAR ! \Up{J_c} W_1 \congr ! (\IU{[0,0]}{J_c}.U \PAR \Up{J_c} W_1) \SubU !\IU{[0,0]}{J_c}.(U \PAR W_1) $$
		\item The case $a = c$ is similar to the previous one. 
	\end{itemize}
\end{enumerate}

This concludes the proof.

\subsection{Congruence Equivalence}
\label{ss:congruence}

In order to prove the soundness theorem, we need first a lemma saying that the congruence relation behaves well with typing. 

\begin{lemma}[Congruence and Typing]
	Let $P$ and $Q$ be annotated processes such that $P \congr Q$. Then, $\fifi; \Gamma \p P \C K$ if and only if $\fifi; \Gamma \p Q \C K$.
	\label{l:parallelcongrtype}
\end{lemma}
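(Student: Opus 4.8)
The plan is to prove Lemma~\ref{l:parallelcongrtype} by induction on the derivation of $P \congr Q$. Since $\congr$ is defined as the least congruence relation closed under a finite list of base equations (the three rules for $\PAR$, the two rules for $\nu$, and the four rules specific to annotated processes: $0:P \congr P$, $m:(P\PAR Q)\congr(m:P)\PAR(m:Q)$, $m:(\nu a)P \congr (\nu a)(m:P)$, $m:(n:P)\congr (m+n):P$), it suffices to check each base equation directly and then verify that the property is preserved under reflexivity, symmetry, transitivity, and the congruence (context-closure) rules. Symmetry is what lets us prove only one direction of the iff for each base case; transitivity is immediate; the congruence closure rules reduce to showing that if $\fifi;\Gamma\p P \C K \iff \fifi;\Gamma\p P'\C K$ then the same holds with $P,P'$ placed under $\PAR R$, under $(\nu a)$, or under $n:(-)$, which follows by inverting and re-applying the (par), (nu), and annotation typing rules respectively, using that these rules are essentially invertible up to the (subtype) rule.

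The core of the work is therefore the finitely many base equations. For the $\PAR$-associativity/commutativity cases and $P\PAR 0 \congr P$, I would use that the (par) rule splits the context via the partial operation $\Gamma\PAR\Delta$, which is itself associative, commutative, and has the all-zero-usage context (restricted to the relevant names) as a neutral element up to the subusage relation $\fifi\p U\SubU \ZU$; combined with $K_1 \Ilub K_2$ being associative, commutative, and absorbing $[0,0]$, one reconstructs the derivation on the other side, possibly inserting a (subtype) step. For the $\nu$-rules one uses that (nu) just removes a reliable binding from the context and that reliability is preserved, plus the side condition on free names to justify the scope-extrusion equation. The genuinely new cases are the four annotation equations; these are handled using the derived annotation rule $\fifi;\Gamma\p P\C K$ implies $\fifi;\Up{[m,m]}\Gamma\p m:P\C K+[m,m]$, together with the arithmetic facts $\Up{[0,0]}\Gamma=\Gamma$, $\Up{[m,m]}(\Gamma\PAR\Delta)=\Up{[m,m]}\Gamma\PAR\Up{[m,m]}\Delta$ (which holds because delaying distributes over $\PAR$ on usages), $\Up{[m,m]}$ commuting with name binding (delaying does nothing to a name not in the delayed context, and more to the point $(\nu a)$ just drops the binding), and $\Up{[m,m]}\Up{[n,n]}\Gamma = \Up{[m+n,m+n]}\Gamma$ with $(K+[n,n])+[m,m] = K+[m+n,m+n]$.

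The main obstacle I anticipate is that the annotation typing rule as stated is not literally invertible: a derivation of $\fifi;\Gamma\p m:P\C K'$ need not end with that exact rule — it could end with (subtype), or the context $\Gamma$ need not be syntactically of the form $\Up{[m,m]}\Gamma_0$. So for the annotation base cases (and similarly for the $\PAR$ cases) I would first prove a small inversion/generation lemma: if $\fifi;\Gamma\p m:P\C K'$ then there exist $\Gamma_0, K$ with $\fifi\p\Gamma\subtype \Up{[m,m]}\Gamma_0$, $\fifi\q K+[m,m]\subseteq K'$, and $\fifi;\Gamma_0\p P\C K$ — and symmetrically a "canonicalization" showing that whenever the hypotheses of the annotation rule hold one can push the resulting derivation through any surrounding (subtype). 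This requires checking that subtyping interacts correctly with $\Up{[m,m]}$, which follows from Lemma~\ref{l:subusageprop}(1) (subusage is invariant under delaying) lifted to contexts. Once this generation machinery is in place, each base equation is a short computation, and the induction goes through routinely.
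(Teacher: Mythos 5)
Your overall strategy --- induction on the derivation of $P \congr Q$, normalizing away the non-syntax-directed (subtype) rule via a generation/inversion lemma, and then checking each base equation using distributivity of $\Up{[m,m]}$ over $\PAR$, the identities $\Up{[0,0]}\Gamma = \Gamma$ and $\Up{[m,m]}\Up{[n,n]}\Gamma = \Up{[m+n,m+n]}\Gamma$, compatibility of subusage with delaying (Point~1 of Lemma~\ref{l:subusageprop}), and preservation of reliability under delaying (Point~3 of Lemma~\ref{l:delayinginvariance}) --- is exactly the route the paper takes, and the auxiliary facts you list are the right ones.

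One step of your plan is wrong as stated: the claim that ``symmetry lets us prove only one direction of the iff for each base case.'' The inductive invariant must be the full biconditional (otherwise the symmetry-closure case of the induction does not go through), so at each base equation $L \congr R$ you must establish \emph{both} implications, and here the two directions are genuinely different arguments rather than mirror images of one another. For scope extrusion $(\nu a)(P \PAR Q) \congr (\nu a)P \PAR Q$, one direction needs Weakening (Lemma~\ref{l:weakening}, to introduce $a$ with zero usage into the context typing $Q$) while the converse needs Strengthening (Lemma~\ref{l:strengthening}, to remove it, using that $a$ is not free in $Q$); for $m:(\nu a)P \congr (\nu a)(m:P)$, one direction uses that reliability of $T$ implies reliability of $\Up{[m,m]}T$, whereas the converse combines closure of reliability under subusage (Point~3 of Lemma~\ref{l:subusageprop}) with the reverse implication of Point~3 of Lemma~\ref{l:delayinginvariance}. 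This roughly doubles the case analysis you anticipate but requires no new ideas; everything else in your outline matches the paper's proof.
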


The proof is an induction on $P \congr Q$, relying on all the previous lemma, and especially the lemmas on delaying for congruence rules specific to annotated processes. 

\begin{proof}  
	Note that for a process $P$, the typing system is not syntax-directed because of the subtyping rule. However, by reflexivity and transitivity of subtyping, we can always assume that a proof has exactly \emph{one} subtyping rule before any syntax-directed rule. Moreover, notice that in those kinds of proof, the top-level rule of subtyping can be ignored. Indeed, we can always simulate exactly the same subtyping rule for both $P$ and $Q$
	We first show this propriety for base case of congruence. The reflexivity is trivial then we have those interesting cases:
	
	\begin{itemize}
		
		\item \textbf{Case} $(\nu a) P \PAR Q \congr (\nu a) (P \PAR Q)$ with $a$ not free in $Q$. Suppose $\fifi;\Gamma \PAR \Delta \p {(\nu a) P \PAR Q} \C K$. Then the proof has the shape:
		
		\begin{prooftree}
			\footnotesize
			\AXC{$\pi$}
			\UIC{$\fifi;\Gamma', a : T \p {P} \C K_1'$}
			\AXC{$T$ reliable} 
			\BIC{$\fifi;\Gamma' \p {(\nu a) P} \C K_1'$}
			\AXC{$\fifi \p \Gamma \subtype \Gamma' ; K_1' \subseteq K_1$}
			\BIC{$\fifi;\Gamma \p {(\nu a)P} \C K_1$}
			\AXC{$\pi'$}
			\UIC{$\fifi;\Delta \p {Q} \C K_2$}
			\BIC{$\fifi; \Gamma \PAR \Delta \p {(\nu a)P \PAR Q} \C K_1 \Ilub K_2$}
		\end{prooftree}
		
		By weakening (Lemma~\ref{l:weakening}), we obtain a proof $\pi'_w$ of $\fifi ;\Delta, a : (T / \ZU) \p Q \C K_2$. Thus, we have the following derivation:
		
		\begin{prooftree}
			\scriptsize 
			\AXC{$\pi$}
			\UIC{$\fifi; \Gamma', a : T \p {P} \C K_1'$}
			\AXC{$\fifi \p \Gamma \subtype \Gamma' ; K_1' \subseteq K_1$}
			\BIC{$\fifi; \Gamma, a : T \p {P} \C K_1$}
			\AXC{$\pi'_w$}
			\UIC{$\fifi; \Delta, a : (T / \ZU) \p {Q} \C K_2$}
			\BIC{$\fifi;\Gamma \PAR \Delta , a : T \p {P \PAR Q} \C K_1 \Ilub K_2$}
			\AXC{$T$ reliable} 
			\BIC{$\fifi; \Gamma \PAR \Delta \p {(\nu a)(P \PAR Q)} \C K_1 \Ilub K_2$}
		\end{prooftree}
		
		For the converse, suppose $\fifi; \Gamma \p {(\nu a)(P \PAR Q)} \C K$. Then the proof has the shape:
		
		\begin{prooftree}
			\small 	
			\AXC{$\pi$}
			\UIC{$\fifi; \Gamma_P, a : T_P \p {P} \C K_1$}
			\AXC{$\pi'$}
			\UIC{$\fifi;\Gamma_Q, a : T_Q \p {Q} \C K_2$}
			\BIC{$\fifi; \Gamma_P \PAR \Gamma_Q, a : T_P \PAR T_Q \p {P \PAR Q} \C K_1 \Ilub K_2$}
			\AXC{$(1)$}
			\BIC{$\fifi;\Gamma, a : T \p {P \PAR Q} \C K$}
			\UIC{$\fifi;\Gamma \p {(\nu a)(P \PAR Q)} \C K$}
		\end{prooftree}
		
		with $T$ reliable, and where $(1)$ is:
		$$\fifi \p \Gamma \subtype \Gamma_P \PAR \Gamma_Q ; T \subtype T_P \PAR T_Q ; K_1 \Ilub K_2 \subseteq K$$ 
		Since $a$ is not free in $Q$, by Lemma~\ref{l:strengthening}, from $\pi'$ we obtain a proof $\pi'_s$ of $\fifi; \Gamma_Q \p {Q} \C K_2$. We then derive the following typing:
		
		\begin{prooftree}
			\scriptsize 
			\AXC{$\pi$}
			\UIC{$\fifi;\Gamma_P, a : T_P \p P \C K_1$}
			\AXC{$\fifi \p T \subtype T_P \PAR T_Q \subtype T_P$}
			\BIC{$\fifi;\Gamma_P, a : T \p {P} \C K_1$}
			\AXC{$T$ reliable} 
			\BIC{$\fifi;\Gamma_P \p {(\nu a)P} \C K_1$}
			\AXC{$\pi'_s$}
			\UIC{$\fifi;\Gamma_Q \p {Q} \C K_2$}
			\BIC{$\fifi;\Gamma_P \PAR \Gamma_Q \p {(\nu a)P \PAR Q} \C K_1 \Ilub K_2 \qquad \fifi \p \Gamma \subtype \Gamma_P \PAR \Gamma_Q ; K_1 \Ilub K_2 \subseteq K$}
			\UIC{$\fifi ; \Gamma \p {(\nu a)P \PAR Q} \C K$}
		\end{prooftree}		
		
		\item \textbf{Case} $m : (P \PAR Q) \congr m : P \PAR m : Q$. Suppose $\fifi; \Up{[m,m]} \Gamma \p  {m : (P \PAR Q)}\C K + [m,m] $. Then we have: 
		
		\begin{prooftree}
			\footnotesize
			\AXC{$\pi_P$}
			\UIC{$\fifi; \Gamma_P \p {P} \C K_1 $}
			\AXC{$\pi_Q$}  
			\UIC{$\fifi; \Gamma_Q \p {Q}\C K_2 $}
			\BIC{$\fifi;\Gamma_P \PAR \Gamma_Q \p {(P \PAR Q)}\C K_1 \Ilub K_2 $} 
			\AXC{$\fifi \p  \Gamma \subtype  \Gamma_P \PAR \Gamma_Q ; K_1 \Ilub K_2 \subseteq K$} 
			\BIC{$\fifi;\Gamma \p {(P \PAR Q)} \C K $} 
			\UIC{$\fifi; \Up{[m,m]} \Gamma \p  {m : (P \PAR Q)}\C K + [m,m]$}
		\end{prooftree}
		
		By Lemma~\ref{l:subusageprop}, from $\fifi \p \Gamma \subtype \Gamma_P \PAR \Gamma_Q$ we obtain $\fifi \p \Up{[m,m]} \Gamma \subtype (\Up{[m,m]} \Gamma_P) \PAR (\Up{[m,m]} \Gamma_Q)$.
		So, we give the following derivation: 
		
		\begin{prooftree}
			\footnotesize
			\AXC{$\pi_P$}
			\UIC{$\fifi;\Gamma_P \p P \C K_1 $}
			\UIC{$\fifi; \Up{[m,m]} \Gamma_P \p m: P \C K_1 + [m,m] $}
			\AXC{$\pi_Q$}
			\UIC{$\fifi;\Gamma_Q \p Q \C K_2 $}
			\UIC{$\fifi; \Up{[m,m]} \Gamma_Q \p m : Q  \C K_2 + [m,m]$}
			\BIC{$\fifi;(\Up{[m,m]} \Gamma_P) \PAR (\Up{[m,m]} \Gamma_Q) \p m : P \PAR m: Q \C (K_1 \Ilub K_2) + [m,m]$}
			\AXC{(1)} 
			\BIC{$\fifi; \Up{[m,m]} \Gamma \p  {m : P \PAR m : Q} \C (K_1 \Ilub K_2) + [m,m] \qquad \fifi \q K_1 \Ilub K_2 \subseteq K $}
			\UIC{$\fifi; \Up{[m,m]} \Gamma \p  {m : P \PAR m : Q} \C K + [m,m] $}
		\end{prooftree}
	
		where $(1)$ is
		$$ \fifi \p  \Up{[m,m]} \Gamma \subtype  (\Up{[m,m]} \Gamma_P) \PAR (\Up{[m,m]} \Gamma_Q)$$ 
		
		Now, suppose we have a typing $\fifi; \Gamma_P \PAR \Gamma_Q \p  {m : P \PAR m : Q} \C K_1 \Ilub K_2 $. The typing has the shape: 
		
		\begin{prooftree}
			\footnotesize
			\AXC{$\pi_P$}
			\UIC{$\fifi;\Delta_P \p {P} \C {K_P}$}
			\UIC{$\fifi;\Up{[m,m]} \Delta_P \p {m : P } \C K_P + [m,m]$} 
			\doubleLine 
			\UIC{$\fifi; \Gamma_P \p {m: P} \C K_1 $}
			\AXC{$\pi_Q$}
			\UIC{$\fifi;\Delta_Q \p {Q} \C {K_Q}$}
			\UIC{$\fifi;\Up{[m,m]} \Delta_Q \p {m : Q } \C K_Q + [m,m]$} 
			\doubleLine 
			\UIC{$\fifi; \Gamma_Q \p {m: Q} \C K_2 $}
			\BIC{$\fifi; \Gamma_P \PAR \Gamma_Q \p  {m : P \PAR m : Q} \C K_1 \Ilub K_2 $}
		\end{prooftree}
		
		with 
		$$\fifi \p \Gamma_P \subtype \Up{[m,m]} \Delta_P \qquad \fifi \p \Gamma_Q \subtype \Up{[m,m]} \Delta_Q $$
		$$\fifi \q K_P + [m,m] \subseteq K_1 \qquad \fifi \q  K_Q + [m,m] \subseteq K_2  $$
		
		So, we derive:
		
		\begin{prooftree}
			\small 
			\AXC{$\pi_P$}
			\UIC{$\fifi;\Delta_P \p {P} \C {K_P}$}
			\AXC{$\pi_Q$}
			\UIC{$\fifi;\Delta_Q \p {Q} \C {K_Q}$}
			\BIC{$\fifi; \Delta_P \PAR \Delta_Q \p  (P \PAR Q) \C K_P \Ilub K_Q$} 
			\UIC{$\fifi; \Up{[m,m]}(\Delta_P \PAR \Delta_Q) \p  {m : (P \PAR Q)} \C (K_P \Ilub K_Q) + [m,m] \qquad (1)$}
			\UIC{$\fifi; \Gamma_P \PAR \Gamma_Q \p  {m : (P \PAR Q)} \C K $}
		\end{prooftree}
		
		where $(1)$ is 
		$$\fifi \p \Gamma_P \PAR \Gamma_Q \subtype \Up{[m,m]}(\Delta_P \PAR \Delta_Q); (K_P \Ilub K_Q) + [m,m] \subseteq K_1 \Ilub K_2$$

		This concludes this case.
		
		\item \textbf{Case} $m : (\nu a) P \congr (\nu a)(m : P)$.
		
		Suppose $\fifi; \Up{[m,m]} \Gamma \p m : (\nu a) P \C K + [m,m]$. Then, the typing has the shape:
		
		\begin{prooftree}
			\footnotesize
			\AXC{$\pi$}
			\UIC{$\fifi; \Gamma', a : T \p P \C K'$}
			\AXC{$T$ reliable}
			\BIC{$\fifi; \Gamma' \p (\nu a) P \C K'$}
			\AXC{$\fifi \p \Gamma \subtype \Gamma' ; K' \subseteq K $}
			\BIC{$\fifi; \Gamma \p (\nu a) P \C K$}
			\UIC{$\fifi; \Up{[m,m]} \Gamma \p m : (\nu a) P \C K + [m,m]$}
		\end{prooftree}
		
		By Lemma~\ref{l:delayinginvariance}, we know that $\Up{[m,m]} T$ is reliable. So, we have:
		
		\begin{prooftree}
			\footnotesize
			\AXC{$\pi$}
			\UIC{$\fifi; \Gamma', a : T \p P \C K'$}
			\AXC{$\fifi \p \Gamma \subtype \Gamma' ; K' \subseteq K $}
			\BIC{$\fifi; \Gamma, a : T \p P \C K$}
			\UIC{$\fifi; \Up{[m,m]} (\Gamma, a : T) \p (m : P) \C K + [m,m]$}
			\AXC{$\Up{[m,m]} T$ reliable} 
			\BIC{$\fifi; \Up{[m,m]} \Gamma \p (\nu a) (m : P) \C K + [m,m]$}
		\end{prooftree}
		
		For the converse, suppose we have $\fifi; \Gamma \p (\nu a) (m :P) \C K$. Then, the typing has the shape:
		
		\begin{prooftree}
			\small 
			\AXC{$\pi$}
			\UIC{$\fifi; \Gamma', a : T' \p P \C K'$}
			\UIC{$\fifi; \Up{[m,m]} \Gamma', a : \Up{[m,m]} T' \p (m :P) \C K' + [m,m]$}
			\AXC{$(1)$}
			\BIC{$\fifi; \Gamma, a : T \p (m :P) \C K \qquad T$ reliable} 
			\UIC{$\fifi; \Gamma \p (\nu a) (m :P) \C K$}
		\end{prooftree}
		
		where $(1)$ is 
		$$\fifi \p \Gamma \subtype \Up{[m,m]} \Gamma'; T \subtype \Up{[m,m]} T'; K' + [m,m] \subseteq K$$
		As $T$ is reliable, by Lemma~\ref{l:subusageprop}, we have $\Up{[m,m]} T'$ reliable. Then, by Lemma~\ref{l:delayinginvariance}, we have $T'$ reliable. So, we give the typing:
		
		\begin{prooftree}
			\footnotesize
			\AXC{$\pi$}
			\UIC{$\fifi; \Gamma', a : T' \p P \C K'$}
			\AXC{$T'$ reliable}
			\BIC{$\fifi; \Gamma' \p (\nu a) P \C K'$}
			\UIC{$\fifi; \Up{[m,m]} \Gamma' \p m : (\nu a) P \C K' + [m,m]$}
			\AXC{$\fifi \p \Gamma \subtype \Up{[m,m]} \Gamma'; K' + [m,m] \subseteq K$}
			\BIC{$\fifi; \Gamma \p m : (\nu a) P \C K$}
		\end{prooftree}
		
	\end{itemize}
	
	This concludes the interesting base case. Symmetry and transitivity are direct, and for the cases of contextual congruence, the proof is straightforward. 
\end{proof}

\subsection{Proof of Theorem~\ref{t:parallelsubjectreduction} (Subject Reduction)} 
\label{ss:subjectreduction} 

We now detail some cases for the proof of Theorem~\ref{t:parallelsubjectreduction}. 

Note that for a process $P$, the typing system is not syntax-directed because of the subtyping rule. However, by reflexivity and transitivity of subtyping, we can always assume that a proof has exactly \emph{one} subtyping rule before any syntax-directed rule. Moreover, notice that in those kinds of proof, the top-level rule of subtyping can be ignored. Indeed, we can always simulate exactly the same subtyping rule for both $P$ and $Q$
We now proceed by doing the case analysis on the rules of Figure~\ref{f:parallelreduction}. In order to simplify the proof, we will also consider that types and indexes invariant by subtyping (like the complexity in a server) are not renamed with subtyping. Note that this only add cumbersome notations but it does not change the core of the proof:

\begin{itemize}

	\item \textbf{Case} $(n : !\In{a}{\seq{\var}}.P) \PAR (m : \Out{a}{\seq{e}}.Q) \pred (n : !\In{a}{\seq{\var}}.P) \PAR (\max(m,n) : (\psub{P}{\seq{\var}}{\seq{e}} \PAR Q))$. Consider the typing $\fifi; \Gamma_0 \PAR \Delta_0, a : \ServT{\seq{i}}{K_a}{\seq{T}}{(U_0 \PAR V_0)} \p (n : !\In{a}{\seq{\var}}.P) \PAR (m :\Out{a}{\seq{e}}. Q) \C K_0 \Ilub K_0'$. The first rule is the rule for parallel composition, then the proof is split into the two following subtree:
	
	\begin{prooftree}
		\footnotesize 
		\AXC{$\pi_P$}
		\UIC{$(\varphi,\seq{i});\Phi;\Gamma_2, a : \ServT{\seq{i}}{K_a}{\seq{T}}{U_2}, \seq{v} : \seq{T} \p P \C K_a $}
		\UIC{$\fifi; \Up{J_c} !\Gamma_2, a : \ServT{\seq{i}}{K_a}{\seq{T}}{!\IU{[0,0]}{J_c}.U_2} \p !\In{a}{\seq{\var}}.P \C [0,0] \qquad (3) $}
		\UIC{$\fifi; \Gamma_1, a : \ServT{\seq{i}}{K_a}{\seq{T}}{U_1} \p !\In{a}{\seq{\var}}.P \C K_1$}
		\UIC{$\fifi; \Up{[n,n]} \Gamma_1, a : \ServT{\seq{i}}{K_a}{\seq{T}}{\Up{[n,n]} U_1} \p n : !\In{a}{\seq{\var}}.P \C K_1 + [n,n] \qquad (4)$}
		\UIC{$\fifi; \Gamma_0, a : \ServT{\seq{i}}{K_a}{\seq{T}}{U_0} \p n : !\In{a}{\seq{\var}}.P \C K_0$}
	\end{prooftree}
	
	\begin{prooftree}
		\scriptsize
		\AXC{$\pi_e$}
		\UIC{$\fifi; \Delta_2, a : \ServT{\seq{i}}{K_a}{\seq{T}}{V_2} \p \seq{e} \COL \Isub{\seq{T}}{\seq{i}}{\seq{\INN{I}}}$}
		\AXC{$\pi_Q$}
		\UIC{$\fifi;\Delta_2', a : \ServT{\seq{i}}{K_a}{\seq{T}}{V_2'} \p Q \C K_2$}
		\BIC{$\fifi; \Up{J_c'} (\Delta_2 \PAR \Delta_2'), a : \ServT{\seq{i}}{K_a}{\seq{T}}{\OU{[0,0]}{J_c'}.(V_2 \PAR V_2')} \p \Out{a}{\seq{e}}.Q \C J_c';(K_2 \Ilub \Isub{K_a}{\seq{i}}{\seq{\INN{I}}}) \qquad (1)$}
		\UIC{$\fifi; \Delta_1, a : \ServT{\seq{i}}{K_a}{\seq{T}}{V_1} \p \Out{a}{\seq{e}}.Q \C K_1'$}
		\UIC{$\fifi; \Up{[m,m]} \Delta_1, a : \ServT{\seq{i}}{K_a}{\seq{T}}{\Up{[m,m]} V_1} \p m :\Out{a}{\seq{e}}.Q \C K_1' + [m,m] \qquad (2)$}
		\UIC{$\fifi; \Delta_0, a : \ServT{\seq{i}}{K_a}{\seq{T}}{V_0} \p m :\Out{a}{\seq{e}}. Q \C K_0'$}
	\end{prooftree}
	
	where $(1)$ is
	\footnotesize 
	$$\fifi \p \Delta_1 \subtype \Up{J_c'} (\Delta_2 \PAR \Delta_2') \qquad \fifi \p V_1 \SubU \OU{[0,0]}{J_c'}.(V_2 \PAR V_2') \qquad \fifi \q J_c';(K_2 \Ilub \Isub{K_a}{\seq{i}}{\seq{\INN{I}}}) \subseteq K_1' $$
	\normalsize 
	$(2)$ is 
	$$ \fifi \p \Delta_0 \subtype \Up{[n,n]} \Delta_1 ; V_0 \SubU \Up{[m,m]} V_1 ; K_1' + [m,m] \subseteq K_0' $$
	$(3)$ is 
	$$\fifi \p \Gamma_1 \subtype \Up{J_c} !\Gamma_2 ; U_1 \SubU !\IU{[0,0]}{J_c}.U_2 ; [0,0] \subseteq K_1$$	
	$(4)$ is 
	$$\fifi \p \Gamma_0 \subtype \Up{[n,n]} \Gamma_1 ; U_0 \SubU \Up{[n,n]} U_1 ; K_1 + [n,n] \subseteq K_0$$

	First, by the index substitution lemma (Lemma~\ref{l:indexsub}), from $\pi_P$ we obtain a proof:
	
	$$ \Isub{\pi_P}{\seq{i}}{\seq{\INN{I}}}: \qquad \fifi;\Gamma_2, a : \ServT{\seq{i}}{K_a}{\seq{T}}{U_2}, \seq{v} : \Isub{\seq{T}}{\seq{i}}{\seq{\INN{I}}} \p P \C \Isub{K_a}{\seq{i}}{\seq{\INN{I}}}  $$
	
	Since the index variables $\seq{i}$ can only be free in $\seq{T}$ and $K_a$. 
	
	Then, we know that $\Gamma_0 \PAR \Delta_0$ is defined. Moreover, we have
	\small 
	$$ \fifi \p \Gamma_0 \subtype \Up{[n,n]} \Gamma_1 \qquad \fifi \p \Gamma_1 \subtype \Up{J_c} ! \Gamma_2 \qquad \fifi \p \Delta_0 \subtype \Up{[m,m]} \Delta_1 \qquad \Delta_1 \subtype \Up{J_c'} (\Delta_2 \PAR \Delta_2') $$ 
	\normalsize
	So, for the channel and server types, in those seven contexts, the shape of the type does not change (only the usage can change). Let us look at base types. For a context $\Gamma$, we write $\Gamma^{\Nat}$ the restriction of $\Gamma$ to base types. Then, we have: 
	\small   
	$$ \Gamma_0^{\Nat} = \Delta_0^{\Nat} \qquad \fifi \p \Gamma_0^{\Nat} \subtype \Gamma_1^{\Nat} \subtype \Gamma_2^{\Nat} \qquad \fifi \p \Delta_0^{\Nat} \subtype \Delta_1^{\Nat} \subtype \Delta_2^{\Nat} \qquad  \Delta_2^{\Nat} = \Delta_2'^{\Nat} $$
	\normalsize
	Similarly, we note $\Gamma^{\nu}$ the restriction of a context to its channel and server types. Thus, we have $\Gamma = \Gamma^{\nu}, \Gamma^{\Nat}$.
	
	So, from $\pi_e$ and $\Isub{\pi_P}{\seq{i}}{\seq{\INN{I}}}$ we obtain by subtyping: 
	$$ \pi'_P: \qquad \fifi;\Gamma_0^{\Nat}, \Gamma_2^{\nu}, a : \ServT{\seq{i}}{K_a}{\seq{T}}{U_2}, \seq{v} : \Isub{\seq{T}}{\seq{i}}{\seq{\INN{I}}} \p P \C \Isub{K_a}{\seq{i}}{\seq{\INN{I}}}  $$
	$$ \pi_e': \qquad \fifi;\Gamma_0^{\Nat}, \Delta_2^{\nu}, a : \ServT{\seq{i}}{K_a}{\seq{T}}{V_2} \p \seq{e} \COL \Isub{\seq{T}}{\seq{i}}{\seq{\INN{I}}} $$
	So, we use the substitution lemma (Lemma~\ref{l:subsitution}) and we obtain:	
	$$ \pi_{sub}: \qquad \fifi;\Gamma_0^{\Nat}, (\Gamma_2^{\nu} \PAR \Delta_2^{\nu}), a : \ServT{\seq{i}}{K_a}{\seq{T}}{(U_2 \PAR V_2)} \p \psub{P}{\seq{v}}{\seq{e}} \C \Isub{K_a}{\seq{i}}{\seq{\INN{I}}}  $$
	As previously, by subtyping from $\pi_Q$, we have: 
	
	$$ \pi'_Q: \qquad \fifi;\Gamma_0^{\Nat}, \Delta_2'^{\nu}, a : \ServT{\seq{i}}{K_a}{\seq{T}}{V_2'} \p Q \C K_2$$
	
	Thus, with the parallel composition rule (as parallel composition of context is defined) and subtyping we have:
	\small 
	$$ \pi_{PQ}: \fifi;\Gamma_0^{\Nat}, (\Gamma_2^{\nu} \PAR \Delta_2^{\nu} \PAR \Delta_2'^{\nu}), a : \ServT{\seq{i}}{K_a}{\seq{T}}{(U_2 \PAR V_2 \PAR V_2')} \p (\psub{P}{\seq{v}}{\seq{e}} \PAR Q) \C K_2 \Ilub \Isub{K_a}{\seq{i}}{\seq{\INN{I}}} $$
	\normalsize
	Let us denote $M = \max(m,n)$., and we define
	$$\mathcal{E} := \Gamma_0^{\Nat}, \Up{[M,M]}(\Gamma_2^{\nu} \PAR \Delta_2^{\nu} \PAR \Delta_2'^{\nu}), a : \ServT{\seq{i}}{K_a}{\seq{T}}{\Up{[M,M]}(U_2 \PAR V_2 \PAR V_2')}$$
	 Thus, we derive the proof: 
	
	\begin{prooftree}
		\footnotesize 
		\AXC{$\pi_{PQ}$}
		\UIC{$\fifi;\Gamma_0^{\Nat}, (\Gamma_2^{\nu} \PAR \Delta_2^{\nu} \PAR \Delta_2'^{\nu}), a : \ServT{\seq{i}}{K_a}{\seq{T}}{(U_2 \PAR V_2 \PAR V_2')} \p (\psub{P}{\seq{v}}{\seq{e}} \PAR Q) \C K_2 \Ilub \Isub{K_a}{\seq{i}}{\seq{\INN{I}}}$}
		\UIC{$\fifi; \mathcal{E} \p M : (\psub{P}{\seq{v}}{\seq{e}} \PAR Q) \C (K_2 \Ilub \Isub{K_a}{\seq{i}}{\seq{\INN{I}}}) + [M,M]$}
	\end{prooftree}
	
	Now, recall that by hypothesis, $U_0 \PAR V_0$ is reliable. We have:
	\scriptsize 
	$$ \fifi \p U_0 \SubU \Up{[n,n]} U_1 \qquad \fifi \p U_1 \SubU !\IU{[0,0]}{J_c}.U_2 \qquad \fifi \p V_0 \SubU \Up{[m,m]} V_1 \qquad \fifi \p V_1 \SubU \OU{[0,0]}{J_c'}(V_2 \PAR V_2') $$
	\normalsize 	
	So, by Point~1 of Lemma~\ref{l:subusageprop}, with transitivity and parallel composition of subusage, we have: 
	$$ \fifi \p U_0 \PAR V_0 \SubU (\Up{[n,n]}U_1) \PAR (\Up{[m,m]} V_1) \SubU !\IU{[n,n]}{J_c}.U_2 \PAR \OU{[m,m]}{J_c'}(V_2 \PAR V_2')$$
	By Point~3 of Lemma~\ref{l:subusageprop}, we have $!\IU{[n,n]}{J_c}.U_2 \PAR \OU{[m,m]}{J_c'}(V_2 \PAR V_2')$ reliable. So, in particular, we have: 
	$$ \fifi \p !\IU{[n,n]}{J_c}.U_2 \PAR \OU{[m,m]}{J_c'}(V_2 \PAR V_2') \ured !\IU{[n,n]}{J_c}.U_2 \PAR \Up{[M,M]} (U_2 \PAR V_2 \PAR V_2') $$
	$$ \fifi \q [n,n] \subseteq [m,m] \IPlus J_c' \qquad \fifi \q [m,m] \subseteq [n,n] \IPlus J_c  $$ 
	Thus, we deduce immediately that neither $J_c$ or $J_c'$ are $[\infty,\infty]$ and that
	$$ \fifi \q [M,M] \subseteq [m,M] \subseteq [n,n] + J_c \qquad \fifi \q [M,M] \subseteq [n,M] \subseteq [m,m] + J_c' $$
	So, we have in particular, with Lemma~\ref{l:subusagedelay} and Point~1 of Lemma~\ref{l:subusageprop} and parallel composition:
	\footnotesize
	$$ \fifi \p \Gamma_0 \PAR \Delta_0 \subtype (\Up{[n,n]} \Gamma_1) \PAR \Up{[m,m]} \Delta_1 \subtype (\Up{[n,n] + J_c} ! \Gamma_2) \PAR (\Up{[m,m] + J_c'}  (\Delta_2 \PAR \Delta_2')) \subtype \Up{[M,M]} (!\Gamma_2 \PAR \Delta_2 \PAR \Delta_2') $$
	\normalsize
	We also have 
	$$\fifi \q (K_2 \Ilub \Isub{K_a}{\seq{i}}{\seq{\INN{I}}}) + [M,M] \subseteq (J_c';(K_2 \Ilub \Isub{K_a}{\seq{i}}{\seq{\INN{I}}}) + [m,m]) \subseteq K_0' $$
	As Thus, we simplify a bit the derivation given above, and we have: 
	
	\begin{prooftree}
		\footnotesize 
		\AXC{$\pi_{PQ}$}
		\UIC{$\fifi;\Gamma_0^{\Nat}, (\Gamma_2^{\nu} \PAR \Delta_2^{\nu} \PAR \Delta_2'^{\nu}), a : \ServT{\seq{i}}{K_a}{\seq{T}}{(U_2 \PAR V_2 \PAR V_2')} \p (\psub{P}{\seq{v}}{\seq{e}} \PAR Q) \C K_2 \Ilub \Isub{K_a}{\seq{i}}{\seq{\INN{I}}}$}
		\UIC{$\fifi;\mathcal{E} \p M : (\psub{P}{\seq{v}}{\seq{e}} \PAR Q) \C (K_2 \Ilub \Isub{K_a}{\seq{i}}{\seq{\INN{I}}}) + [M,M]$}
		\UIC{$\fifi;\mathcal{E} \p M : (\psub{P}{\seq{v}}{\seq{e}} \PAR Q) \C K_0'$}
	\end{prooftree}
	
	We also have the following derivation: 
	
	\begin{prooftree}
		\footnotesize 
		\AXC{$\pi_P$}
		\UIC{$(\varphi,\seq{i});\Phi;\Gamma_2, a : \ServT{\seq{i}}{K_a}{\seq{T}}{U_2}, \seq{v} : \seq{T} \p P \C K_a $}
		\UIC{$\fifi; \Up{J_c} !\Gamma_2, a : \ServT{\seq{i}}{K_a}{\seq{T}}{!\IU{[0,0]}{J_c}.U_2} \p !\In{a}{\seq{\var}}.P \C [0,0]$}
		\UIC{$\fifi; \Up{[n,n] + J_c} !\Gamma_2, a : \ServT{\seq{i}}{K_a}{\seq{T}}{!\IU{[n,n]}{J_c}.U_2} \p n : !\In{a}{\seq{\var}}.P \C [n,n]$}
		\UIC{$\fifi; \Gamma_0^{\Nat}, \Up{[M,M]} !\Gamma_2^{\nu}, a : \ServT{\seq{i}}{K_a}{\seq{T}}{!\IU{[n,n]}{J_c}.U_2} \p n : !\In{a}{\seq{\var}}.P \C K_0$}
	\end{prooftree}
	
	So, by parallel composition of those two derivation we obtain a proof of:
	\footnotesize
	$$\fifi; \Gamma_0^{\Nat}, \Up{[M,M]} (!\Gamma_2^{\nu} \PAR \Gamma_2^{\nu} \PAR \Delta_2^{\nu} \PAR \Delta_2'^{\nu}), a : \ServT{\seq{i}}{K_a}{\seq{T}}{!\IU{[n,n]}{J_c}.U_2 \PAR (\Up{[M,M]}(U_2 \PAR V_2 \PAR V_2'))} \p $$ 
	$$ (n : !\In{a}{\seq{\var}}.P) \PAR M : (\psub{P}{\seq{v}}{\seq{e}} \PAR Q) \C K_0 \Ilub K_0' $$
	\normalsize
	By Point~2 of Lemma~\ref{l:subusageprop}, there exists $W$ such that:
	
	$$ \fifi \p U_0 \PAR V_0 \ured^* W \qquad \fifi \p W \SubU !\IU{[n,n]}{J_c}.U_2 \PAR \Up{[M,M]} (U_2 \PAR V_2 \PAR V_2') $$
	
	So, by subtyping we have a proof: 
	
	$$\fifi; \Gamma_0^{\Nat}, \Gamma_0^{\nu} \PAR \Delta_0^{\nu}, a : \ServT{\seq{i}}{K_a}{\seq{T}}{W} \p (n : !\In{a}{\seq{\var}}.P) \PAR M : (\psub{P}{\seq{v}}{\seq{e}} \PAR Q) \C K_0 \Ilub K_0' $$
	
	This concludes this case. 
	
	\item \textbf{Case} $(n : \In{a}{\seq{\var}}.P) \PAR (m : \Out{a}{\seq{e}}.Q) \pred (\max(m,n) : (\psub{P}{\seq{\var}}{\seq{e}} \PAR Q))$. Consider the typing $\fifi; \Gamma_0 \PAR \Delta_0, a : \ChT{\seq{T}}{(U_0 \PAR V_0)} \p (n : \In{a}{\seq{\var}}.P) \PAR (m :\Out{a}{\seq{e}}. Q) \C K_0 \Ilub K_0'$. The first rule is the rule for parallel composition, then the proof is split into the two following subtree:
	
	\begin{prooftree}
		\small  
		\AXC{$\pi_P$}
		\UIC{$\fifi;\Gamma_2, a : \ChT{\seq{T}}{U_2}, \seq{v} : \seq{T} \p P \C K_2 $}
		\UIC{$\fifi; \Up{J_c} \Gamma_2, a : \ChT{\seq{T}}{\IU{[0,0]}{J_c}.U_2} \p \In{a}{\seq{\var}}.P \C J_c;K_2 \qquad (3) $}
		\UIC{$\fifi; \Gamma_1, a : \ChT{\seq{T}}{U_1} \p \In{a}{\seq{\var}}.P \C K_1$}
		\UIC{$\fifi; \Up{[n,n]} \Gamma_1, a : \ChT{\seq{T}}{\Up{[n,n]} U_1} \p n : \In{a}{\seq{\var}}.P \C K_1 + [n,n] \qquad (4)$}
		\UIC{$\fifi; \Gamma_0, a : \ChT{\seq{T}}{U_0} \p n : \In{a}{\seq{\var}}.P \C K_0$}
	\end{prooftree}
	
	\begin{prooftree}
		\small 
		\AXC{$\pi_e$}
		\UIC{$\fifi; \Delta_2, a : \ChT{\seq{T}}{V_2} \p \seq{e} \COL \seq{T}$}
		\AXC{$\pi_Q$}
		\UIC{$\fifi;\Delta_2', a : \ChT{\seq{T}}{V_2'} \p Q \C K_2'$}
		\BIC{$\fifi; \Up{J_c'} (\Delta_2 \PAR \Delta_2'), a : \ChT{\seq{T}}{\OU{[0,0]}{J_c'}.(V_2 \PAR V_2')} \p \Out{a}{\seq{e}}.Q \C J_c';K_2'  \qquad (1)$}
		\UIC{$\fifi; \Delta_1, a : \ChT{\seq{T}}{V_1} \p \Out{a}{\seq{e}}.Q \C K_1'$}
		\UIC{$\fifi; \Up{[m,m]} \Delta_1, a : \ChT{\seq{T}}{\Up{[m,m]} V_1} \p m :\Out{a}{\seq{e}}.Q \C K_1' + [m,m] \qquad (2)$}
		\UIC{$\fifi; \Delta_0, a : \ChT{\seq{T}}{V_0} \p m :\Out{a}{\seq{e}}. Q \C K_0'$}
	\end{prooftree}
	
	where $(1)$ is: 
	$$\fifi \p \Delta_1 \subtype \Up{J_c'} (\Delta_2 \PAR \Delta_2') \qquad \fifi \p V_1 \SubU \OU{[0,0]}{J_c'}.(V_2 \PAR V_2') \qquad \fifi \q J_c';K_2' \subseteq K_1' $$
	$(2)$ is:
	$$ \fifi \p \Delta_0 \subtype \Up{[n,n]} \Delta_1 ; V_0 \SubU \Up{[m,m]} V_1 ; K_1' + [m,m] \subseteq K_0' $$
	$(3)$ is:
	$$ \fifi \p \Gamma_1 \subtype \Up{J_c} \Gamma_2 ; U_1 \SubU \IU{[0,0]}{J_c}.U_2 ; J_c;K_2 \subseteq K_1$$
	$(4)$ is:
	$$\fifi \p \Gamma_0 \subtype \Up{[n,n]} \Gamma_1 ; U_0 \SubU \Up{[n,n]} U_1 ; K_1 + [n,n] \subseteq K_0$$
	First, we know that $\Gamma_0 \PAR \Delta_0$ is defined. Moreover, we have
	\small 
	$$ \fifi \p \Gamma_0 \subtype \Up{[n,n]} \Gamma_1 \qquad \fifi \p \Gamma_1 \subtype \Up{J_c}  \Gamma_2 \qquad \fifi \p \Delta_0 \subtype \Up{[m,m]} \Delta_1 \qquad \Delta_1 \subtype \Up{J_c'} (\Delta_2 \PAR \Delta_2') $$ 
	\normalsize
	So, for the channel and server types, in those seven contexts, the shape of the type does not change (only the usage can change). We also have:  
	\small  
	$$ \Gamma_0^{\Nat} = \Delta_0^{\Nat} \qquad \fifi \p \Gamma_0^{\Nat} \subtype \Gamma_1^{\Nat} \subtype \Gamma_2^{\Nat} \qquad \fifi \p \Delta_0^{\Nat} \subtype \Delta_1^{\Nat} \subtype \Delta_2^{\Nat} \qquad  \Delta_2^{\Nat} = \Delta_2'^{\Nat} $$
	\normalsize 
	So, from $\pi_e$ and $\pi_P$ we obtain by subtyping: 
	$$ \fifi;\Gamma_0^{\Nat}, \Gamma_2^{\nu}, a : \ChT{\seq{T}}{U_2}, \seq{v} : \seq{T} \p P \C K_2 \qquad \fifi;\Gamma_0^{\Nat}, \Delta_2^{\nu}, a : \ChT{\seq{T}}{V_2} \p \seq{e} \COL \seq{T}$$
	So, we use the substitution lemma (Lemma~\ref{l:subsitution}) and we obtain:
	$$ \fifi;\Gamma_0^{\Nat}, (\Gamma_2^{\nu} \PAR \Delta_2^{\nu}), a : \ChT{\seq{T}}{(U_2 \PAR V_2)} \p \psub{P}{\seq{v}}{\seq{e}} \C K_2  $$
	As previously, by subtyping from $\pi_Q$, we have: 
	$$ \fifi;\Gamma_0^{\Nat}, \Delta_2'^{\nu}, a : \ChT{\seq{T}}{V_2'} \p Q \C K_2'$$
	Thus, with the parallel composition rule (as parallel composition of context is defined) and subtyping we have:
	$$ \fifi;\Gamma_0^{\Nat}, (\Gamma_2^{\nu} \PAR \Delta_2^{\nu} \PAR \Delta_2'^{\nu}), a : \ChT{\seq{T}}{(U_2 \PAR V_2 \PAR V_2')} \p (\psub{P}{\seq{v}}{\seq{e}} \PAR Q) \C K_2 \Ilub K_2' $$
	Let us denote $M = \max(m,n)$. Thus, we derive the proof: 
	
	\begin{prooftree}
		\scriptsize
		\AXC{$\fifi;\Gamma_0^{\Nat}, (\Gamma_2^{\nu} \PAR \Delta_2^{\nu} \PAR \Delta_2'^{\nu}), a : \ChT{\seq{T}}{(U_2 | V_2  V_2')} \p (\psub{P}{\seq{v}}{\seq{e}} \PAR Q) \C K_2 \Ilub K_2'$}
		\UIC{$\fifi;\Gamma_0^{\Nat}, \Up{[M,M]}(\Gamma_2^{\nu} | \Delta_2^{\nu} | \Delta_2'^{\nu}), a \COL \ChT{\seq{T}}{\Up{[M,M]}(U_2 | V_2 | V_2')} \p M : (\psub{P}{\seq{v}}{\seq{e}} \PAR Q) \C (K_2 \Ilub K_2') + [M,M]$}
	\end{prooftree}
	
	Now, recall that by hypothesis, $U_0 \PAR V_0$ is reliable. We have:
	\scriptsize 
	$$ \fifi \p U_0 \SubU \Up{[n,n]} U_1 \qquad \fifi \p U_1 \SubU \IU{[0,0]}{J_c}.U_2 \qquad \fifi \p V_0 \SubU \Up{[m,m]} V_1 \qquad \fifi \p V_1 \SubU \OU{[0,0]}{J_c'}(V_2 \PAR V_2') $$
	\normalsize
	So, by Point~1 of Lemma~\ref{l:subusageprop}, with transitivity and parallel composition of subusage, we have: 
	
	$$ \fifi \p U_0 \PAR V_0 \SubU (\Up{[n,n]}U_1) \PAR (\Up{[m,m]} V_1) \SubU \IU{[n,n]}{J_c}.U_2 \PAR \OU{[m,m]}{J_c'}(V_2 \PAR V_2')$$
	
	By Point~3 of Lemma~\ref{l:subusageprop}, we have $\IU{[n,n]}{J_c}.U_2 \PAR \OU{[m,m]}{J_c'}(V_2 \PAR V_2')$ reliable. So, in particular, we have: 
	$$ \fifi \p \IU{[n,n]}{J_c}.U_2 \PAR \OU{[m,m]}{J_c'}(V_2 \PAR V_2') \ured \Up{[M,M]} (U_2 \PAR V_2 \PAR V_2') $$
	$$ \fifi \q [n,n] \subseteq [m,m] \IPlus J_c' \qquad \fifi \q [m,m] \subseteq [n,n] \IPlus J_c  $$ 
	Thus, we deduce that
	$$ \fifi \q [M,M] \subseteq [n,n] + J_c \qquad \fifi \q [M,M] \subseteq [m,m] + J_c' $$
	So, we have in particular, with Lemma~\ref{l:subusagedelay} and Point~1 of Lemma~\ref{l:subusageprop} and parallel composition:
	\footnotesize 
	$$ \fifi \p \Gamma_0 \PAR \Delta_0 \subtype (\Up{[n,n]} \Gamma_1) \PAR \Up{[m,m]} \Delta_1 \subtype (\Up{[n,n] + J_c}  \Gamma_2) \PAR (\Up{[m,m] + J_c'}  (\Delta_2 \PAR \Delta_2')) \subtype \Up{[M,M]} (\Gamma_2 \PAR \Delta_2 \PAR \Delta_2') $$
	\normalsize 
	We also have 
	$$ \fifi \q K_2 + [M,M] \subseteq J_c;K_2 + [n,n] \subseteq K_0 \qquad \fifi \q K_2' + [M,M] \subseteq J_c';K_2' + [m,m] \subseteq K_0'  $$
	
	So, we obtain directly $\fifi \q (K_2 \Ilub K_2') + [M,M] \subseteq K_0 \Ilub K_0'$
	
	Thus, we can simplify a bit the derivation given above, and we have:

	\begin{prooftree}
		\scriptsize
		\AXC{$\fifi;\Gamma_0^{\Nat}, (\Gamma_2^{\nu} \PAR \Delta_2^{\nu} \PAR \Delta_2'^{\nu}), a : \ChT{\seq{T}}{(U_2 \PAR V_2 \PAR V_2')} \p (\psub{P}{\seq{v}}{\seq{e}} \PAR Q) \C K_2 \Ilub K_2'$}
		\UIC{$\fifi;\Gamma_0^{\Nat}, \Up{[M,M]}(\Gamma_2^{\nu} | \Delta_2^{\nu} | \Delta_2'^{\nu}), a : \ChT{\seq{T}}{\Up{[M,M]}(U_2 | V_2 | V_2')} \p M : (\psub{P}{\seq{v}}{\seq{e}} \PAR Q) \C (K_2 \Ilub K_2') + [M,M]$}
		\UIC{$\fifi;(\Gamma_0 \PAR \Delta_0), a : \ChT{\seq{T}}{\Up{[M,M]}(U_2 \PAR V_2 \PAR V_2')} \p M : (\psub{P}{\seq{v}}{\seq{e}} \PAR Q) \C K_0 \Ilub K_0'$}
	\end{prooftree}
	
	By Point~2 of Lemma~\ref{l:subusageprop}, there exists $W$ such that:
	
	$$ \fifi \p U_0 \PAR V_0 \ured^* W \qquad \fifi \p W \SubU \Up{[M,M]} (U_2 \PAR V_2 \PAR V_2') $$
	
	So, by subtyping we have a proof: 
	
	$$\fifi; \Gamma_0 \PAR \Delta_0, a : \ChT{\seq{T}}{W} \p M : (\psub{P}{\seq{v}}{\seq{e}} \PAR Q) \C K_0 \Ilub K_0' $$
	
	This concludes this case. 
		
	\item \textbf{Case} $\pifn{\suc(e)}{P}{x}{Q} \pred \psub{Q}{x}{e}$. The case for an expression equals to $0$ is similar, so we only present this one. Suppose given a derivation ${\pifn{\suc(e)}{P}{x}{Q}} \C K$. Then the proof has the shape:
	
	\begin{prooftree}
		\scriptsize 
		\AXC{$\pi_{e}$}
		\UIC{$\fifi;\Delta \p e \COL \Nat[I',J']$}
		\UIC{$\fifi;\Delta \p {\suc(e)} \COL \Nat[I'+1,J'+1]$}
		\AXC{$\fifi \p \Gamma \subtype \Delta; \Nat[I'+1,J'+1] \subtype \Nat[I,J]$}
		\BIC{$\fifi;\Gamma \p {\suc(e)} \COL \Nat[I,J]$}
		\AXC{$\pi_P$}
		\AXC{$\pi_Q$}
		\TIC{$\pifn{\suc(e)}{P}{x}{Q} \C K$}
	\end{prooftree}
	
	Where $\pi_Q$ is a proof of $\phi;(\Phi,J \ge 1);\Gamma, x : \Nat[I \minus 1][J \minus 1] \p {Q} \C K$, and $\pi_P$ is a typing derivation for $P$ that does not interest us in this case.
	
	By definition of subtyping, we have:
	$$ \fifi \q I \le I' + 1 \qquad \fifi \q J' + 1 \le J $$ 
	From this, we deduce the following constraints:
	$$ \fifi \q J \ge 1 \qquad \fifi \q I \minus 1 \le I' \qquad \fifi \q J' \le J \minus 1 $$
	Thus, with the subtyping rule and the proof $\pi_e$ we obtain:
	$$ \fifi;\Delta \p e \COL \Nat[I \minus 1,J \minus 1] $$
	Then, by Lemma~\ref{l:strengthening}, from $\pi_Q$ we obtain a proof of $\fifi;\Gamma, x : \Nat[I \minus 1][J \minus 1] \p {Q} \C K$. By
	the substitution lemma (Lemma~\ref{l:subsitution}), we obtain $\fifi;\Gamma \p \psub{Q}{x}{e} \C K$. This concludes this case. 
		
	\item \textbf{Case} $n : P \pred n : Q$ with $P \pred Q$. Suppose that $\fifi; \Up{[n,n]} \Gamma \p n : P \C K + [n,n]$. Then, the proof has the shape: 
	\begin{prooftree}
		\AXC{$\fifi; \Gamma \p P \C K$} 
		\UIC{$\fifi; \Up{[n,n]} \Gamma \p n : P \C K + [n,n]$} 
	\end{prooftree}  
	
	By Lemma~\ref{l:delayinginvariance}, if $\Up{[n,n]} \Gamma$ is reliable then $\Gamma$ is reliable. By induction hypothesis, we have a proof $\fifi; \Gamma' \p Q \C K$ with $\fifi \p \Gamma \ured^* \Gamma'$.
	
	We give the proof:
	\begin{prooftree}
		\AXC{$\fifi; \Gamma' \p Q \C K$} 
		\UIC{$\fifi; \Up{[n,n]} \Gamma' \p n : Q \C K + [n,n]$} 
	\end{prooftree} 
	
	And we have indeed $\fifi \p \Up{[n,n]} \Gamma \ured^* \Up{[n,n]} \Gamma'$ by Lemma~\ref{l:delayinginvariance}. 
	
	\item \textbf{Case} $P \pred Q$ with $P \congr P'$, $P' \pred Q'$ and $Q \congr Q'$. Suppose that $\fifi;\Gamma \p P \C K$. By Lemma~\ref{l:parallelcongrtype}, we have $\fifi;\Gamma \p P' \C K$. By induction hypothesis, we obtain $\fifi;\Gamma' \p Q' \C K$ with $\fifi \p \Gamma \ured^* \Gamma'$. Then, again by Lemma~\ref{l:parallelcongrtype}, we have $\fifi;\Gamma' \p Q \C K$. This concludes this case. 
\end{itemize}

This concludes the proof of Theorem~\ref{t:parallelsubjectreduction}.

\end{document}